\documentclass[11pt]{article}
\usepackage{amsmath}
\usepackage{amsfonts}
\usepackage{array}
\usepackage{graphicx}
\usepackage{xcolor}
\usepackage[backref=true]{hyperref}
\usepackage{booktabs}
\usepackage{caption}
\usepackage{amsthm}
\usepackage{accents}
\usepackage{bm}

\newtheorem{remark}{Remark}%
\usepackage{algorithm}
\usepackage[
    a4paper,
    left=1.5cm,
    right=1.5cm,
    top=1.5cm,
    bottom=1.5cm
]{geometry}
\usepackage{setspace}
\usepackage{algpseudocode}
\usepackage{subcaption}
\usepackage{natbib}
\usepackage{authblk}
\numberwithin{equation}{section}
\newtheorem{theorem}{Theorem}
\newtheorem{lemma}{Lemma}
\newtheorem{corollary}{Corollary}
\usepackage{hyperref}
 \hypersetup{
    colorlinks=true,
    linkcolor=blue,
    filecolor=blue,      
    urlcolor=blue,
    citecolor=blue
}
\newcounter{assump} 
\renewcommand{\theassump}{AN\arabic{assump}}

\newenvironment{assumption}[1][]{
  \refstepcounter{assump}%
  \par\noindent
  {\theassump.}
}{\par\normalfont}
\newcounter{condition}
\renewcommand{\thecondition}{C\arabic{condition}}

 \usepackage[T1]{fontenc}

\begin{document}
\title{\textbf{Extreme Population Selection under Multistage Sampling design With Applications}}
\date{}
\author{Shivam\textsuperscript{a} , Bhargab Chattopadhyay\textsuperscript{b},
and
  Nil Kamal Hazra\textsuperscript{a}\\
\textsuperscript{a}Department of Mathematics, \\Indian Institute of Technology Jodhpur, Jodhpur, Rajasthan, India - 342030; \\
\textsuperscript{b} School of Management and Entrepreneurship,\\ Indian Institute of Technology Jodhpur, Jodhpur, Rajasthan, India-342030}
\maketitle
\doublespacing
\noindent \textbf{Abstract:} We study the problem of selecting the extreme (best or worst) population from among $K(\geq 2)$  populations, under the assumption that the extreme population is sufficiently separated from the nearest population. The selection is based on an appropriate measure, which may vary across different application domains. Since the actual value of the measure is unknown, we obtain its estimator using the generalized method of moments under a multistage sampling design. Using this estimator, we propose two sequential procedures, namely online algorithm and multi armed bandit based algorithm. Under suitable regularity conditions and without imposing parametric assumptions on the underlying distributions, both algorithms correctly identify the extreme population with a desired level of confidence. We illustrate the proposed algorithms through applications in econometrics and genetics. In the econometric application, the extreme population is selected using the Gini index as a measure of inequality and the performance of the proposed procedures is assessed through extensive Monte Carlo simulation studies conducted under various distributional settings. In the genetics application, the worst population is identified using a measure derived from the tumor mutation burden (TMB) score and the practical applicability of the proposed algorithms is demonstrated using the Memorial Sloan Kettering-IMPACT 50000 clinical sequencing cohort. Further, we use the proposed framework to identify an anomalous population, provided such a population exists.

\noindent\textbf{Keywords}: Probability of correct selection, extreme population, ranking and selection method, multi-arm bandit algorithm, TMB score.

\section{Introduction}\label{3.S1}
Identifying or selecting the extreme (best or worst) population among several populations is an important problem in decision-making under uncertainty. As an example, consider a clinician comparing several drugs to identify the one that provides the greatest benefit to patients suffering from a disease. If this decision is based on data from only a small number of patients, the risk of incorrectly concluding that an ineffective drug is beneficial becomes much higher. Small samples also introduce substantial variability, making the results less likely to replicate and potentially leading to clinical decisions that compromise patient safety. For more details on the effects of small sample sizes, we refer to \cite{rajput2023evaluation} and \cite{faber2014sample}. On the other hand, collecting too many observations increases both time and resource requirements. Therefore, the objective is to identify the best (or worst) population with a pre-specified probability of correct selection while using as few observations as possible. Such selection problems arise in many application domains, a few of which are discussed below.
As an example, in cancer genomics, identifying the population least likely to benefit from a given therapy is important for guiding treatment decisions. Tumor mutational burden (TMB) is a widely used biomarker and a standard predictor of immunotherapy response. \cite{samstein2019tumor} showed that tumors with higher TMB respond better to immune checkpoint inhibitors across multiple cancer types. Comparing TMB scores across genetic ancestry groups can therefore help identify the population least likely to benefit from immunotherapy. \cite{luo2025ancestral} reported ancestry-related differences in treatment efficacy, while \cite{tanwar2023understanding} found that Indian patients had significantly higher TMB scores for several cancer types than the predominantly Caucasian TCGA cohort. Thus, if a particular ancestry group consistently exhibits the lowest TMB, immunotherapy may be a less effective first-line treatment for that group, prompting clinicians to consider alternative therapies. Similarly, from an economic perspective, an important problem is identifying the best region or country using indicators such as per capita income, unemployment rate, poverty level, economic growth, or income inequality. For example, policymakers may be interested in identifying the region with the lowest level of income inequality. \cite{horrace2008ranking} proposed methods for identifying populations with the lowest and highest income inequality using the Gini coefficient and related inequality measures. Similarly, the Organization for Economic Co-operation and Development (OECD)\footnote{\url{https://www.oecd.org/en/publications/society-at-a-glance-2024_918d8db3-en/full-report/income-and-wealth-inequalities_7ac4178f.html\#figure-d1e11015-0ee0a256aa}} regularly compares Gini indices across countries to identify those with the lowest and highest levels of disposable income inequality. Their findings indicate that Nordic and several Central European countries exhibit the lowest inequality levels, whereas Latin American countries, Türkiye, and the United States exhibit comparatively higher inequality. The same problem arises in several other fields. In agriculture, researchers compare crop varieties based on yield to identify the most productive variety \citep[for example,][]{gebeyaw2024participatory, agronomy12040765}. In clinical trials, \cite{https://doi.org/10.1002/pst.70023} proposed a method for selecting the most suitable personalized treatment using multiple ordinal or binary outcomes. Depending on the application, a population may represent a treatment group in a clinical trial, an ancestry group in genetics, a crop variety in agriculture, or any other comparable entity of interest. These examples illustrate that the problem of selecting an extreme population is fundamental across diverse scientific disciplines. 

The selection of an extreme population depends not only on the performance measure or parameter of interest, which varies across applications, but also on how that parameter is estimated because its true value is generally unknown. The choice of sampling design therefore plays an important role in determining the reliability of the estimator. For relatively homogeneous populations, simple random sampling is often adequate. However, many real-world populations are heterogeneous, and sampling designs that account for stratification and clustering generally produce more reliable estimates \citep[for example,][]{BD2005, BD2007}. Regardless of the sampling design, a fixed sample size may still result in an incorrect selection if too few observations are collected, whereas collecting too many observations increases data collection costs and delays decision-making. To balance selection accuracy with sampling cost, sequential sampling procedures can be employed, where observations are collected in stages and sampling continues until a pre-specified stopping criterion is satisfied \citep[for example,][]{chattopadhyay2022minimum, shivam2026asymptotic, XING2026105640}.

Under simple random sampling, several sequential algorithms have been developed for this problem. These methods broadly fall into two categories: ranking-and-selection (R\&S) procedures and pure-exploration multi-armed bandit (MAB) algorithms. Both can be used to identify the extreme population while achieving a pre-specified probability of correct selection. For further details, see \cite{chen2017nearly}, \cite{jun2016top}, \cite{karnin2013almost}, and \cite{even2006action}, among others. Among the many application areas, clinical trials have been one of the primary motivations for the development of sequential R\&S procedures. In this setting, several new treatments are compared with an existing standard treatment to determine whether any new treatment provides superior performance and should be adopted in practice. This comparison-with-a-standard problem has been studied extensively in the ranking-and-selection literature. For example, \cite{nelson2001comparisons} and \cite{rollin2005two} proposed two-stage procedures under the assumption of normality. \cite{kim2005comparison} subsequently developed a fully sequential procedure, and \cite{kim2006asymptotic} extended this work by removing the normality assumption. Specifically, they proposed two fully sequential procedures for identifying the extreme population. Additional developments in this area can be found in \cite{chick2012sequential}, \cite{xie2013sequential}, \cite{Dudewicz01021980}, and the references therein.

\subsection{Main contributions of this Article}
The existing literature assumes that the observations within each population are independent and identically distributed. Moreover, most studies consider normally distributed populations, employ the population mean as the performance criterion, and utilize simple random sampling for parameter estimation. However,  real-world populations are heterogeneous, and simple random sampling may fail to provide reliable estimates of the population characteristics required for best selection. Such estimation inaccuracies may lead to erroneous identification of the true extreme population.  Moreover, these algorithms are primarily  designed for scenarios in which the selection criterion is based on the population mean, thereby limiting their applicability to broader practical scenarios.

Motivated by the aforementioned limitations, this study makes several methodological contributions. In particular, we introduce a generalized measure and propose its estimation using the generalized method of moments (GMM) approach. The proposed generalized measure provides a unified framework that encompasses a wide range of performance measures utilized across various domains for extreme population selection, without being restricted to the population mean. In contrast to the simple random sampling design commonly assumed in the existing literature on extreme population identification, we estimate the proposed measure under a multistage sampling design that explicitly accounts for both stratification and clustering. This makes the estimator suitable for the heterogeneous populations encountered in practice. Here, we propose two sequential selection algorithms, namely a multi-stage (or online) algorithm and a multi-armed bandit (MAB)-based algorithm, for identifying the extreme population with a pre-specified confidence level. Unlike existing methods that generally rely on normality assumptions, the proposed algorithms offer a distribution-free framework. Under mild regularity conditions, we establish the theoretical properties of the proposed algorithms.
Finally, we demonstrate the practical utility of the proposed framework through an extensive simulation study and a real-data analysis. The simulation study considers the Gini index as a measure for the best selection, while the real-data analysis is conducted based on tumor mutational burden (TMB) scores
from the Memorial Sloan Kettering–IMPACT 50,000 clinical sequencing cohort, comprising more than 54,000 tumors. 
Additionally, we discuss an extension of the proposed framework for detecting anomalous populations when present.

The rest of the paper is organized as follows. Section \ref{3.S2.0} presents a multistage sampling design considered in this study. We then state the problem formulation and define a generalized measure, followed by the construction of an estimator for this measure under the multistage sampling framework. In Section \ref{3.S3}, we develop two sequential algorithms for detecting the extreme population. The theoretical results and simulation studies of the proposed methods are presented in Section \ref{3.S4}. Section \ref{3.S5} demonstrates the performance of the proposed algorithms using the MSK-IMPACT 50000 clinical sequencing cohort data. In Section \ref{3.S6}, we use the proposed methodology to identify anomalous populations, when they exist. Finally, Section \ref{3.S7} provides concluding remarks. To enhance readability of the paper, all proofs of the main results and associated supporting lemmas are provided in the Appendix.


\section{Multistage Sampling Design, Problem Formulation and Estimator of a Generalized Measure}\label{3.S2.0}
In this section, we first formulate the selection problem by specifying which population is considered the extreme among the given K independent populations. We then describe the multistage sampling design considered in the paper. We then develop an estimator for the generalized measure employed in the best population selection procedure. Throughout the paper, the $K$ populations are indexed by $i=1,2,\ldots,K$.
\subsection{Problem statement}\label{3.S2.1}
Suppose there are $K$ independent populations, indexed by $i=1,2,\ldots,K$. Let $\bm{\theta}_i=(\theta_{i1},\theta_{i2},\dots,\theta_{il})$ is the $l$ dimensional parameter vector of interest and $g(\bm{\theta}):\mathbb{R}^l \rightarrow \mathbb{R}$ is a continuous function having first-order partial derivatives. Let $\underaccent{\tilde}{g}=\left(g(\bm{\theta}_1),g(\bm{\theta}_2),\ldots,g(\bm{\theta}_K)\right)$ and define\\
$
\Omega=\left\{\underaccent{\tilde}{g}=\left(g(\bm{\theta}_1),g(\bm{\theta}_2),\ldots,g(\bm{\theta}_K)\right): g(\bm{\theta}_i)\in\mathbb{R},\ i=1,2,\ldots,K \right\}
$
as the parameter space corresponding to $\underaccent{\tilde}{g}$. We have to identify the extreme population among the given populations using a generalized measure represented by $g\left(\bm{\theta}_i\right)$. Now, we define which population is considered as extreme population. An extreme population may correspond to either the best or the worst population. If the best population is defined as the one having the minimum value of the measure, then the worst population corresponds to the one having the maximum value. By considering the negative of the measure, the roles of the best and worst populations are reversed. A similar argument applies when the worst population is defined as the one having the minimum value of the measure. Therefore, without loss of generality, we define the extreme population as the population having the minimum value of $g\left(\bm{\theta}_i\right)$. Further, it may not be practically meaningful to distinguish between two populations whose values of $g\left(\bm{\theta}_i\right)$ are arbitrarily close. Thus, given $\delta^*>0$, our objective is to identify the extreme population if the extreme population is separated from the second-best (or second-worst) population by at least $\delta^*$. Hence,
\begin{eqnarray}\label{3.best}
j^{th}\text{ population is extreme if }g\left(\bm{\theta}_j\right)=\min\limits_{i=1,2,\ldots,K}g\left(\bm{\theta}_i\right),
\end{eqnarray}
whenever the true parameter configuration belongs to the preference zone $\Omega(\delta^*)$, where
\begin{eqnarray}\label{3.preferencezone}
\Omega(\delta^*)=\left\{(g(\bm\theta_1),g(\bm\theta_2),\ldots,g(\bm\theta_K))\; ; \;g\left(\bm\theta_{[2]}\right)-g\left(\bm\theta_{[1]}\right)\geq \delta^*\right\}.
\end{eqnarray}
The quantity $g\left(\bm\theta_{[1]}\right)$ represents the value of measure corresponding to extreme population, while $g\left(\bm\theta_{[2]}\right)$ represents the value of measure corresponding to second-best (or second-worst) population. In general, the quantity $g\left(\bm\theta_{[i]}\right)$ represents the value of measure corresponding to $i^{th}$ best (or $i^{th}$ worst) population.
The complement of $\Omega(\delta^*)$, denoted by $\Omega^C(\delta^*) \;(=\Omega\setminus \Omega(\delta^*))$, is referred to as the indifference zone. For more details related to preference and indifference zone, see \cite{bechhofer1954single}.
\subsection{Multistage sampling design}\label{3.S2}
Consider the $i^{th}$ population, $i=1,2,\dots,K$, consisting of $S_i$ strata indexed by $s_i=1,2,\dots,S_i$. The $s_i^{th}$ stratum contains $H_{is_i}$ primary sampling units (PSUs), denoted by $c_{is_i}=1,2,\dots,H_{is_i}$. Thus the total number of PSUs in population $i$ is $H_i=\sum_{s_i=1}^{S_i}H_{is_i}$. Further, each PSU is divided into $L$ sub-strata indexed by $b_{ic_{is_i}}=1,2,\ldots,L$. The $b_{ic_{is_i}}^{th}$ sub-stratum contains $M_{s_ic_{is_i}b_{ic_{is_i}}}$ secondary sampling units (SSUs), which are the ultimate sampling units. Hence, the total number of SSUs in the $c_{is_i}^{th}$ PSU is $M_{s_ic_{is_i}}=\sum_{b_{ic_{is_i}}=1}^{L}M_{s_ic_{is_i}b_{ic_{is_i}}}$.


For the purpose of estimation, $n_{is_i}$ PSUs are drawn from the $s_i^{\text{th}}$ stratum according to a probability proportional to size sampling scheme with replacement, taking the number of SSUs as the size measure. Thereafter, $k$ SSUs are independently selected from each sub-stratum of every sampled PSU using simple random sampling. The total number of sampled PSUs is
\[n_i = \sum_{s_i=1}^{S_i} n_{is_i}, \quad \text{with} \quad \frac{n_{is_i}}{n_i} = \frac{H_{is_i}}{H_i} = a_{is_i}.
\]

Let $x_{s_ic_{is_i}b_{ic_{is_i}}h}$ denote the observed value of the study variable for the $h^{\text{th}}$ SSU belonging to the $b_{ic_{is_i}}^{\text{th}}$ sub-stratum of the $c_{is_i}^{\text{th}}$ PSU in stratum $s_i$. In multistage sampling designs, sampling weights are assigned to ensure that the collected sample appropriately represents the target population by taking unequal selection probabilities into account. Accordingly, each sampled SSU is assigned a weight $W_{s_ic_{is_i}b_{ic_{is_i}}h}$, defined as the reciprocal of its inclusion probability. The inclusion probability of a sampled SSU is obtained as the product of the probability of selecting PSU $c_{is_i}$ from stratum $s_i$, denoted by $p_{c_{is_i}}$ and the probability of selecting the $h^{\text{th}}$ SSU from sub-stratum $b_{ic_{is_i}}$ within the selected PSU, denoted by $p_{b_{ic_{is_i}}h}$. Accordingly, under probability proportional to size sampling with replacement at the PSU stage and simple random sampling within sub-strata at the SSU stage, the normalized sampling weight is defined as:
\[w_{s_ic_{is_i}b_{ic_{is_i}}h}=\frac{W_{s_ic_{is_i}b_{ic_{is_i}}h}}{\sum_{s_i=1}^{S_i}\sum_{c_{is_i}=1}^{n_{is_i}}\sum_{b_{ic_{is_i}}=1}^{L}\sum_{h=1}^{k}W_{s_ic_{is_i}b_{ic_{is_i}}h}},\text{ and }
W_{s_ic_{is_i}b_{ic_{is_i}}h}
=\frac{\left(\sum_{c_{is_i}=1}^{H_{is_i}} M_{s_ic_{is_i}}\right)
M_{s_ic_{is_i}b_{ic_{is_i}}}}
{n_{is_i}M_{s_ic_{is_i}}k}.
\]
\subsection{Generalized Method of Moments-Based Estimation of the Measure of Interest and Its Asymptotic Properties}\label{3.S2.2}
For $i=1,2,\dots,K$, let $X_i$ denote the characteristic of an individual of $i^{th}$ population. The probability density function of characteristic $X_i$ in the $s_i^{th}$ stratum is denoted by $F(x_i|s_i)$. We consider the problem of estimating an $l$-dimensional parameter vector $\bm{\theta}_i=(\theta_{i1},\theta_{i2},\dots,\theta_{il})$, which satisfies the following population moment condition:\[\sum_{s_i=1}^{S_i}\int H_{is_i}\,\textbf{m}_i(x_i,\bm{\theta}_i)\,dF(x_i|s_i)=\bm{0},\]
where $ \textbf{m}_i(x_i,\bm{\theta}_i)=\left(m_i^1(x_i,\bm{\theta}_i),m_i^2(x_i,\bm{\theta}_i),\ldots,m_i^l(x_i,\bm{\theta}_i)\right)$
is an $l$-dimensional vector of moment functions. The corresponding sample version of the above population moment condition is given by
\[\sum_{s_i=1}^{S_i}\sum_{c_{is_i}=1}^{n_{is_i}}\sum_{b_{ic_{is_i}}=1}^{L}\sum_{h=1}^{k}w_{s_ic_{is_i}b_{ic_{is_i}}h}\textbf{m}_i(x_{s_ic_{is_i}b_{ic_{is_i}}h},\bm{\theta})=\bm{0}.\]
For convenience, the above sample moment condition can be rewritten in terms of PSU-level contributions as
$\frac{1}{n_i}\sum_{j=1}^{n_i}\widetilde{\textbf{m}}_{ij}(\bm{\theta})=\bm{0},$
where
\[\widetilde{\textbf{m}}_{ij}(\bm{\theta})=\sum_{s_i=1}^{S_i}\frac{1}{a_{is_i}}\textbf{1}(s_r=s_i)\frac{\left(\sum_{c_{js_r}=1}^{H_{js_r}}M_{s_r{c_{js_r}}}\right)}{M_{s_rj}k}\sum_{b_{jc_{js_r}}=1}^{L}M_{s_r{j}b_{jc_{js_r}}}\sum_{h=1}^{k}\textbf{m}_i(x_{s_rc_{js_r}b_{jc_{js_r}}h},\bm{\theta}_i).\]
Consider a function $f_{in_i}(\bm{\theta})=\frac{1}{n_i}\sum_{j=1}^{n_i}\widetilde{\textbf{m}}_{ij}(\bm{\theta}).$
The generalized method of moments (GMM) estimator $\widehat{\bm{\theta}}_{in_i}$ for the parameter $\bm{\theta}_i$ is defined as the value of $\bm{\theta}$ that minimizes the objective function $f_{in_i}(\bm{\theta})'A_{n_i}f_{in_i}(\bm{\theta}),$
that is,
$\widehat{\bm{\theta}}_{in_i}=\arg\min\left\{f_{in_i}(\bm{\theta})'A_{n_i}f_{in_i}(\bm{\theta})\right\},$
where $A_{in_i}$ is positive definite matrix with probability $1$ and $A_{in_i}\xrightarrow{a.s.}A_{i0}$ for each $n_i$.

Now, we establish asymptotic properties of generalized method of moment estimator of the parameter $\bm{\theta}_i$. For $i=1,2,\ldots,K$, under the assumptions \ref{3.AN1}-\ref{3.AN16} (in Appendix), using \cite{BD2005} and assuming that populations are independent to each other and $n_{is_i}\rightarrow\infty$ for each $s_i=1,2,..,S_i$ at the same rate, we obtain the following results:
$\widehat{\bm\theta}_{in_i}\xrightarrow{P}\bm{\theta}_i,$
and 
\begin{eqnarray}\label{3.2}
\sqrt{n_i}\left(\widehat{\bm\theta}_{in_i}-\bm{\theta}_i\right)\xrightarrow{d}N\left(0,\Sigma_i\right),
\end{eqnarray}
where $\Sigma_i$ denotes the asymptotic covariance matrix of $\sqrt{n_i}\left(\widehat{\bm{\theta}}_{in_i}\right)$.

Now, we proceed to find the estimator of function $g(\bm\theta_i)$ introduced in Subsection \ref{3.S2.1}.
Since $g(\cdot)$ is continuous and $\bm{\widehat{\theta}}_{in_i}$ is a consistent GMM estimator of $\bm{\theta}_i$, it follows from the continuous mapping theorem that
\begin{eqnarray}\label{3.3}
g\left(\bm{\widehat{\theta}}_{in_i}\right)\xrightarrow{P}g\left(\bm{\theta}_i\right).
\end{eqnarray}
Hence, $g\left(\bm{\widehat{\theta}}_{in_i}\right)$ is a consistent estimator of $g(\bm{\theta}_i)$. For asymptotic distribution, in Lemma \ref{3.L1}, we establish that
\begin{eqnarray}\label{3.4}
\sqrt{n_i}\left(g\left(\bm{\widehat{\theta}}_{in_i}\right)-g\left(\bm{\theta}_i\right)\right)\xrightarrow{d}N\left(0,\xi_i^2\right),
\end{eqnarray}
where $\xi_i^2$ denotes the asymptotic variance of $\sqrt{n_i}\left(g\left(\bm{\widehat{\theta}}_{in_i}\right)\right)$. These asymptotic results are useful for constructing confidence intervals and performing hypothesis testing for the true quantity $g\left(\bm{\theta}_i\right)$. In particular, when the asymptotic variance $\xi_i^2$ is known, an asymptotic $100(1-\alpha)\%$ confidence interval for $g\left(\bm{\theta}_i\right)$ is given by
\[\left(g\left(\bm{\widehat{\theta}}_{in_i}\right)-z_{\alpha/2}\frac{\xi_i}{\sqrt{n_i}},\;g\left(\bm{\widehat{\theta}}_{in_i}\right)+z_{\alpha/2}\frac{\xi_i}{\sqrt{n_i}}\right).\]


\noindent From the asymptotic distribution, it follows that for sufficiently large $n_i$, the variance of $g\left(\bm{\widehat{\theta}}_{in_i}\right)$ is $\xi_i^2/n_i$. As a result, estimators corresponding to different populations have unequal variances. Direct comparison based on these estimators may produce misleading conclusions. For instance, consider two populations satisfying $g\left(\bm{\theta}_1\right)<g\left(\bm{\theta}_2\right),$ with the difference $g\left(\bm{\theta}_2\right)-g\left(\bm{\theta}_1\right)$ being small. Let the population corresponding to $g\left(\bm{\theta}_1\right)$ as extreme population. Since the true values are unknown, the selection of the extreme population is carried out using the estimators $g\left(\bm{\widehat{\theta}}_{1n_1}\right)$ and $g\left(\bm{\widehat{\theta}}_{2n_2}\right)$. Now, if the variance of estimator corresponding to population $1$ is larger as compared to variance of estimator corresponding to population $2$, there is high probability that $g\left(\bm{\widehat{\theta}}_{1n_1}\right)$ will exceed $g\left(\bm{\widehat{\theta}}_{2n_2}\right)$. As a result, an incorrect population may be selected as the extreme population. Therefore, the extreme population should be selected on the basis of estimators having equal variances. To maintain a common variance across populations, the sample PSU sizes $n_1,n_2,\ldots,n_K$ are chosen such that
$\xi_1^2/n_1=\xi_2^2/n_2=\cdots=\xi_K^2/n_K=t,$
where $t$ is a positive constant. Consequently, for $i=1,2,\ldots,K$, we get
\begin{eqnarray}\label{3.samplesize}
n_i=\frac{\xi_i^2}{t}.
\end{eqnarray}
If the asymptotic variances $\xi_1^2,\xi_2^2,\ldots,\xi_K^2$ are known, then the estimators $g\left(\bm{\widehat{\theta}}_{in_i}\right)$ may be directly used for selecting the extreme population, where $n_i$ satisfy the equation \eqref{3.samplesize}. However, in practice, the asymptotic variances $\xi_1^2,\xi_2^2,\ldots,\xi_K^2$ are generally unknown and therefore consistent estimators of these quantities are required. Suppose $V_{in_i}^2$ is a consistent estimator of $\xi_i^2$. Using this consistent estimator, the required sample PSU size is determined on the basis of equation \eqref{3.samplesize} as follows: choose an integer $n_i$ satisfying
\begin{eqnarray}\label{3.t}
n_i \geq \frac{V_{in_i}^2}{t}.
\end{eqnarray}
For such large $n_i$, the estimator of variance of the estimator $g\left(\bm{\widehat{\theta}}_{in_i}\right)$ still differ across populations. Hence, these estimators are not suitable for extreme population selection. Motivated by this limitation, we construct an alternative estimator of $g\left(\bm{\theta}_i\right)$  whose estimated variance remains the same across populations. Suppose $n_0$ is the pilot sample PSU size and $n_i$ is an integer such that $n_i>n_0$ and satisfy equation \eqref{3.t}. We define another estimator of $g\left(\bm{\theta}_i\right)$ as follows: 
\begin{eqnarray}\label{3.gestimator}
\widetilde{g}\left(\widehat{\bm{\theta}}_{in_0},\widehat{\bm{\theta}}_{i(n_{i}-n_0)}\right)=c_{i} g\left(\widehat{\bm{\theta}}_{in_0}\right)+(1-c_{i})g\left(\widehat{\bm{\theta}}_{i\left(n_{i}-n_0\right)}\right),
\end{eqnarray}
where
$c_{i}=\left(n_0/n_{i}\right)\left(1+\sqrt{1-\left(n_{i}/n_0\right)\left(1-\left(\left(n_{i}-n_0\right)t/V_{in_{i}}^2\right)\right)}\right).$
This value of $c_i$ ensures that the estimator of variance of $\widetilde{g}\left(\widehat{\bm{\theta}}_{in_0},\widehat{\bm{\theta}}_{i(n_i-n_0)}\right)$ is constant for all $i$. The justification is provided below.

From Lemma \ref{3.L3}, for sufficiently large $n_0$ and $n_i$, the variance of the estimator $\widetilde{g}\left(\widehat{\bm{\theta}}_{in_0},\widehat{\bm{\theta}}_{i(n_i-n_0)}\right)$ is given by
$\left(c_i^2\xi_i^2/n_0\right)+\left((1-c_i)^2\xi_i^2/(n_i-n_0)\right).$
Since $V_{in_i}^2$ is a consistent estimator of $\xi_i^2$, an estimator of variance is
$\left(c_i^2V_{in_i}^2/n_0\right)+\left((1-c_i)^2V_{in_i}^2/\left(n_i-n_0\right)\right).$
Using the value of $c_i$, we get
$\left(c_i^2V_{in_i}^2/n_0\right)+\left((1-c_i)^2V_{in_i}^2/\left(n_i-n_0\right)\right)=t,$
where $t$ is a positive constant. As a result, estimator of variance of $\widetilde{g}\left(\widehat{\bm{\theta}}_{in_0},\widehat{\bm{\theta}}_{i(n_i-n_0)}\right)$ is constant for all $i$. 

The sample PSU size which satisfy equation \eqref{3.t} guarantee that the estimator of variance of $\widetilde{g}\left(\widehat{\bm{\theta}}_{in_0},\widehat{\bm{\theta}}_{i(n_i-n_0)}\right)$ is constant for all $i$. However, it does not guarantee that direct comparison on the basis of this estimator correctly selects the true extreme population.  Motivated by this, we develop two procedures for extreme population selection in Section \ref{3.S3} based on equation \eqref{3.t} and the estimator $\widetilde{g}\left(\widehat{\bm{\theta}}_{in_0},\widehat{\bm{\theta}}_{i(n_i-n_0)}\right)$.

\subsection{Illustration}\label{3.S2.3}
In this subsection, we show that several commonly used measures arise as special cases of the proposed generalized measure. In particular, we demonstrate this by showing that the Gini index and a TMB score based measure can be obtained as special cases of the proposed framework.
\subsubsection{Illustration using Gini index}\label{3.S2.3.2}
Economic inequality within a country or state mainly arises because income is not distributed equally among individuals. Understanding the extent of such inequality is important for policymakers when designing economic and social welfare policies. A variety of measures have been proposed in literature to quantify inequality, among which the Gini index has emerged as one of the most commonly used measures for evaluating income inequality within a population. Identifying the population with the minimum Gini index is important because it corresponds to the population with the most equitable distribution of income or wealth. Such information can assist policymakers and researchers in evaluating economic equality, assessing the effectiveness of social policies and identifying practices that promote a more balanced distribution of resources. Motivated by this, our objective is to identify the population having minimum value of Gini indices among $K$ independent populations.

For $i=1,2,\ldots,K$, the Gini index can be expressed as
$
G_{X_i} = 1 - 2\int_{0}^{1} \phi_i(p)\, dp,
$
where $\phi_i(\cdot)$ denotes the Lorenz curve defined by
$
\phi_i(p) = \alpha_i(p)/\mu_i,\; 
\alpha_i(p) = \int_{0}^{p} z_i(u)\, du,
$
and
$
z_i(u) = \inf\{x : u \le F_i(x)\}, \; u \in [0,1],
$
represents the $u$-th population quantile, with $F_i(\cdot)$ being the cumulative distribution function of the random variable $X_i$.
We can write $G_{X_i}$ as 
$G_{X_i}=g(\bm{\theta}_i),$ where $\bm{\theta}_i=\left(\theta_{i1},\theta_{i2}\right)$, $\theta_{i1}=\int_{0}^{1}\alpha_i(p)dp,\;\theta_{i2}=\mu_i,$ and $g(\bm{\theta}):\mathbb{R}^2\rightarrow\mathbb{R}$ is continuous first-order partial derivative function defined as follows:
$g\left(\theta_{i1},\theta_{i2}\right)=1-\left(2\theta_{i1}/\theta_{i2}\right),\; \theta_{i2}\neq0.$
This is the same as the function discussed in Subsection \ref{3.S2.1}.

Since the true value of the Gini index is unknown, it must be estimated from sample data. Accurate estimation requires an appropriate sample that adequately represents the characteristics of the population. To obtain a representative sample, a multistage sampling framework is employed for the $i^{th}$ population. At first, the $i^{th}$ population is divided into strata. Strata are formed based on geographical or residential categories, such as rural and urban areas. Within each stratum, clusters are selected, for example, villages in rural regions and blocks in urban regions. Further, substrata may be created on the basis of socioeconomic characteristics, such as affluent and non-affluent households. Each substratum then consists of the sampled household units. This multistage design is the same as discussed in Section \ref{3.S2}. 

Now, we proceed to find the estimator of Gini indices. For $i=1,2,...,K$, the estimator of Gini index $G_{X_i}$ is given by
\begin{eqnarray}
\widehat{G}_{in_i} = g\left(\widehat{\theta}_{i1n_i},\widehat{\theta}_{i2n_i}\right)=1-\frac{2\widehat{\theta}_{i1n_i}}{\widehat{\theta}_{i2n_i}},
\end{eqnarray}
where
$\widehat{\theta}_{i1n_i}=\int_{0}^{1}\widehat{\alpha}_{in_i}(p)dp,\;\; \widehat{\alpha}_{in_i}(p)=\int_{0}^{p}\widehat{z}_{in_i}(u)du,$
and
$\widehat{\theta}_{i2n_i}=\sum_{s_i=1}^{S_i}\sum_{c_{is_i}=1}^{n_{is_i}}\sum_{b_{ic_{is_i}}=1}^{2}\sum_{h=1}^{k}w_{s_ic_{is_i}b_{ic_{is_i}}h}x_{s_ic_{is_i}b_{ic_{is_i}}h}.$
Here, for $p\in[0,1]$, 
$\widehat{z}_{in_i}(p)=\inf\limits_{s_i,c_{is_i},b_{ic_{is_i}},h}\{x_{s_ic_{is_i}b_{ic_{is_i}}h}; \widehat{F}_n(x_{s_ic_{is_i}b_{ic_{is_i}}h})\geq p\},$
\[\widehat{F}_{in_i}(x)=\sum_{a_i=1}^{S_i}\sum_{d_i=1}^{n_{ia_i}}\sum_{r_{id_i}=1}^{2}\sum_{c=1}^{k}w_{a_id_ir_{id_i}c}\textbf{1}(x_{a_id_ir_{id_i}c}\leq x),\]
$w_{s_ic_{is_i}b_{ic_{is_i}}h}$ are the same as defined in Section \ref{3.S2} and $x_{s_ic_{is_i}b_{ic_{is_i}}h}$ denote the observed value of the study variable for the $h^{\text{th}}$ household belonging to sub-stratum $b_{ic_{is_i}}$ within cluster $c_{is_i}$ of stratum $s_i$ (e.g., household income or expenditure).
For $i=1,2,...,K$, estimator of Gini index $G_{X_i}$ can be expressed as follows:
\begin{eqnarray}\label{3.Gini}
\widehat{G}_{in_i} = 1 - \frac{2}{\widehat{\mu}_{in_i}} \sum_{s_i=1}^{S_i} \sum_{c_{is_i}=1}^{n_{is_i}}\sum_{b_{ic_{is_i}}=1}^{2} \sum_{h=1}^{k}w_{s_ic_{is_i}b_{ic_{is_i}}h} x_{s_ic_{is_i}b_{ic_{is_i}}h} \left(1 - \widehat{F}_{in_i}(x_{s_ic_{is_i}b_{ic_{is_i}}h})\right).
\end{eqnarray}
Under the assumptions \ref{3.AN1}-\ref{3.AN16} (in Appendix), following \cite{BD2007}, as $n_{is_i} \to \infty$ at the same rate for all strata, we have
$\sqrt{n_i}\left(\widehat{G}_{in_i}-G_{X_i}\right)\xrightarrow{d}N(0,\xi_i^2),$
where $\xi_i^2$ denotes the asymptotic variance of $\sqrt{n_i}\widehat{G}_{in_i}$. 

Since the income distribution of the population is unknown, the value of $\xi_i^2$ is also unknown. To this end, we require a consistent estimator of $\xi_i^2$, $i=1,2,...,K$. Following the method of variance estimation proposed by \cite{Binder}, we employ an approach that is consistent, theoretically rigorous and computationally efficient; specifically, for $i=1,2,...,K$, a consistent estimator of $\xi_i^2$ is given by
\begin{eqnarray}\label{3.VARIANCEGINI}
V_{in_i}^2=\sum_{s_i=1}^{S_i}\frac{n_i*n_{is_i}}{n_{is_i}-1}\sum_{c_{is_i}=1}^{n_{is_i}}\left(u_{is_ic_{is_i}}-\bar{u}_{is_i}\right)^2, 
\end{eqnarray}
where
$\bar{u}_{is_i}=\frac{1}{n_{is_i}}\sum_{c_{is_i}=1}^{n_{is_i}}u_{is_ic_{is_i}},$
and
\begin{eqnarray*}
u_{is_ic_{is_i}}&=&\frac{2}{\widehat{\mu}_{in_i}}\sum_{b_{ic_{is_i}}=1}^{2}\sum_{h=1}^{k}w_{s_ic_{is_i}b_{c_{is_i}}h}\bigg[\left(\widehat{F}(x_{s_ic_{is_i}b_{c_{is_i}}h})-\frac{\widehat{G}_{in_i}+1}{2}\right)x_{s_ic_{is_i}b_{c_{is_i}}h}
\\&&+\sum_{a_i=1}^{S_i}\sum_{d_i=1}^{n_{ia_i}}\sum_{r_{id_i}=1}^{2}\sum_{c=1}^{k}w_{a_id_ir_{id_i}c}x_{a_id_ir_{id_i}c}\textbf{1}(x_{a_id_ir_{id_i}c}\geq x_{s_ic_{is_i}b_{c_{is_i}}h})-\frac{\widehat{\mu}_{in_i}}{2}(\widehat{G}_{in_i}+1)\bigg].
\end{eqnarray*}
From the above discussion, it follows that both the estimator of the Gini index and its corresponding variance estimator satisfy the general framework established in Subsection \ref{3.S2.2}. The another estimator, analogous to the estimator defined in equation \eqref{3.gestimator}, is given by
\begin{eqnarray}\label{3.gGINI}
\widetilde{G}_{in_i}=c_i\widehat{G}_{in_0}+(1-c_i)\widehat{G}_{i(n_i-n_0)},
\end{eqnarray}
where
$c_{i}=\left(n_0/n_{i}\right)\left(1+\sqrt{1-\left(n_{i}/n_0\right)\left(1-\left(\left(n_{i}-n_0\right)t/V_{in_{i}}^2\right)\right)}\right).$
\subsubsection{Illustration using TMB score in genetics}\label{3.S2.3.1}
As discussed in Section \ref{3.S1}, tumor mutation burden (TMB) is a widely used genomic biomarker for predicting response to immune checkpoint inhibitor (ICI) therapy, with higher TMB generally associated with improved survival outcomes \citep{gandara2025tumor, sorich2025tumour}.
TMB score quantifies the number of somatic mutations per megabase in the coding regions of a tumour genome \citep{chalmers2017analysis}. A high TMB is thought to increase the generation of neoantigens, which can enhance immune recognition of tumour cells and thereby improve the likelihood of a favorable response to immunotherapy. Identifying populations with a higher concentration of low-TMB patients is therefore useful for guiding treatment choice, and requires an appropriate summary measure for comparison across populations.

Existing studies have compared TMB across populations using different summary measures. For example, \cite{luo2025ancestral} compared TMB after adjusting for covariates such as age, gender, cancer type, and tumor stage. \cite{chalmers2017analysis} used the logarithm of TMB scores to study the relationship between gene alterations and tumor mutation burden. For the illustration of our procedures in Application section, we use the mean of the log-transformed TMB score as our measure of interest. The logarithmic transformation reduces the right skewness typical of TMB distributions, and a smaller mean of log (TMB) indicates a population with a greater concentration of low-TMB individuals. This makes it an interpretable measure for identifying the population of interest. Since TMB is also known to vary by age, gender, and cancer type, each population is stratified accordingly.

Suppose there are $K$ independent populations, indexed by $i\in\{1,2,\ldots,K\}$. Since TMB values can differ across age groups, genders and cancer types, each population is stratified based on one of these characteristics. Suppose $i^{th}$ population is divided in $S_i$ strata, indexed by $s_i=1,2,\ldots,S_i$.  Here, we do not further subdivide the population into additional stages; rather, individuals are sampled directly from each stratum. This multistage design is a special case of design discussed in Subsection \ref{3.S2} by setting $L=1$ and $M_{s_ic_{is_i}}=1$. Total number PSUs becomes the total number of individuals in the dataset and PSUs becomes our ultimate sampling units. In each stratum, we have $H_{is_i}$ individuals, indexed by $c_{is_i}=1,2,\ldots,H_{is_i}$. In practice, the total number of individuals $H_{is_i}$ can be obtained from national and international cancer surveillance databases, including the International Agency for Research on Cancer (IARC)\footnote{\url{https://gco.iarc.who.int/en}}, the National Cancer Registry Programme (NCRP)\footnote{\url{https://ncdirindia.org/All_Reports/Report_2020/default.aspx}} and the SEER Program\footnote{\url{https://seer.cancer.gov/}}, among others. Let $x_{s_ic_{is_i}}$ denote the TMB value of $c_{is_i}^{th}$ individual from $s_i^{th}$ stratum. For $i=1,2,\ldots,K$, the average of log-transformed TMB scores is given by
\begin{eqnarray}\label{3.TMB}
\mu_{TMB_i}=\frac{1}{H_i}\sum_{s_i=1}^{S_i}\sum_{c_{is_i=1}}^{H_{is_i}}log\left(x_{s_ic_{is_i}}+1\right),
\end{eqnarray}
where $H_i=\sum_{s_i=1}^{S_i}H_{is_i}$. We can write $\mu_{TMB_i}$ as 
$\mu_{TMB_i}=g(\theta_i),$
where $\theta_i=\frac{1}{H_i}\sum_{s_i=1}^{S_i}\sum_{c_{is_i=1}}^{H_{is_i}}log\left(x_{s_ic_{is_i}}+1\right)$ and $g(\bm{\theta}_i):\mathbb{R}\rightarrow\mathbb{R}$ is continuous first-order partial derivative function defined as follows: $g(\theta_i)=\theta_i.$
This matches with the function discussed in Subsection \ref{3.S2.1}.

Now, we proceed to find the estimator of mean of log-transformed TMB scores. For estimation of $\mu_{TMB_i}$, we sample $n_{is_i}$ individuals from $s_i^{th}$ stratum. The normalized sampling weight is given by 
$w_{s_ic_{is_i}}=W_{s_ic_{is_i}}/\sum_{s_i=1}^{S_i}\sum_{c_{is_i}=1}^{n_{is_i}}W_{s_ic_{is_i}},$
$W_{s_ic_{is_i}}=H_{is_i}/n_{is_i}.$
Using the normalized weight, the GMM estimator of $\mu_{TMB_i}$ is given by
\[\widehat{\mu}_{TMB_{in_i}}=\sum_{s_i=1}^{S_i}\sum_{c_{is_i=1}}^{n_{is_i}}w_{s_ic_{is_i}}log\left(x_{s_ic_{is_i}}+1\right).\]
Consequently, from equation \eqref{3.2}, for $i=1,2,..,K$, we have
$\sqrt{n_i}\left(\widehat{\mu}_{TMB_{in_i}}-\mu_{TMB_i}\right)\xrightarrow{d}N\left(0,\xi_i^2\right),$
where $\xi_i^2$ denotes the asymptotic variance of $\sqrt{n_i}\left(\widehat{\mu}_{TMB_{in_i}}\right)$. The value of $\xi_i^2$ is usually unknown. To this end, we require a consistent estimator of $\xi_i^2$, $i=1,2,...,K$. Along the lines of \cite{Binder}, we employ an approach that is consistent, theoretically rigorous and computationally efficient; specifically, for $i=1,2,...,K$, a consistent estimator of $\xi_i^2$ is given by
\begin{eqnarray}\label{3.VARIANCETMB}
V_{in_i}^2=\sum_{s_i=1}^{S_i}\frac{n_i*n_{is_i}}{n_{is_i}-1}\sum_{c_{is_i}=1}^{n_{is_i}}\left(u_{is_ic_{is_i}}-\bar{u}_{is_i}\right)^2, 
\end{eqnarray}
where $u_{is_ic_{is_i}}=w_{s_ic_{is_i}}log\left(x_{s_ic_{is_i}}+1\right),$ and $\bar{u}_{is_i}=\frac{1}{n_{is_i}}\sum_{c_{is_i}=1}^{n_{is_i}}u_{is_ic_{is_i}}.$

From the above discussion, it follows that both the estimator of $\mu_{TMB_i}$ and its corresponding variance estimator satisfy the general framework established in Subsection \ref{3.S2.2}. The another estimator, analogous to the estimator defined in equation \eqref{3.gestimator}, is given by
\begin{eqnarray}\label{3.gTMB}
\widetilde{\mu}_{TMB_{in_i}}=c_i\widehat{\mu}_{TMB_{in_0}}+(1-c_i)\widehat{\mu}_{TMB_{i(n_i-n_0)}},
\end{eqnarray}
where
$c_{i}=\left(n_0/n_{i}\right)\left(1+\sqrt{1-\left(n_{i}/n_0\right)\left(1-\left(\left(n_{i}-n_0\right)t/V_{in_{i}}^2\right)\right)}\right).$
Since both the estimator $\widehat{\mu}_{TMB_{in_i}}$ and its variance estimator satisfy the general framework of Subsection~\ref{3.S2.1}, the proposed online and MAB-based algorithms of Section \ref{3.S3} can be applied directly to identify the population with the minimum mean $\log(TMB)$ score. We illustrate this application using a real clinical sequencing cohort in Section~\ref{3.S5}.

The generalized measure introduced in this section provides a unified framework for characterizing a wide class of population measures. The examples discussed above illustrate its ability to encompass measures from different fields. This generality enables the development of selection procedures that are applicable beyond any specific measure. Therefore, in the next section, we develop two algorithms for selecting the extreme population based on the generalized measure.
\section{Algorithms for selecting the extreme population}\label{3.S3}
Since the true values of $g\left(\bm{\theta}_i\right)$ are unknown, we cannot identify the true extreme population. Therefore, the extreme population must be identified using estimated values based on sampled data. In this section, we develop two sequential algorithms using the estimator defined in \eqref{3.gestimator} for identifying the extreme population. Both algorithms are constructed on the basis of positive constant $t$. The first is an online algorithm in which the value of $t$ is fixed in advance and PSUs are collected sequentially according to a prescribed criterion. The second algorithm is formulated within the Multi-Armed Bandit (MAB) framework, where the value of $t$ is updated at each stages and based on the updated value, PSUs are sequentially sampled from the populations during each stages. Both procedures aim to identify the extreme population among the given populations.
\subsection{Online algorithm}\label{3.S3.1}
For each $i=1,2,\ldots,K$, the online algorithm starts by selecting $n_{0s_i}$ PSUs from every stratum $s_i$ of the $i^{th}$ population. Thus, the total number of PSUs selected from the $i^{th}$ population during the pilot stage is $n_{0}=\sum_{s_i=1}^{S_i}n_{0s_i}$. From each sub-stratum of the selected PSUs, $k$ SSUs are randomly sampled. Using the observations collected during the pilot stage, an estimator $V_{in_{0}}^2$ of the asymptotic variance $\xi_i^2$ is computed for the $i^{th}$ population. Sampling then continues sequentially according to a stopping rule. Specifically, for a given $l>0$, let $N_i(\leq H_i)$ denote the smallest integer $n_i(\geq n_{0})$, for which
\begin{eqnarray}\label{3.20}
n_{i}\geq\frac{1}{t}\left(V_{in_{i}}^2+\frac{1}{n_{i}^l}\right)=\widehat{C}_{i} \text{ and } n_{is_i}\geq\widehat{C}_{is_i}=\widehat{C}_{i}a_{is_i}, \text{ for all } s_i=1,2,..,S_i,
\end{eqnarray}
where $t=\left(\frac{\delta^*}{h}\right)^2$ and $h$ satisfies the following equation:
$\int_{-\infty}^{\infty}\left(1-\Phi\left(y-h\right)\right)^{K-1}\phi(y)dy=1-\delta$
and  $0<\delta<1$. The term $1/n_i^l$ is added to prevent the procedure from stopping too early. In the absence of this term, the stopping criterion in \eqref{3.20} may be satisfied prematurely when only a few observations have been collected because $V_{in_i}^2$ can take small values during the early stages of sampling. In addition, $V_{in_i}^2 + 1/n_i^l$ remains a consistent estimator of $\xi_i^2$.

If the stopping criterion is satisfied, the procedure terminates and the current pilot sample size is taken as the final sample size. Otherwise, an additional $m'(\geq1)$ PSUs are selected from each stratum satisfying $n_{is_i}<\widehat{C}_{is_i}$ and from each newly selected PSU, $k$ SSUs are randomly sampled. The asymptotic variance $\xi_i^2$ is then re-estimated using the updated data and the stopping rule is checked again. This process is repeated until the stopping rule is satisfied. The resulting final PSU size from stratum $s_i$ is given by $N_{is_i}=N_ia_{is_i}$. Using the final PSU size $N_i$, the estimator of $g\left(\bm{\theta}_i\right)$ is taken as $\widetilde{g}\left(\widehat{\bm{\theta}}_{in_0},\widehat{\bm{\theta}}_{i(N_i-n_0)}\right)$. Based on these estimators, the extreme population selection rule is defined as follows: $j^{th}$ population is declared as the extreme population if 
$
\widetilde{g}\left(\widehat{\bm{\theta}}_{jn_0},\widehat{\bm{\theta}}_{j(N_j-n_0)}\right)=\min\limits_{i=1,2,\ldots,K}\widetilde{g}\left(\widehat{\bm{\theta}}_{in_0},\widehat{\bm{\theta}}_{i(N_i-n_0)}\right).
$
\subsection{MAB algorithm}\label{3.S3.2}
A Multi-Armed Bandit (MAB) framework is a sequential decision-making approach in which several competing populations, called arms, are sampled adaptively over different stages. At each stage, the information collected so far is used to decide which populations should receive additional samples and which populations can be eliminated from further consideration. Thus, the MAB framework provides an efficient way to allocate samples sequentially and reduce unnecessary sampling from inferior populations. On the basis this framework, we develop the following algorithm.

For the populations considered in this algorithm, no two true values of $g(\bm{\theta}_i)$ are equal. Thus, for $i\neq j$, $g(\bm{\theta}_i)\neq g(\bm{\theta}_j)$. For each $i=1,2,\ldots,K$, the procedure begins with the selection of $n_{0s_i}$ PSUs from every stratum $s_i$ of the $i^{th}$ population. Thus, the total number of PSUs sampled from $i^{th}$ population at pilot stage is $n_{0}=\sum_{s_i=1}^{S_i}n_{0s_i}$. From each sub-stratum within the selected PSU, we randomly select $k$ SSUs. Using the pilot sample data, we compute $V_{in_0}^2$.
Sampling then proceeds in following stages, indexed by $r$. At each stage, the sampling procedure is governed by a stopping rule, which vary from one stage to another.\\ 
\textit{Stage $r=1:$ } In this stage, the sampling continues according to a stopping rule. Specifically, for a given $l>0$ and $\alpha>0$, let $N_{i1}(\leq H_i)$ is the smallest integer $n_{i1}(\geq n_{0})$, for which
\begin{eqnarray}\label{3.21}
n_{i1}\geq\frac{1}{t_1}\left(V_{in_{i1}}^2+\frac{1}{n_{i1}^l}\right)=\widehat{C}_{i1} \text{ and } n_{i1s_i}\geq\widehat{C}_{i1s_i}=\widehat{C}_{i1}a_{is_i}, \text{ for all } s_i=1,2,..,S_i,
\end{eqnarray}
where $t_1=\left(\frac{\delta^*}{h_1-2z_{(\alpha/2K)}}\right)^2$ and $h_1$ satisfies the following equation:
$
(K-1)\int_{-\infty}^{\infty}\Phi\left(y-h_1\right)\phi(y)dy=\frac{\delta}{2^{b}}.
$
Here $\delta>0$, and $b$ is chosen in such a way that $h_1>2z_{(\alpha/2K)}$ and $z_{(\alpha/2K)}$ denotes the $100\left(1-(\alpha/2K)\right)$ percentile of the standard normal distribution $N(0,1)$.
 The term $1/n_{i1}^l$ is added to prevent the procedure from stopping too early. In the absence of this term, the stopping criterion in \eqref{3.21} may be satisfied prematurely when only a few observations have been collected because $V_{in_i}^2$ can take small values during the early stages of sampling. In addition, $V_{in_{i1}}^2 + 1/n_{i1}^l$ remains a consistent estimator of $\xi_i^2$.

If the stopping criterion is satisfied, the procedure is terminated and the pilot sample size is taken as the final sample size for this stage. Otherwise, we select $m' (\geq 1)$ additional PSUs from each stratum satisfying $n_{i1s_i}<\widehat{C}_{i1s_i}$. From each of these newly selected PSUs, $k$ SSUs are chosen at random. Based on the updated data, $\xi_i^2$ is re-estimated and  the stopping rule is checked again. This process is repeated in the first stage until the stopping criterion is met. The resulting final number of PSUs in stratum $s_i$ is given by $N_{i1s_i} = N_{i1} a_{is_i}$.
Based on final PSU size $N_{i1}$ for the first stage, the estimator of $g\left(\bm{\theta}_i\right)$ is given by $\widetilde{g}\left(\widehat{\bm{\theta}}_{in_0},\widehat{\bm{\theta}}_{i(N_{i1}-n_0)}\right)$.
Now, eliminate $j^{th}$ population if
$\min\limits_{i=1,2,..,K}UCB_{i1}<LCB_{j1},$
where
$LCB_{i1}=\widetilde{g}\left(\widehat{\bm{\theta}}_{in_0},\widehat{\bm{\theta}}_{i(N_{i1}-n_0)}\right)-z_{(\alpha/2K)}\sqrt{t_1}$
and
$UCB_{i1}=\widetilde{g}\left(\widehat{\bm{\theta}}_{in_0},\widehat{\bm{\theta}}_{i(N_{i1}-n_0)}\right)+z_{(\alpha/2K)}\sqrt{t_1}.$
Let $L_1$ denote the set of indices corresponding to the populations eliminated at the first stage. The procedure then continues to the next stage using the remaining populations. Let $L$ represent the set of indices of the populations retained after elimination.\\
\noindent\textit{Stage $r\geq2:$ } In the $r^{th}$ stage, sampling is continued only for those populations that were not eliminated in the $(r-1)^{th}$ stage. For each $a\in L$, where $L$ denotes the set of remaining populations, sampling proceeds according to a stopping rule. Specifically, let $N_{ar} (\leq H_a)$ is the smallest integer  $n_{ar} \left(\geq N_{a(r-1)}\right)$, for which
\begin{eqnarray}\label{3.19}
n_{ar}\geq\frac{1}{t_r}\left(V_{an_{ar}}^2+\frac{1}{n_{ar}^l}\right)=\widehat{C}_{ar} \text{ and } n_{ars_a}\geq\widehat{C}_{ars_a}=\widehat{C}_{ar}a_{as_a}, \text{ for all } s_a=1,2,..,S_a,
\end{eqnarray}
where $t_r=\left(\frac{\delta^*}{h_r-2z_{(\alpha/2K)}}\right)^2$ and $h_r$ satisfy the following equation
$
(K-1)\int_{-\infty}^{\infty}\Phi\left(y-h_r\right)\phi(y)dy=\frac{\delta}{2^{r+b}}.
$
Here $\delta>0$ and $b$ is the same as chosen in the first stage. 

If the above condition is satisfied, the procedure is terminated and the final sample size at $(r-1)^{th}$ stage is taken as the final sample size for $r^{th}$ stage. Otherwise, we select $m' (\geq 1)$ additional PSUs from each stratum for which $n_{ars_a}<\widehat{C}_{ars_a}$. From each newly selected PSU, $k$ SSUs are chosen randomly. Using the updated sample, $\xi_a^2$ is re-estimated and check our stopping rule. This process is repeated at $r^{th}$ stage until the stopping criterion is met. The resulting final PSU size in stratum $s_a$ is given by $N_{ars_a} = N_{ar}a_{as_a}$.
Based on final PSU size $N_{ar}$ for the $r^{th}$ stage, the estimator of $g\left(\bm{\theta}_a\right)$ is given by $\widetilde{g}\left(\widehat{\bm{\theta}}_{an_0},\widehat{\bm{\theta}}_{a(N_{ar}-n_0)}\right)$.
Now, based on these estimators, elimination criteria is as follows: For $p\in L$, eliminate $p^{th}$ population at $r^{th}$ stage if
$\min\limits_{a\in L} UCB_{ar}<LCB_{pr},$
where $LCB_{ar}=\widetilde{g}\left(\widehat{\bm{\theta}}_{an_0},\widehat{\bm{\theta}}_{a(N_{ar}-n_0)}\right)-z_{(\alpha/2K)}\sqrt{t_r} \text{ and }
UCB_{ar}=\widetilde{g}\left(\widehat{\bm{\theta}}_{an_0},\widehat{\bm{\theta}}_{a(N_{ar}-n_0)}\right)+z_{(\alpha/2K)}\sqrt{t_r}.$
Let $L_r\subset L$ represent the set of indices of populations eliminated at the $r^{th}$ stage. Update the set of remaining populations by replacing $L$ with $L\backslash L_r$ and increase the stage index to $r+1$. The procedure is continued until only one population remains, that is, until $|L|=1$. The population remaining at the end of the procedure is declared as the extreme population by the algorithm.
\section{Theoretical and Simulation Results}\label{3.S4}
\noindent This section presents the theoretical and simulation results corresponding to the algorithms introduced in Section \ref{3.S3}.
\subsection{Theoretical Results}\label{3.S4.1}
The primary objective of these algorithms is to identify the true extreme population on the basis of $g(\bm\theta_i)$ among the given $K$ populations, provided $\underaccent{\tilde}{g} \in \Omega(\delta^*)$. The following theorems establish that the proposed procedures select the true extreme population with high probability. In particular, Theorem \ref{3.TH2} shows that the online algorithm as well as MAB-based algorithm asymptotically identifies the extreme population with high probability. To enhance the readability of the paper, all proofs of the stated theorem are deferred to the Appendix. 
\begin{theorem}\label{3.TH2}
Under the assumptions \ref{3.AN1}-\ref{3.AN15} stated in the Appendix, suppose that for $i\in\{1,2,\ldots,K\}$, the estimator $g\left(\bm{\widehat{\theta}}_{in_i}\right)$ satisfies the uniform continuity in probability condition.
Then the following results holds
\begin{enumerate}
\item [$(i)$] For $t>0$, the online algorithm described in Subsection \ref{3.S3.1} stops with probability one, that is, $P(N_i<\infty)=1$, for all $i=1,2,\ldots,K$.
\item [$(ii)$]
For the online algorithm described in Subsection \ref{3.S3.1}, we have
\[\liminf_{\delta^* \to 0}P(\text{Correct Selection})\geq 1-\delta, \text{ for all }\underaccent{\tilde}{g} \in \Omega(\delta^*),
\]
where $0<\delta<1$.
\item [$(iii)$]
The MAB-based algorithm described in Subsection \ref{3.S3.2} stops with probability at least $1-\alpha$, where $0<\alpha<1$.
\item [$(iv)$]
Suppose that for all $j\in\{2,3,\ldots,K\}$,
\[\sum_{n=n_0+2}^{\infty}\sum_{m=n_0+2}^{\infty}P\left(\widetilde{g}\left(\widehat{\bm{\theta}}_{(j)n_0},\widehat{\bm{\theta}}_{(j)(m-n_0)}\right)<\widetilde{g}\left(\widehat{\bm{\theta}}_{(1)n_0},\widehat{\bm{\theta}}_{(1)(n-n_0)}\right)
\right)<\infty,
\]
where $\widetilde{g}\left(\widehat{\bm{\theta}}_{(i)n_0},\widehat{\bm{\theta}}_{(i)(n-n_0)}\right)$ represent the estimator of measure $g(\bm\theta_{[i]})$.
Then, for MAB algorithm described in Subsection \ref{3.S3.2}, we have
$\liminf_{\delta^* \to 0}P(Correct\; Selection)\geq1-\delta, \text{ for all }\underaccent{\tilde}{g} \in \Omega(\delta^*), $
where $0<\delta<1$.
\end{enumerate}
\end{theorem}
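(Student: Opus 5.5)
For part $(i)$ the plan is a consistency argument. By assumption $V_{in_i}^2\xrightarrow{P}\xi_i^2$. Along the sequential sampling path I would strengthen this to almost sure convergence, using the i.i.d.-across-PSU structure of the $\widetilde{\textbf{m}}_{ij}$ together with the strong-law-type arguments behind \cite{BD2005}. Since $1/n_i^l\to0$, it follows that $\widehat C_i=t^{-1}(V_{in_i}^2+n_i^{-l})\to\xi_i^2/t<\infty$ almost surely. Hence
\[
P(N_i>n)\le P\bigl(n<t^{-1}(V_{in}^2+n^{-l})\bigr)\to 0,
\]
and the same holds for every stratum constraint $n_{is_i}\ge \widehat C_i a_{is_i}$, because $n_{is_i}=a_{is_i}n_i$. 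This gives $P(N_i<\infty)=1$. The same argument also yields $N_i t/\xi_i^2\to1$ almost surely as $\delta^*\to0$, i.e. as $t\to0$. This is the usual Chow--Robbins-type ratio result, obtained by sandwiching $N_i$ between the boundary values at $N_i$ and $N_i-m'$.

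For part $(ii)$, let population $[1]$ be the extreme one. Correct selection is the event that $\widetilde g_{(j)}-\widetilde g_{(1)}>0$ for all $j\ge2$. The key step is to show
\[
\frac{\widetilde g\bigl(\widehat{\bm\theta}_{in_0},\widehat{\bm\theta}_{i(N_i-n_0)}\bigr)-g(\bm\theta_i)}{\sqrt t}\xrightarrow{d}N(0,1)
\]
as $\delta^*\to0$. Lemma \ref{3.L1} gives the delta-method CLT at deterministic sample sizes. The random index $N_i$ is handled by Anscombe's theorem, which is exactly where the uniform continuity in probability assumption enters, together with $N_it/\xi_i^2\to1$ from part $(i)$. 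Lemma \ref{3.L3} then supplies the variance $c_i^2\xi_i^2/n_0+(1-c_i)^2\xi_i^2/(N_i-n_0)$. This equals $t$ once $V^2$ is replaced by $\xi_i^2$, and consistency of $V^2$ together with Slutsky's lemma justifies that replacement.

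With this in hand, write $Z_i$ for the standardized errors, which are asymptotically independent across populations. We get
\[
P(CS)\ge P\bigl(Z_{(j)}-Z_{(1)}>-\delta^*/\sqrt t,\ \forall j\bigr)=P\bigl(Z_{(j)}>Z_{(1)}-h,\ \forall j\bigr),
\]
using $g(\bm\theta_{[j]})-g(\bm\theta_{[1]})\ge\delta^*$. Conditioning on $Z_{(1)}=y$ gives the limit
\[
\int\bigl(1-\Phi(y-h)\bigr)^{K-1}\phi(y)\,dy=1-\delta
\]
by the choice of $h$. The main obstacle is making the random-index CLT rigorous: the weight $c_i$ depends on $V_{iN_i}^2$, and $N_i$ depends on the data, so I would first show $c_i\to0$ in probability, and hence that $c_i g(\widehat{\bm\theta}_{in_0})$ is negligible, before applying Anscombe's theorem.

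For part $(iii)$, apply the argument of $(i)$ stage by stage, so each $N_{ar}$ is finite almost surely. Termination then requires that the elimination rule eventually removes every non-extreme arm. Let $E$ be the event that at every stage $|\widetilde g_{ar}-g(\bm\theta_a)|\le z_{(\alpha/2K)}\sqrt{t_r}$ for all $a$. Using the asymptotic normality above and a Bonferroni bound over the $K$ arms (two-sided, level $\alpha/K$ each), $P(E)\ge1-\alpha$. Because $t_r\to0$ as $r$ grows (since $h_r\to\infty$), on $E$ the intervals shrink below half the smallest gap. Distinct $g$ values then guarantee that every suboptimal arm is eliminated at a finite stage, so the algorithm stops.

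For part $(iv)$, an incorrect selection requires that arm $[1]$ is eliminated at some stage $r$. At stage $r$ this means some $j$ has $\widetilde g_{(j)}+z\sqrt{t_r}<\widetilde g_{(1)}-z\sqrt{t_r}$. Arguing as in $(ii)$, with $h_r-2z_{(\alpha/2K)}$ in place of $h$, the stage-$r$ probability is at most $(K-1)\int\Phi(y-h_r)\phi(y)\,dy=\delta/2^{r+b}$ asymptotically. Summing over $r$ gives at most $\delta/2^{b}\le\delta$. The summability hypothesis is used to justify exchanging the limit $\delta^*\to0$ with the infinite sum over stages and over random sample sizes, via dominated convergence.
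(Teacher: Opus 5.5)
\textbf{Parts (i), (ii), (iv).} These follow the paper's route.

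In (i) you use the same consistency bound on $P(N_i>n)$. Your almost-sure upgrade is not needed for $P(N_i<\infty)=1$, but it is a cleaner way to get $N_it\to\xi_i^2$, which the paper only asserts in \eqref{3.14}.

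Part (ii) is Lemma \ref{3.L5} in Anscombe form: the paper compares $N_i$ with the deterministic $C_i$ and invokes uniform continuity in probability, which is Anscombe's condition. It then uses independence and conditions on the extreme population's standardized error, as you do.

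Part (iv) is Boole's inequality over stages and rivals, the per-stage limit $\delta/2^{r+b}$ of Lemma \ref{3.L8}, and the limit--sum interchange of Lemma \ref{3.L9}.

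\textbf{The gap in (iii).} You let $E$ be simultaneous coverage, $|\widetilde g_{ar}-g(\bm\theta_a)|\le z_{(\alpha/2K)}\sqrt{t_r}$ for all active $a$, at \emph{every} stage. You then claim $P(E)\ge 1-\alpha$ from a Bonferroni bound over the $K$ arms. That bound controls the $K$ intervals at one fixed stage $r$ only. The algorithm reuses the same critical value $z_{(\alpha/2K)}$ at every stage, so no error budget is left for a union over $r$. For each arm, the stage-$r$ miscoverage probability tends to $\alpha/K>0$, so $\sum_r P(E_r^c)=\infty$. Nothing in your argument bounds $P(\bigcap_r E_r)$ from below.

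\textbf{Repair.} Simultaneity is unnecessary: a single late stage suffices, and this is the per-stage content of Lemma \ref{3.L6}.
\begin{itemize}
\item Fix a deterministic $r$ with $4z_{(\alpha/2K)}\sqrt{t_r}<\min_{i\ne j}|g(\bm\theta_i)-g(\bm\theta_j)|$. This is possible because $h_r\to\infty$ forces $t_r\to0$. Let $E_r$ be coverage of all arms at that stage.
\item On $E_r$, if the procedure is still running with active set $L$, put $a^*=\arg\min_{a\in L}g(\bm\theta_a)$. For every other active $p$,
$UCB_{a^*r}\le g(\bm\theta_{a^*})+2z_{(\alpha/2K)}\sqrt{t_r}<g(\bm\theta_p)-2z_{(\alpha/2K)}\sqrt{t_r}\le LCB_{pr}$.
\item Also $LCB_{a^*r}\le g(\bm\theta_{a^*})\le g(\bm\theta_a)\le UCB_{ar}$ for all $a\in L$.
\item Hence exactly $a^*$ survives stage $r$, and the algorithm stops.
\end{itemize}
So $P(\text{stop})\ge P(E_r)$ for every such $r$. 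Letting $r\to\infty$ gives $P(\text{stop})\ge\limsup_r P(E_r)\ge 1-\alpha$. Per stage, either your Bonferroni bound or the paper's product $(1-\alpha/K)^K$ works.

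\textbf{Smaller point in (ii).} Showing $c_i\to 0$ does not make $c_i\bigl(g(\widehat{\bm\theta}_{in_0})-g(\bm\theta_i)\bigr)/\sqrt t$ negligible. Under the pilot rule \eqref{3.13}, $n_0t\to0$, and the difference is of order $n_0^{-1/2}\gg\sqrt t$. What you need is $c_i/\sqrt{n_0t}\xrightarrow{P}0$, which is Lemma \ref{3.L4}(i). It holds because $c_i\approx n_0/N_i\approx n_0t/\xi_i^2$.
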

\noindent\textbf{Proof of Theorem \ref{3.TH2} (i):}
Let $t>0$ be given. Then, for every $i=1,2,\ldots,K$, from equation \eqref{3.20}, we have
$
P(N_i=\infty)=\lim_{n_i\rightarrow\infty}P(N_i>n_i)
=\lim_{n_i\rightarrow\infty}P\left(n_i<\frac{1}{t}\left(V_{in_{i}}^2+\frac{1}{n_{i}^l}\right)\right).
$
Since $V_{in_{i}}^2$ is consistent estimator of $\xi_i^2$, we get
$\lim_{n_i\rightarrow\infty}P\left(n_i<\frac{1}{t}\left(V_{in_{i}}^2+\frac{1}{n_{i}^l}\right)\right)=0.$
This implies $P(N_i=\infty)=0.$ Hence, $P(N_i<\infty)=1$, for all $i=1,2,\ldots,K$.\\
\noindent\textbf{Proof of Theorem \ref{3.TH2} (ii):}
The online algorithm proposed in Section \ref{3.S3} correctly identifies the best population whenever, for every $i=2,3,\ldots,K$, $\widetilde{g}\left(\widehat{\bm{\theta}}_{(1)n_0},\widehat{\bm{\theta}}_{(1)(N_{(1)}-n_0)}\right)<\widetilde{g}\left(\widehat{\bm{\theta}}_{(i)n_0},\widehat{\bm{\theta}}_{(i)(N_{(i)}-n_0)}\right).$
Thus, the probability of correct selection is given by
\begin{eqnarray*}
P(CS)&=&P\left(\widetilde{g}\left(\widehat{\bm{\theta}}_{(1)n_0},\widehat{\bm{\theta}}_{(1)(N_{(1)}-n_0)}\right)<\widetilde{g}\left(\widehat{\bm{\theta}}_{(i)n_0},\widehat{\bm{\theta}}_{(i)(N_{(i)}-n_0)}\right), i=2,..,K\right)\\&=&P\bigg(\frac{\widetilde{g}\left(\widehat{\bm{\theta}}_{(1)n_0},\widehat{\bm{\theta}}_{(1)(N_{(1)}-n_0)}\right)-g\left(\bm{\theta}_{[1]}\right)}{\sqrt{t}}<\frac{\widetilde{g}\left(\widehat{\bm{\theta}}_{(i)n_0},\widehat{\bm{\theta}}_{(i)(N_{(i)}-n_0)}\right)-g\left(\bm{\theta}_{[i]}\right)}{\sqrt{t}}\\&&\hspace{6.5cm}+\frac{g\left(\bm{\theta}_{[i]}\right)-g\left(\bm{\theta}_{[1]}\right)}{\sqrt{t}};i=2,..,K\bigg)\\&\geq& P\bigg(\frac{\widetilde{g}\left(\widehat{\bm{\theta}}_{(1)n_0},\widehat{\bm{\theta}}_{(1)(N_{(1)}-n_0)}\right)-g\left(\bm{\theta}_{[1]}\right)}{\sqrt{t}}<\frac{\widetilde{g}\left(\widehat{\bm{\theta}}_{(i)n_0},\widehat{\bm{\theta}}_{(i)(N_{(i)}-n_0)}\right)-g\left(\bm{\theta}_{[i]}\right)}{\sqrt{t}}+\frac{\delta^*}{\sqrt{t}};\;i=2,..,K\bigg).
\end{eqnarray*}
Consequently, we get
\begin{eqnarray*}
\text{Inf } P(CS)&\geq& P\bigg(\frac{\widetilde{g}\left(\widehat{\bm{\theta}}_{(1)n_0},\widehat{\bm{\theta}}_{(1)(N_{(1)}-n_0)}\right)-g\left(\bm{\theta}_{[1]}\right)}{\sqrt{t}}<\frac{\widetilde{g}\left(\widehat{\bm{\theta}}_{(i)n_0},\widehat{\bm{\theta}}_{(i)(N_{(i)}-n_0)}\right)-g\left(\bm{\theta}_{[i]}\right)}{\sqrt{t}}\\&&\hspace{8.5cm}+\frac{\delta^*}{\sqrt{t}}; i=2,..,K\bigg).
\end{eqnarray*}
Now, on taking limits both sides, we get
\begin{eqnarray*}
\liminf_{\delta^*\rightarrow0}P(CS)&\geq&\lim_{\delta^*\rightarrow0}P\bigg(\frac{\widetilde{g}\left(\widehat{\bm{\theta}}_{(1)n_0},\widehat{\bm{\theta}}_{(1)(N_{(1)}-n_0)}\right)-g\left(\bm{\theta}_{[1]}\right)}{\sqrt{t}}<\\&&\hspace{2.5cm}\frac{\widetilde{g}\left(\widehat{\bm{\theta}}_{(i)n_0},\widehat{\bm{\theta}}_{(i)(N_{(i)}-n_0)}\right)-g\left(\bm{\theta}_{[i]}\right)}{\sqrt{t}}+\frac{\delta^*}{\sqrt{t}}; i=2,..,K\bigg).
\end{eqnarray*}
Consequently, for $t=\left(\frac{\delta^*}{h}\right)^2$, we get
\begin{eqnarray*}
\liminf_{\delta^*\rightarrow0}P(CS)&\geq&\lim_{\delta^*\rightarrow0}P\bigg(\frac{\widetilde{g}\left(\widehat{\bm{\theta}}_{(1)n_0},\widehat{\bm{\theta}}_{(1)(N_{(1)}-n_0)}\right)-g\left(\bm{\theta}_{[1]}\right)}{\sqrt{t}}<\\&&\hspace{2.5cm}\frac{\widetilde{g}\left(\widehat{\bm{\theta}}_{(i)n_0},\widehat{\bm{\theta}}_{(i)(N_{(i)}-n_0)}\right)-g\left(\bm{\theta}_{[i]}\right)}{\sqrt{t}}+h;i=2,..,K\bigg).
\end{eqnarray*}
Since populations are independent to each other, by using Lemma \ref{3.L5}, we get
\begin{eqnarray*}
\liminf_{\delta^* \to 0}P(CS)\geq\int_{-\infty}^{\infty}\left(1-\Phi\left(y-h\right)\right)^{K-1}\phi(y)dy=1-\delta.
\end{eqnarray*}
Hence the result.\\
\noindent\textbf{Proof of Theorem \ref{3.TH2} (iii):}
From Lemma \ref{3.L6}, there exists some $r_1$ such that
\begin{eqnarray}\label{3.17}
\widetilde{g}\left(\widehat{\bm{\theta}}_{in_0},\widehat{\bm{\theta}}_{i(N_{ir}-n_0)}\right)-z_{(\alpha/2K)}\sqrt{t_r}<g\left(\bm{\theta}_i\right)<\widetilde{g}\left(\widehat{\bm{\theta}}_{in_0},\widehat{\bm{\theta}}_{i(N_{ir}-n_0)}\right)+z_{(\alpha/2K)}\sqrt{t_r}
\end{eqnarray}
holds with probability at least $1-\alpha$, for every $r>r_1$ and for all $i=1,2,\ldots,K$.
Now, suppose that for some $r>r_1$ and for some $i\neq j \in\{1,2,\ldots,K\}$,
$g\left(\bm{\theta}_i\right)-g\left(\bm{\theta}_j\right)\geq4z_{(\alpha/2K)}\sqrt{t_r}.
$
Then, using equation \eqref{3.17},
\begin{eqnarray*}
4z_{(\alpha/2K)}\sqrt{t_r}&\leq& g\left(\bm{\theta}_i\right)-g\left(\bm{\theta}_j\right)\\&<&\widetilde{g}\left(\widehat{\bm{\theta}}_{in_0},\widehat{\bm{\theta}}_{i(N_{ir}-n_0)}\right)+z_{(\alpha/2K)}\sqrt{t_r}-\widetilde{g}\left(\widehat{\bm{\theta}}_{jn_0},\widehat{\bm{\theta}}_{j(N_{jr}-n_0)}\right)+z_{(\alpha/2K)}\sqrt{t_r}
\end{eqnarray*}
holds with probability at least $1-\alpha$. Consequently, we have
\[\widetilde{g}\left(\widehat{\bm{\theta}}_{jn_0},\widehat{\bm{\theta}}_{j(N_{jr}-n_0)}\right)+z_{(\alpha/2K)}\sqrt{t_r}<\widetilde{g}\left(\widehat{\bm{\theta}}_{in_0},\widehat{\bm{\theta}}_{i(N_{ir}-n_0)}\right)-z_{(\alpha/2K)}\sqrt{t_r}
\]
with probability at least $1-\alpha$. Hence, the $i^{th}$ population is eliminated at some stage $r>r_1$ with probability at least $1-\alpha$. Now suppose that the algorithm never terminates. Then there exists $i,j\in\{1,2,\ldots,K\}$ such that
\[\widetilde{g}\left(\widehat{\bm{\theta}}_{jn_0},\widehat{\bm{\theta}}_{j(N_{jr}-n_0)}\right)+z_{(\alpha/2K)}\sqrt{t_r}>\widetilde{g}\left(\widehat{\bm{\theta}}_{in_0},\widehat{\bm{\theta}}_{i(N_{ir}-n_0)}\right)-z_{(\alpha/2K)}\sqrt{t_r}\]
holds for all $r>r_1$. Thus, we have
$
0<g\left(\bm{\theta}_i\right)-g\left(\bm{\theta}_j\right)\leq4z_{(\alpha/2K)}\sqrt{t_r}
$
with probability at least $1-\alpha$, for all $r>r_1$. Therefore,
$g\left(\bm{\theta}_i\right)-g\left(\bm{\theta}_j\right)\rightarrow0 \; \text{as } r\rightarrow\infty.
$
This contradicts the fact that $g\left(\bm{\theta}_i\right)\neq g\left(\bm{\theta}_j\right)$. Hence, the algorithm terminates with probability at least $1-\alpha$.

\noindent\textbf{Proof of Theorem \ref{3.TH2} (iv):}
The MAB-based algorithm proposed in Section \ref{3.S3} correctly identifies the best population if the best population is not eliminated at any stage. Thus, the best population remains active throughout all stages if
$
LCB_{(1)r}<\min_{j=2,\ldots,K} UCB_{(j)r},
$
for every $r\in\mathbb{N}$. Thus, the probability of correct selection is given by
\begin{eqnarray*}
P(CS)&\geq& P\left(LCB_{(1)r}<min_{j=2,..,K}UCB_{(j)r},\text{ for all }r\in\mathbb{N}\right)\\
&=&P\bigg(\widetilde{g}\left(\widehat{\bm{\theta}}_{(1)n_0},\widehat{\bm{\theta}}_{(1)(N_{(1)r}-n_0)}\right)-z_{\frac{\alpha}{2K}}\sqrt{t_r}<\widetilde{g}\left(\widehat{\bm{\theta}}_{(j)n_0},\widehat{\bm{\theta}}_{(j)(N_{(j)r}-n_0)}\right)+z_{\frac{\alpha}{2K}}\sqrt{t_r};j=2,..,K,r\in\mathbb{N}\bigg)
\end{eqnarray*}
Using Boole's inequality, we get
\begin{eqnarray*}
P(CS)&\geq&1-\sum_{r=1}^{\infty}\sum_{j=2}^{K}P\bigg\{\frac{\widetilde{g}\left(\widehat{\bm{\theta}}_{(j)n_0},\widehat{\bm{\theta}}_{(j)(N_{(j)r}-n_0)}\right)-g\left(\bm{\theta}_{[j]}\right)}{\sqrt{t_r}}<\\&&\hspace{2.5cm}\frac{\widetilde{g}\left(\widehat{\bm{\theta}}_{(1)n_0},\widehat{\bm{\theta}}_{(1)(N_{(1)r}-n_0)}\right)-g\left(\bm{\theta}_{[1]}\right)}{\sqrt{t_r}}-\frac{g\left(\bm{\theta}_{[j]}\right)-g\left(\bm{\theta}_{[1]}\right)}{\sqrt{t_r}}-2z_{\frac{\alpha}{2K}}\bigg\}
\\&\geq&1-\sum_{r=1}^{\infty}\sum_{j=2}^{K}P\bigg\{\frac{\widetilde{g}\left(\widehat{\bm{\theta}}_{(j)n_0},\widehat{\bm{\theta}}_{(j)(N_{(j)r}-n_0)}\right)-g\left(\bm{\theta}_{[j]}\right)}{\sqrt{t_r}}<\\&&\hspace{4cm}\frac{\widetilde{g}\left(\widehat{\bm{\theta}}_{(1)n_0},\widehat{\bm{\theta}}_{(1)(N_{(1)r}-n_0)}\right)-g\left(\bm{\theta}_{[1]}\right)}{\sqrt{t_r}}-\frac{\delta^*}{\sqrt{t_r}}-2z_{\frac{\alpha}{2K}}\bigg\}.
\end{eqnarray*}
Consequently, for $t_r=\left(\frac{\delta^*}{h_r-2z_{(\alpha/2K)}}\right)^2$, we get\\
\resizebox{\textwidth}{!}{%
$\text{Inf} P(CS)\geq1-\text{sup}\sum_{r=1}^{\infty}\sum_{j=2}^{K}P\bigg\{\frac{\widetilde{g}\left(\widehat{\bm{\theta}}_{(j)n_0},\widehat{\bm{\theta}}_{(j)(N_{(j)r}-n_0)}\right)-g\left(\bm{\theta}_{[j]}\right)}{\sqrt{t_r}}<\frac{\widetilde{g}\left(\widehat{\bm{\theta}}_{(1)n_0},\widehat{\bm{\theta}}_{(1)(N_{(1)r}-n_0)}\right)-g\left(\bm{\theta}_{[1]}\right)}{\sqrt{t_r}}-h_r\bigg\}.$%
}
Now, on taking limits both sides, we get\\
\begin{eqnarray*}
\liminf_{\delta^*\rightarrow0}P(CS)&\geq& 1- \limsup_{\delta^*\rightarrow0}\sum_{r=1}^{\infty}\sum_{j=2}^{K}P\Bigg\{\frac{\widetilde{g}\left(\widehat{\bm{\theta}}_{(j)n_0},\widehat{\bm{\theta}}_{(j)(N_{(j)r}-n_0)}\right)-g\left(\bm{\theta}_{[j]}\right)}{\sqrt{t_r}}<\\&&\hspace{4.5cm}\frac{\widetilde{g}\left(\widehat{\bm{\theta}}_{(1)n_0},\widehat{\bm{\theta}}_{(1)(N_{(1)r}-n_0)}\right)-g\left(\bm{\theta}_{[1]}\right)}{\sqrt{t_r}}-h_r\Bigg\}.
\end{eqnarray*}
Consequently, by using Lemma \ref{3.L9}, we get
$\liminf_{\delta^*\rightarrow0}P(CS)\geq 1-\delta.$ 
Hence the result.\\
\begin{remark}
 The statements and corresponding proofs of Lemmas \ref{3.L1}–\ref{3.L9} are provided in Appendix \ref{3.8.2}.
\end{remark}
\begin{corollary}\label{3.C2}
Under the conditions of Theorem \ref{3.TH2}, both the online algorithm and MAB-based algorithm, when applied to the Gini index and the mean of the log-transformed TMB scores discussed in Subsection \ref{3.S2.3}, satisfy
\[\liminf_{\delta^* \to 0}P(Correct\; Selection)\geq 1-\delta,\]
where $0<\delta<1$.
\end{corollary}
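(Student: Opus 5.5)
The corollary follows by checking that the Gini index and the mean of log-TMB are instances of the general framework. Once they are, parts $(ii)$ and $(iv)$ of Theorem \ref{3.TH2} apply directly. The plan is to verify, for each measure, the hypotheses that Theorem \ref{3.TH2} places on the triple $\bigl(g,\widehat{\bm\theta}_{in_i},V_{in_i}^2\bigr)$. These are:
\begin{enumerate}
\item[(a)] $g$ is continuous with first-order partial derivatives near the true $\bm\theta_i$;
\item[(b)] the plug-in estimator is consistent and asymptotically normal at rate $\sqrt{n_i}$;
\item[(c)] $V_{in_i}^2$ is consistent for $\xi_i^2$;
\item[(d)] the uniform continuity in probability condition holds for $g(\widehat{\bm\theta}_{in_i})$;
\item[(e)] for the MAB part, the summability condition of Theorem \ref{3.TH2}$(iv)$ holds.
\end{enumerate}

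For the Gini index, $g(\theta_1,\theta_2)=1-2\theta_1/\theta_2$ is $C^1$ on $\{\theta_2\neq0\}$, and $\theta_{i2}=\mu_i>0$ for income data, so (a) holds. Item (b) is the result of \cite{BD2007} quoted in Subsection \ref{3.S2.3.2}. Item (c) is the Binder-type linearisation estimator \eqref{3.VARIANCEGINI}, whose consistency is stated there.

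For the log-TMB mean, $g$ is the identity, so (a) is trivial. Item (b) follows from \eqref{3.2} because $\widehat{\mu}_{TMB_{in_i}}$ is a weighted mean, i.e.\ a GMM estimator with moment function $m(x,\theta)=\log(x+1)-\theta$. The design is the special case $L=1$, $M_{s_ic_{is_i}}=1$ of Subsection \ref{3.S2}. Item (c) is \eqref{3.VARIANCETMB}.

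For (d), I would argue that both estimators are smooth functionals of weighted PSU-level averages, i.e.\ Hadamard-differentiable functionals of $\widehat F_{in_i}$ in the Gini case. Their fluctuations over $n'$ with $|n'-n|\le\gamma n$ are therefore controlled by a Kolmogorov-type maximal inequality for the linear term plus a uniformly negligible remainder. This is the standard Anscombe route, and it gives uniform continuity in probability. This is also what lets Lemma \ref{3.L5} replace the random $N_i$ by the deterministic $\xi_i^2/t$.

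With (a)–(d) in hand, Theorem \ref{3.TH2}$(i)$–$(ii)$ give $\liminf_{\delta^*\to0}P(CS)\ge1-\delta$ for the online algorithm with either measure. For the MAB algorithm, Theorem \ref{3.TH2}$(iii)$–$(iv)$ give the same bound, provided (e) holds.

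The main obstacle is (e): summability over $n,m$ of $P\bigl(\widetilde g_{(j)}<\widetilde g_{(1)}\bigr)$. The mean separation is $g(\bm\theta_{[j]})-g(\bm\theta_{[1]})\ge\delta^*>0$, so I would first try an exponential (Hoeffding/Bernstein-type) tail bound for each weighted-mean component under bounded or finite-moment assumptions from \ref{3.AN1}–\ref{3.AN15}, then pass it through the delta method for the Gini ratio. One point needs care: $\widetilde g$ mixes a fixed $n_0$-sample with a growing one, and $c_i\to0$ as $n_i\to\infty$ while $(1-c_i)\to1$. The growing part then yields tails like $e^{-cn}$ and $e^{-cm}$, whose double sum is finite. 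The fixed pilot component has non-vanishing variance, but it is multiplied by a weight $c_i=O(n_0/n_i)$, so a Markov bound on it decays polynomially fast enough for summation once higher moments are assumed. If this fails, I would instead state (e) as a verified-by-assumption hypothesis, since the corollary is explicitly posed ``under the conditions of Theorem \ref{3.TH2}''. In that reading the corollary reduces to checking (a)–(c), which is immediate from Subsection \ref{3.S2.3}.
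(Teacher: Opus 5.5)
Your skeleton is the paper's. Its proof establishes the uniform continuity in probability (UCIP) condition for $\widehat{G}_{in_i}$ and $\widehat{\mu}_{TMB_{in_i}}$. It then applies Lemma \ref{3.L5} to get $(\widetilde{G}_{iN_i}-G_i)/\sqrt{t}\xrightarrow{d}N(0,1)$ and the analogue for $\widetilde{\mu}_{TMB_{iN_i}}$, and reads the bound off Theorem \ref{3.TH2}.

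The real difference is item (d). The paper does not prove UCIP; it imports it from Lemma 2 and Theorem 2 of \cite{shivam2026asymptotic}. You instead propose an Anscombe-type argument.

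For the log-TMB mean your route works and is elementary. Since $W_{s_ic_{is_i}}=H_{is_i}/n_{is_i}$, the normalised weights are deterministic, so the estimator is a stratified mean of independent terms and Kolmogorov's maximal inequality gives UCIP at once.

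For the Gini index, the step that carries all the weight is a maximal bound $\max_{|n'-n|\le\gamma n}\sqrt{n}\,|R_{n'}|\xrightarrow{P}0$ on the nonlinear remainder. That remainder comes from the ratio with $\widehat{\mu}_{in_i}$, the second-order term produced by $\widehat{F}_{in_i}$ inside the sum, and the random normalising constant of the PPS weights. Hadamard differentiability only gives $R_n=o_P(n^{-1/2})$ for each fixed $n$, not uniformly over the window. As written, this step is asserted rather than shown, and it is exactly what the paper borrows from \cite{shivam2026asymptotic}. Also, the paper treats UCIP as something to verify for these two estimators, so your fallback reading should absorb only (e), not (d).

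Your plan for (e) would fail, for two concrete reasons.

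First, a union bound gives at best $P\bigl(\widetilde{g}_{(j)}<\widetilde{g}_{(1)}\bigr)\le a_m+b_n$ with $a_m\approx e^{-\kappa m}$ and $b_n\approx e^{-\kappa n}$. But $\sum_{n}\sum_{m}(a_m+b_n)=\infty$, because each inner sum already diverges. Worse, for fixed $m$ the summand does not tend to $0$ as $n\to\infty$. The estimator $\widetilde{g}_{(j)}$ built from $m$ PSUs is a non-degenerate random variable, independent of population $(1)$'s estimator, which converges in distribution. So the limiting probability is positive in general, and no tail bound can rescue the double sum.

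Second, $c_i\to0$ is false at fixed $t$. The identity $c_i^2V_{in_i}^2/n_0+(1-c_i)^2V_{in_i}^2/(n_i-n_0)=t$ that defines $c_i$ forces $c_i\to\sqrt{n_0t}/\xi_i$ as $n_i\to\infty$, so the pilot term never fades. The approximation $c_i\approx n_0/n_i$ holds only at the stopping boundary $n_i\approx V_{in_i}^2/t$, which is what Lemma \ref{3.L4} exploits. Indeed, the square root in $c_i$ is not even real for $n_i<V_{in_i}^2/t$. Hoeffding's inequality is also unavailable for Pareto or lognormal incomes, and the delta method gives no non-asymptotic tails.

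So (e) cannot be derived from moment assumptions in any non-degenerate case. Your fallback, treating it as part of the hypotheses of Theorem \ref{3.TH2}(iv), is the only viable route. It is also exactly what the paper does, since its proof never mentions the summability condition, though the observations above show how strong that hypothesis is.
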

\noindent\textbf{Proof of Corollary \ref{3.C2}:}
Using Lemma 2 and Theorem 2 of \cite{shivam2026asymptotic},  $\widehat{\mu}_{TMB_{in_i}}$  and $\widehat{G}_{in_i}$ satisfy uniform continuity in probability condition. Consequently, from Lemma \ref{3.L5}, we get
\[\frac{\widetilde{G}_{iN_i}-G_i}{\sqrt{t}}\rightarrow N\left(0,1\right),\text{ and }\;\frac{\widetilde{\mu}_{TMB_{iN_i}}-\mu_{TMB_i}}{\sqrt{t}}\rightarrow N\left(0,1\right),\]
where $\widetilde{G}_{iN_i}=c_i\widehat{G}_{in_0}+(1-c_i)\widehat{G}_{i(N_i-n_0)}$ and $\widetilde{\mu}_{TMB_{iN_i}}=c_i\widehat{\mu}_{TMB_{in_0}}+(1-c_i)\widehat{\mu}_{TMB_{i(N_i-n_0)}}.$
Hence, for both the algorithms, from Theorem \ref{3.TH2}, we get $\liminf_{\delta^* \to 0}P(Correct\; Selection)\geq 1-\delta.$
Hence, the result.
\subsection{Simulation Results}\label{3.S4.2}
In this subsection, we assess the performance of the proposed algorithms using a Monte Carlo simulation study. The primary aim is to investigate how effectively the algorithms identify the extreme population. To illustrate the applicability of the proposed framework, the simulation experiments are conducted using the Gini index introduced in Subsection \ref{3.S2.3.2}.
\subsubsection{Pseudo-population}\label{3.S4.2.1}
We generate three pseudo-populations. For each $i=1,2,3$, the $i^{th}$ population is divided into two strata indexed by $s_i=1,2$, where each stratum contains $H_{is_i}$ clusters. Further, within every cluster, households are divided into two sub-strata. The variable of interest, such as household income or expenditure, is denoted by $x_{s_ic_{is_i}b_{ic_{is_i}}h}$.

Household incomes are generated from distributions commonly used in the income inequality literature, namely Gamma $(2.231,0.84)$, Gamma $(3.231,0.64)$, Gamma $(2.551,0.79)$, Pareto $(20000,5)$, Pareto $(20000,4.1)$, Pareto $(20000,3.5)$, lognormal $(2.597,0.496)$, lognormal $(3.497,0.56)$ and lognormal $(3.997,0.64)$, with parameter values adopted from \cite{INCOME}. The three pseudo-populations are chosen in such a manner that the true Gini index values satisfy the preference zone condition given in \eqref{3.preferencezone} and also no two Gini index value coincide with each other. 

We next assess the performance of the online algorithm and the MAB-based algorithm in terms of their ability to correctly identify the true extreme population, namely the population having the minimum value of Gini index. Throughout the simulation study, the desired probability of correct selection is fixed at $95\%$. For different combinations of income distributions, we conduct $10000$ Monte Carlo replication. Each pseudo-population consists of two strata, each containing $H_{is_i}=5000$ clusters. Pilot cluster sizes are computed using equation \eqref{3.13}. From every chosen cluster, $k=2$ households are drawn within each sub-stratum and the corresponding sampling weights are computed in accordance with the survey design. Both procedures are implemented to determine the true extreme population.

\subsubsection{Simulation results for Online Algorithm}\label{3.S4.2.2}
We evaluate the performance of online algorithm discussed in Section \ref{3.S3} under the different combinations of distributions. For each $i$, using the variance estimator \eqref{3.VARIANCEGINI} and the stopping rule \eqref{3.20}, we determine the final cluster sample size $N_i$. Consequently, we compute the Gini index estimator given in equation \eqref{3.gGINI}. The population corresponding to minimum Gini index estimator is then considered as extreme population. The procedure is repeated $10000$ times. The empirical results are reported in Table \ref{3.TA1}.

The first, second and third columns of Table \ref{3.TA1} list the distribution of monthly income. The fourth, fifth and sixth columns give the true value of Gini index corresponding to population provided in column first, second and third, respectively. The seventh column reports the pre-specified value of $\delta^*$. The eighth, ninth and tenth columns report the mean final number of clusters $(\bar{N}_1, \bar{N}_2, \bar{N}_3)$ together with their respective standard deviations $(sd(N_1), sd(N_2), sd(N_3))$. The last column presents the proportion of simulation runs in which the algorithm correctly identifies the true extreme population, denoted by $\widehat{P}(CS)$, together with its corresponding standard error $se(\widehat{P}(CS))$, where $CS$ stands for Correct Selection.

The results in Table \ref{3.TA1} confirm that the online algorithm identifies the true extreme population with probability close to $100(1-\alpha)\%$.
\begin{table}
\caption{Results for online algorithm for Gini index}
\centering
\resizebox{\textwidth}{!}{%
\begin{tabular}{ccccccccccccc}
		\toprule
$1^{st}$&$2^{nd}$&$3^{rd}$&$G_{X_1}$&$G_{X_2}$&$G_{X_3}$&$\delta^*$&$\overline{N}_1$&$\overline{N}_2$&$\overline{N}_3$&$\widehat{P}\left(CS\right)$\\
 Population&Population&Population&&&&&$sd(N_1)$&$sd(N_2)$&$sd(N_3)$&$se(\widehat{P}\left(CS\right))$\\
    \midrule
     Gamma&Gamma&Gamma&$0.3573$&$0.3020$&$0.3364$&$0.0344$&$88.3974$&$68.4088$&$80.8716$&$0.9620$\\
     $(2.231,0.84)$&$(3.231,0.64)$&$(2.551,0.79)$&&&&&$(18.7700)$&$(15.3483)$&$(17.5700)$&$(0.0019)$\\
     Pareto&Pareto&Pareto&$0.1111$&$0.1389$&$0.1667$&$0.0277$&$56.9770$&$90.4924$&$144.8058$&$0.9390$\\
      $(20000,5)$&$(20000,4.1)$&$(20000,3.5)$&&&&&$(27.0043)$&$(49.9908)$&$(82.1658)$&$(0.0024)$\\
     lognormal&lognormal&lognormal&$0.2742$&$0.3079$&$0.3491$&$0.0226$&$69.6524$&$87.9344$&$115.5250$&$0.9612$\\
      $(2.597,0.496)$&$(3.497,0.56)$&$(3.997,0.64)$&&&&&$(21.1203)$&$(27.4380)$&$(35.6057)$&$(0.0019)$\\
     lognormal&lognormal&Pareto&$0.2742$&$0.3491$&$0.1667$&$0.1075$&$12.3742$&$14.6444$&$13.6054$&$0.9806$\\
      $(2.597,0.496)$&$(3.997,0.64)$&$(20000,3.5)$&&&&&$(2.6841)$&$(4.9861)$&$(6.7657)$&$(0.0014)$\\
     Pareto&Pareto&Gamma&$0.1389$&$0.1667$&$0.3020$&$0.0277$&$90.5324$&$143.6638$&$104.7198$&$0.9437$\\
      $(20000,4.1)$&$(20000,3.5)$&$(3.231,0.64)$&&&&&$(49.4332)$&$(80.6754)$&$(20.3099)$&$(0.0023)$\\
     Gamma&Gamma&lognormal&$0.3573$&$0.3364$&$0.3079$&$0.0285$&$129.0468$&$117.7582$&$125.4224$&$0.9711$\\
      $(2.231,0.84)$&$(2.551,0.79)$&$(3.497,0.56)$&&&&&$(23.1882)$&$(21.6027)$&$(34.3409)$&$(0.0017)$\\
		\bottomrule
\end{tabular}%
}
\label{3.TA1}
\end{table}
\subsubsection{Simulation results for MAB-based Algorithm}\label{3.S4.2.3}
We now evaluate the performance of MAB-based algorithm discussed in Section \ref{3.S3} under different combinations of distributions.  For each $i$, the final cluster sample size $N_i$ and the total number of stages $R$ are determined using the variance estimator given in \eqref{3.VARIANCEGINI}, together with the stopping rules defined in \eqref{3.21} and \eqref{3.19}. In the algorithm, the procedure continues until only one population remains and all other populations are eliminated. The population which is not eliminated is then declared as the extreme population. The procedure is repeated $10000$ times. The empirical results summarized in Table \ref{3.TA2}.

The first, second and third columns of Table \ref{3.TA2} list the distribution of monthly income. The fourth, fifth and sixth columns give the true value of Gini index corresponding to population provided in column first, second and third, respectively. The seventh column reports the pre-specified value of $\delta^*$. The eighth column determines the average number of stages required by procedure, denoted by $(\bar{R})$, along with the standard deviation $(sd(R))$. The ninth, tenth and eleventh columns report the mean final numbers of clusters $(\bar{N}_1, \bar{N}_2, \bar{N}_3)$ together with their associated standard deviations $(sd(N_1), sd(N_2), sd(N_3))$. The last column presents the proportion of simulation runs in which the algorithm correctly identifies the true extreme population, denoted by $\widehat{P}(CS)$, together with its corresponding standard error $se(\widehat{P}(CS))$.

The results in Table \ref{3.TA2} confirms that the MAB-based algorithm identifies the true extreme population with probability close to $1$ and final cluster size is higher as compared to online algorithm.
\begin{table}
\caption{Results for MAB-based algorithm for Gini index}
\centering
\resizebox{\textwidth}{!}{%
\begin{tabular}{cccccccccccccc}
		\toprule
$1^{st}$&$2^{nd}$&$3^{rd}$&$G_{X_1}$&$G_{X_2}$&$G_{X_3}$&$\delta^*$&$\overline{R}$&$\overline{N}_1$&$\overline{N}_2$&$\overline{N}_3$&$\widehat{P}\left(CS\right)$\\
 Population&Population&Population&&&&&$sd(R)$&$sd(N_1)$&$sd(N_2)$&$sd(N_3)$&$se(\widehat{P}\left(CS\right))$\\
   \midrule
     Gamma&Gamma&Gamma&$0.3573$&$0.3020$&$0.3364$&$0.0344$&$26.0965$&$124.3180$&$225.6172$&$265.5434$&$0.9973$\\
     $(2.231,0.84)$&$(3.231,0.64)$&$(2.551,0.79)$&&&&&$(9.5860)$&$(76.5420)$&$(131.8591)$&$(152.2147)$&$(0.0005)$\\
     Pareto&Pareto&Pareto&$0.1111$&$0.1389$&$0.1667$&$0.0277$&$27.0258$&$206.9114$&$367.9514$&$173.9214$&$0.9796$\\
      $(20000,5)$&$(20000,4.1)$&$(20000,3.5)$&&&&&$(10.4778)$&$(158.0205)$&$(212.3688)$&$(123.0412)$&$(0.0014)$\\
     lognormal&lognormal&lognormal&$0.2742$&$0.3079$&$0.3491$&$0.0226$&$25.8640$&$235.5540$&$304.1372$&$93.1380$&$0.9962$\\
      $(2.597,0.496)$&$(3.497,0.56)$&$(3.997,0.64)$&&&&&$(9.7717)$&$(152.0135)$&$(170.2087)$&$(54.2331)$&$(0.0006)$\\
     lognormal&lognormal&Pareto&$0.2742$&$0.3491$&$0.1667$&$0.1075$&$25.6914$&$25.1076$&$17.7334$&$30.9784$&$0.9979$\\
      $(2.597,0.496)$&$(3.997,0.64)$&$(20000,3.5)$&&&&&$(9.4475)$&$(12.8348)$&$(10.1929)$&$(26.5269)$&$(0.0004)$\\
     Pareto&Pareto&Gamma&$0.1389$&$0.1667$&$0.3020$&$0.0277$&$25.8929$&$343.6628$&$570.2156$&$15.5538$&$0.9919$\\
      $(20000,4.1)$&$(20000,3.5)$&$(3.231,0.64)$&&&&&$(10.5123)$&$(281.5621)$&$(331.1186)$&$(7.1591)$&$(0.0009)$\\
     Gamma&Gamma&lognormal&$0.3573$&$0.3364$&$0.3079$&$0.0285$&$25.0573$&$148.4886$&$368.2638$&$405.2138$&$0.9990$\\
      $(2.231,0.84)$&$(2.551,0.79)$&$(3.497,0.56)$&&&&&$(9.7975)$&$(87.7128)$&$(221.0137)$&$(267.5344)$&$(0.0003)$\\
		\bottomrule
\end{tabular}%
}
\label{3.TA2}
\end{table}
{\subsubsection{Comparison between MAB-based algorithm and online algorithm}
In this section, we compare the performances of the MAB-based algorithm and the online algorithm. For this purpose, we use the distributional combinations given in Tables \ref{3.TA1} and \ref{3.TA2}.  For each combination, the average final cluster sample size per population is computed across all simulation runs for both algorithms. These averages are then plotted. In the resulting scatter plots given in Figure \ref{3.ALLF}, red dots represent the MAB-based algorithm, whereas blue dots represent the online algorithm. 
From these plots, we see that, for each distributional combination, the average final cluster sample sizes per population obtained using the MAB-based algorithm display greater variability than those obtained using the online algorithm. From these figures, we can conclude that the online algorithm exhibits greater consistency in the final cluster sample size per population compared with the MAB-based algorithm. Further, Tables \ref{3.TA1} and \ref{3.TA2} show that the online algorithm generally requires fewer cluster samples on average than the MAB-based algorithm. Although the MAB-based algorithm requires a larger final cluster sample size, Figure \ref{3.F7} shows that it achieves greater accuracy, identifying the true extreme population with a probability close to $1$.
Based on the above comparison, the selection of an algorithm depends on the balance between sampling efficiency and selection accuracy. If the main objective is to attain the desired probability of correct selection with few cluster sample size, the online algorithm is the preferred choice. On the other hand, when accurate identification of the true extreme population is more important, the MAB-based algorithm may be preferred. It achieves a probability of correct selection close to one, but requires an excessively large cluster sample size. For instance, in clinical trials where the objective is to determine the most effective drug, the MAB-based algorithm is preferred, as an incorrect choice may lead to serious consequences. In contrast, in large-scale online advertising or marketing campaigns, the online algorithm is preferred because it can identify the better-performing option using limited data, thereby reducing experimentation cost while enabling faster decisions.}

\begin{figure}
    \begin{minipage}{0.47\textwidth}
    \centering
    \includegraphics[width=0.85\linewidth]{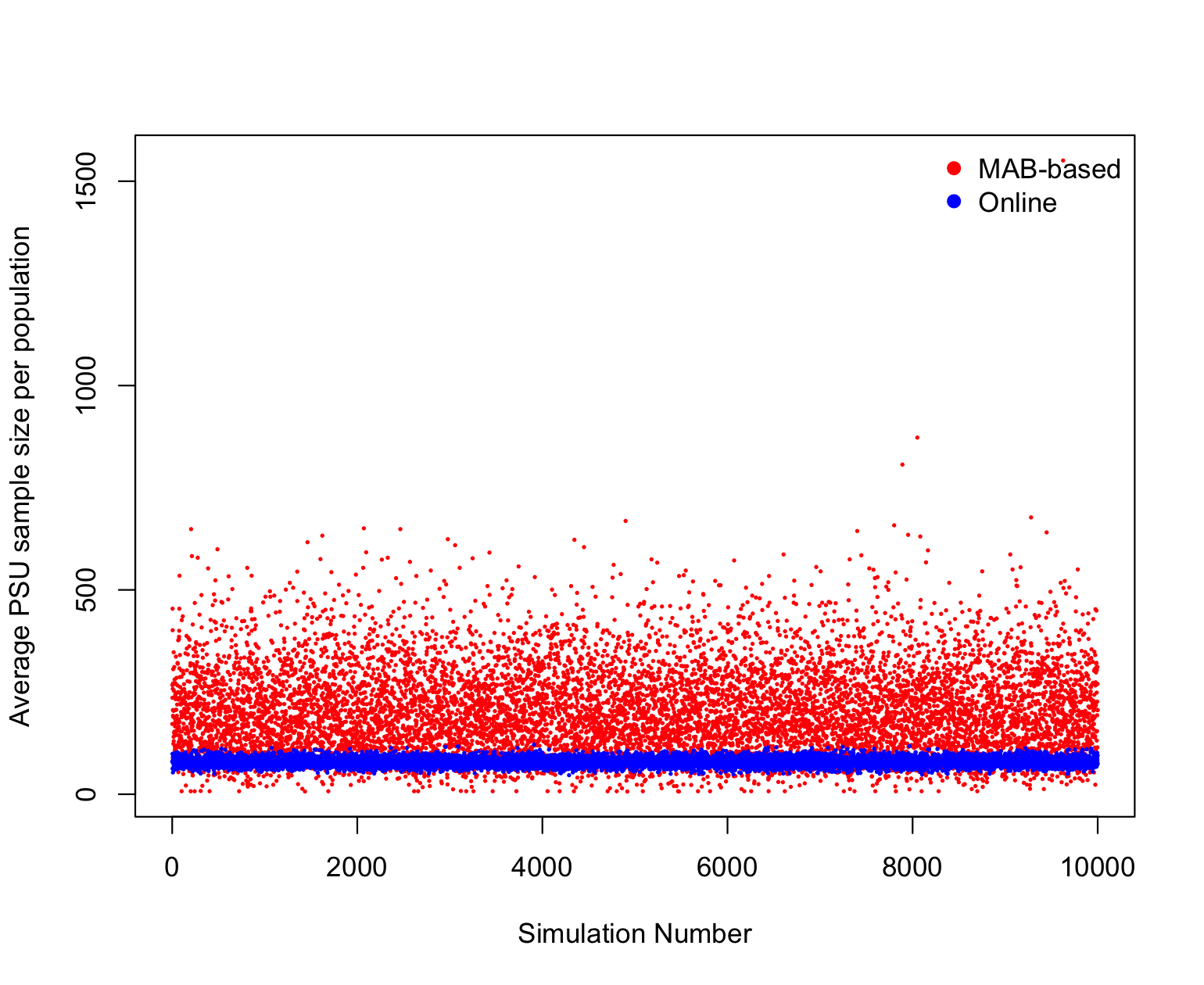}
    \subcaption{}
    \label{3.F1}
    \end{minipage}
    \hfill
    \begin{minipage}{0.47\textwidth}
    \centering
     \includegraphics[width=0.85\linewidth]{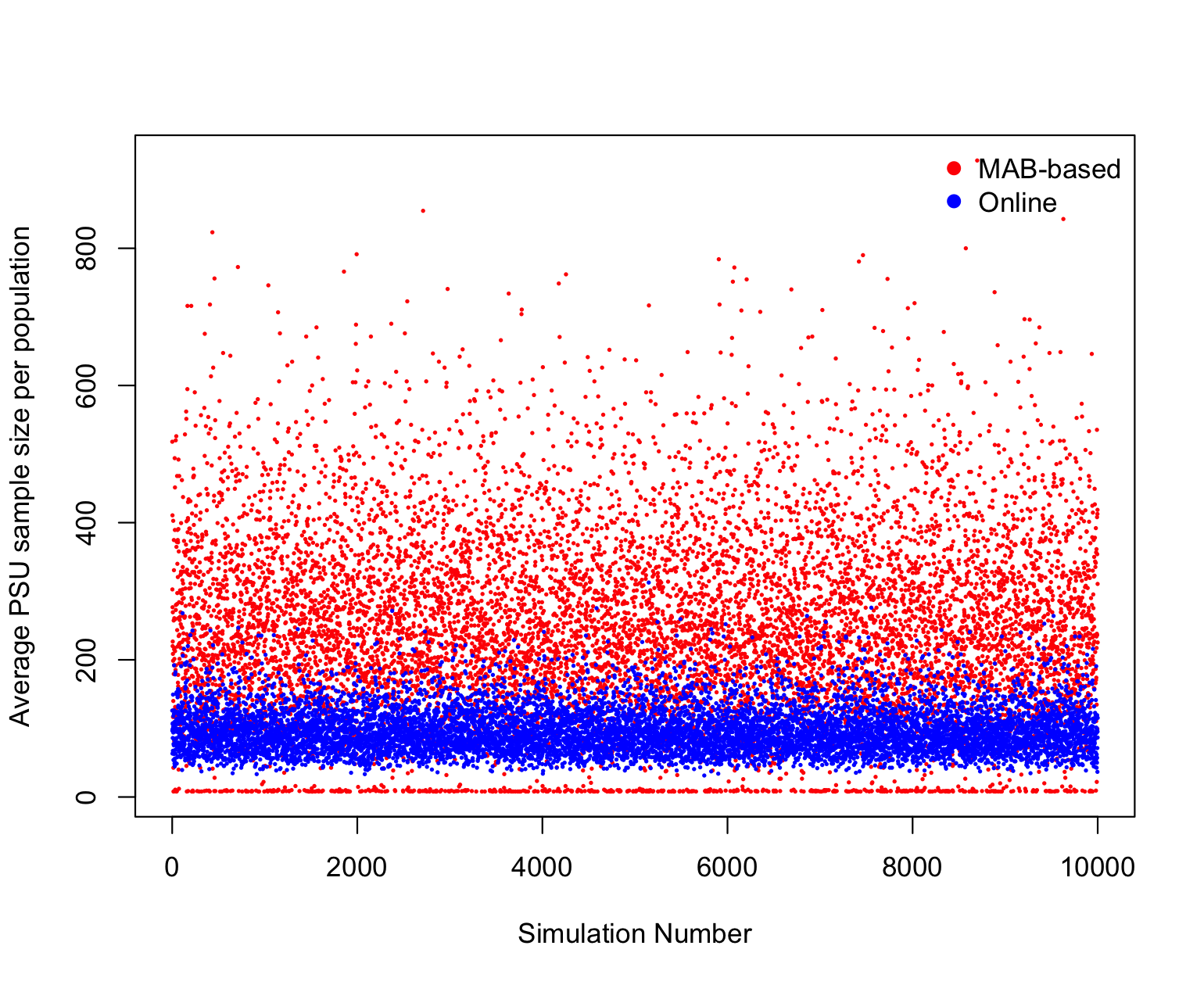}
     \subcaption{}
     \label{3.F2}
    \end{minipage}
    \begin{minipage}{0.47\textwidth}
    \centering
    \includegraphics[width=0.85\linewidth]{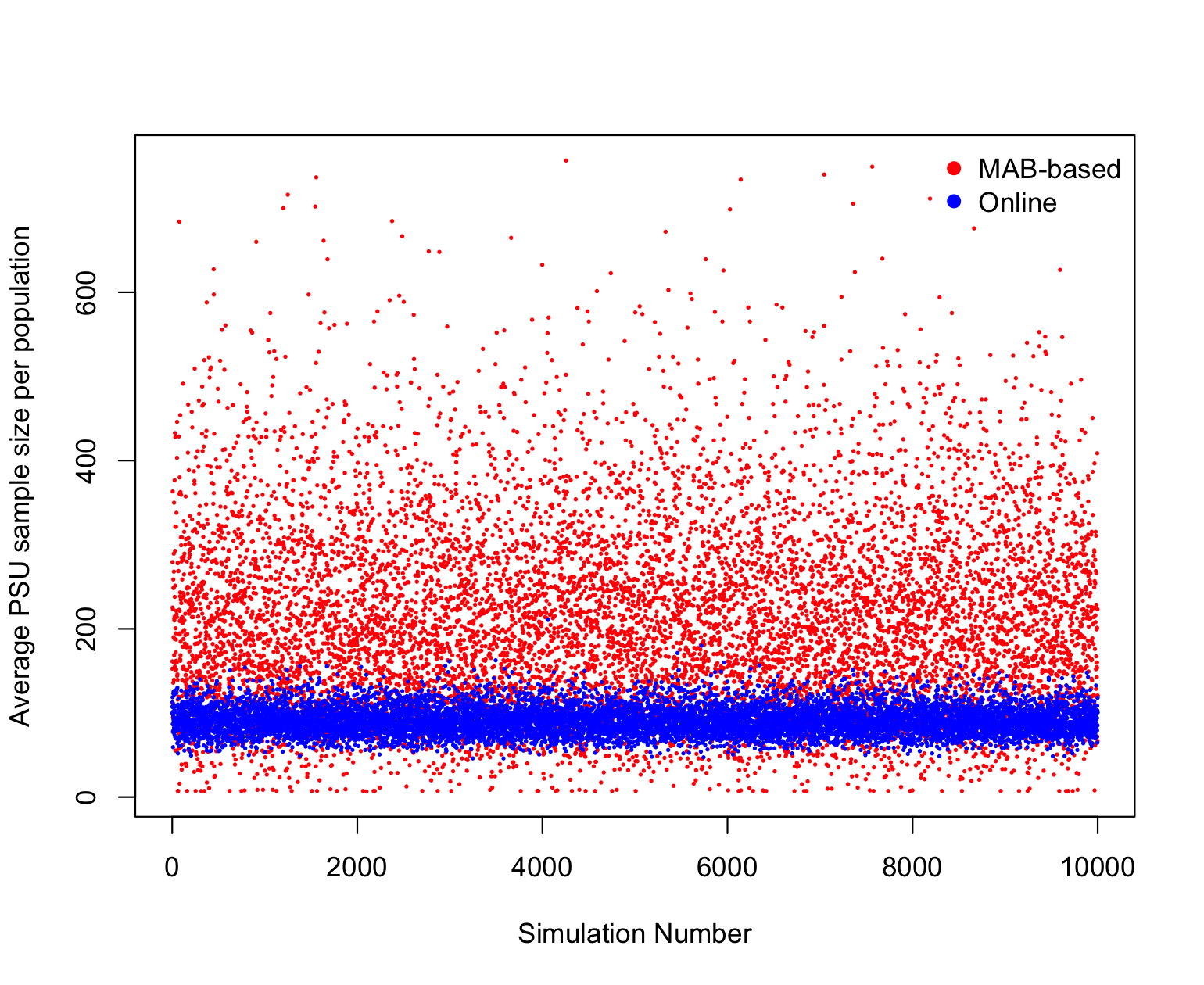}
    \subcaption{}
    \label{3.F3}
    \end{minipage}
    \hfill
    \begin{minipage}{0.47\textwidth}
    \centering
     \includegraphics[width=0.85\linewidth]{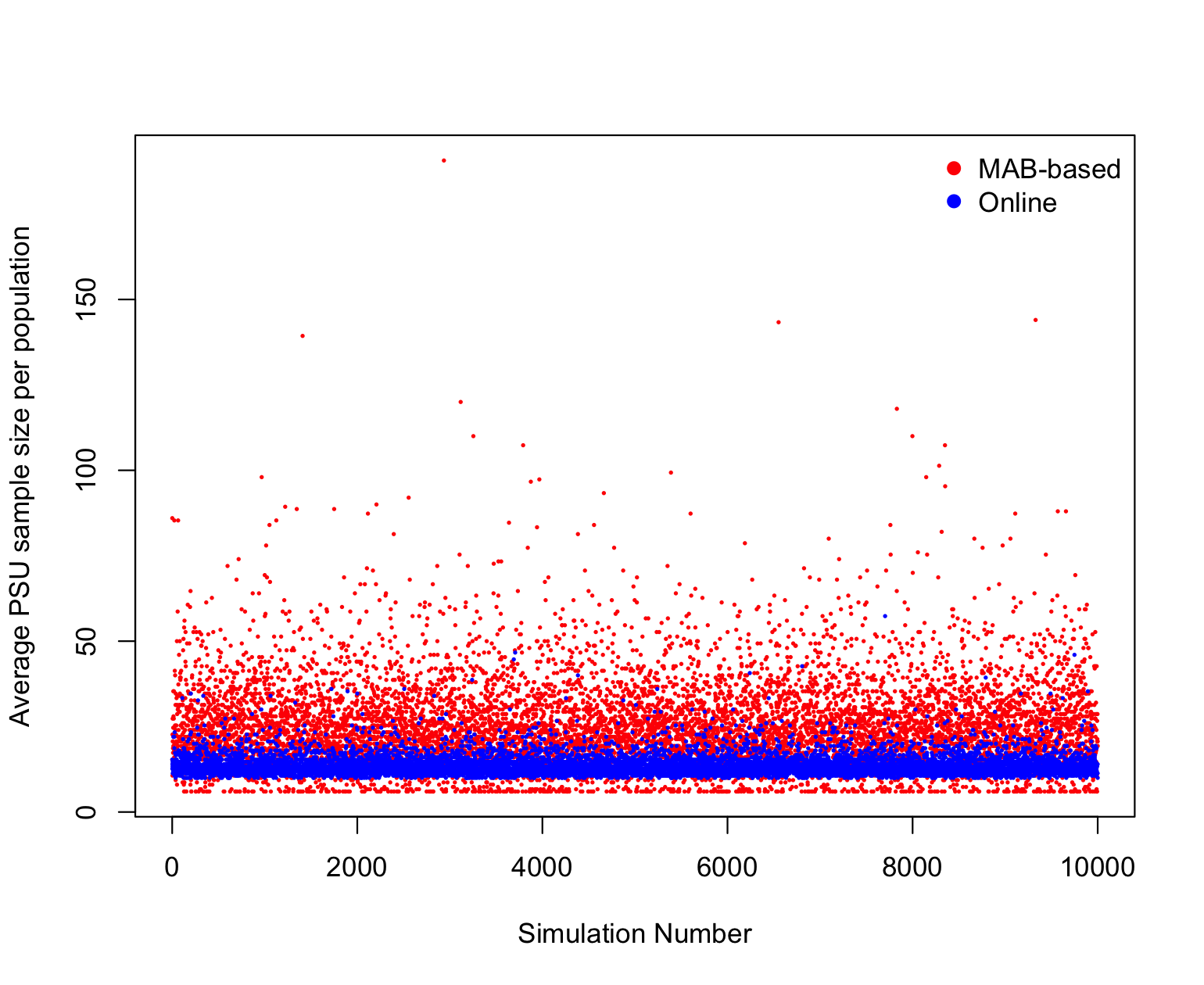}
     \subcaption{}
     \label{3.F4}
    \end{minipage}
\begin{minipage}{0.47\textwidth}
    \centering
    \includegraphics[width=0.85\linewidth]{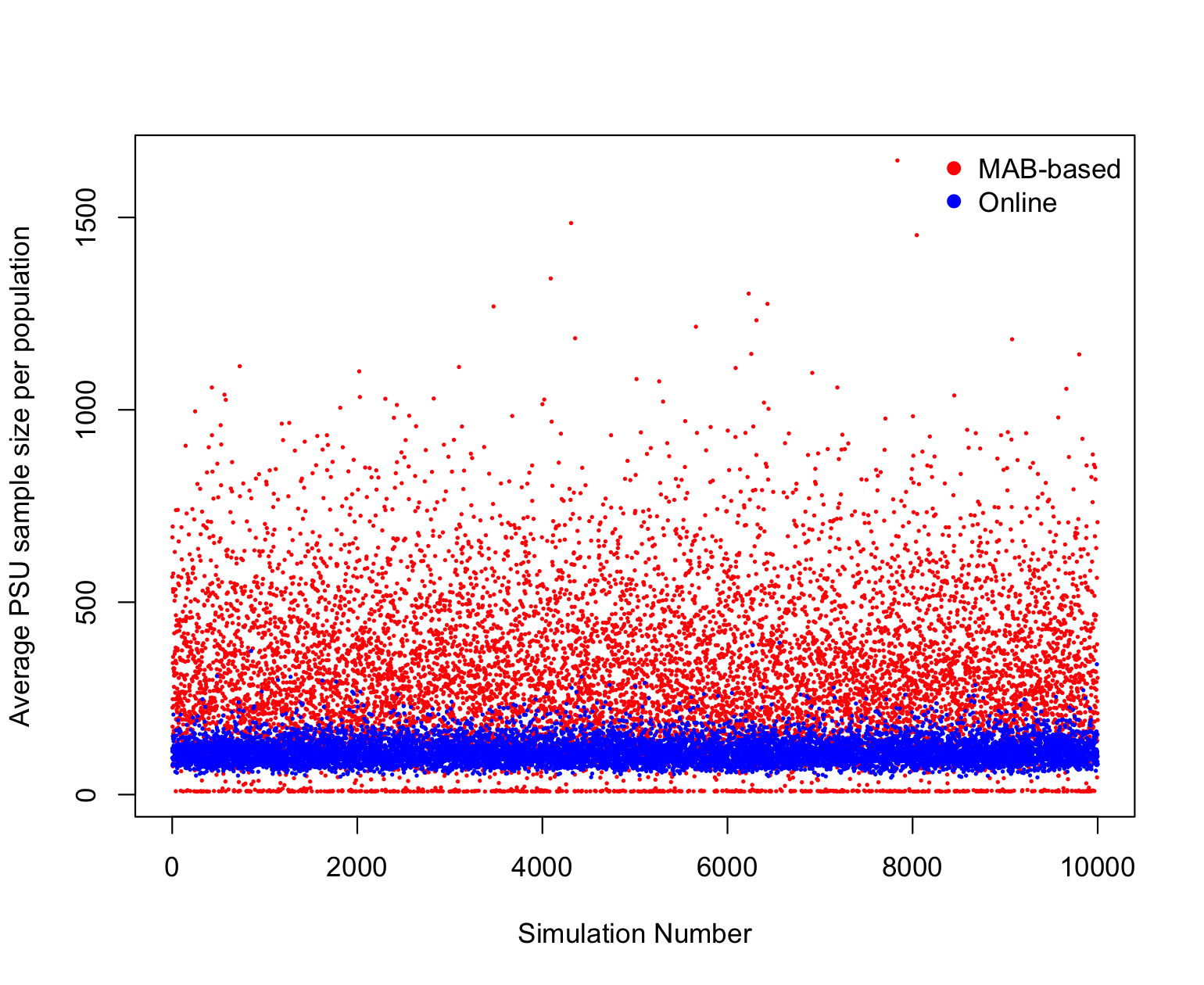}
    \subcaption{}
    \label{3.F5}
    \end{minipage}
    \hfill
    \begin{minipage}{0.47\textwidth}
    \centering
     \includegraphics[width=0.85\linewidth]{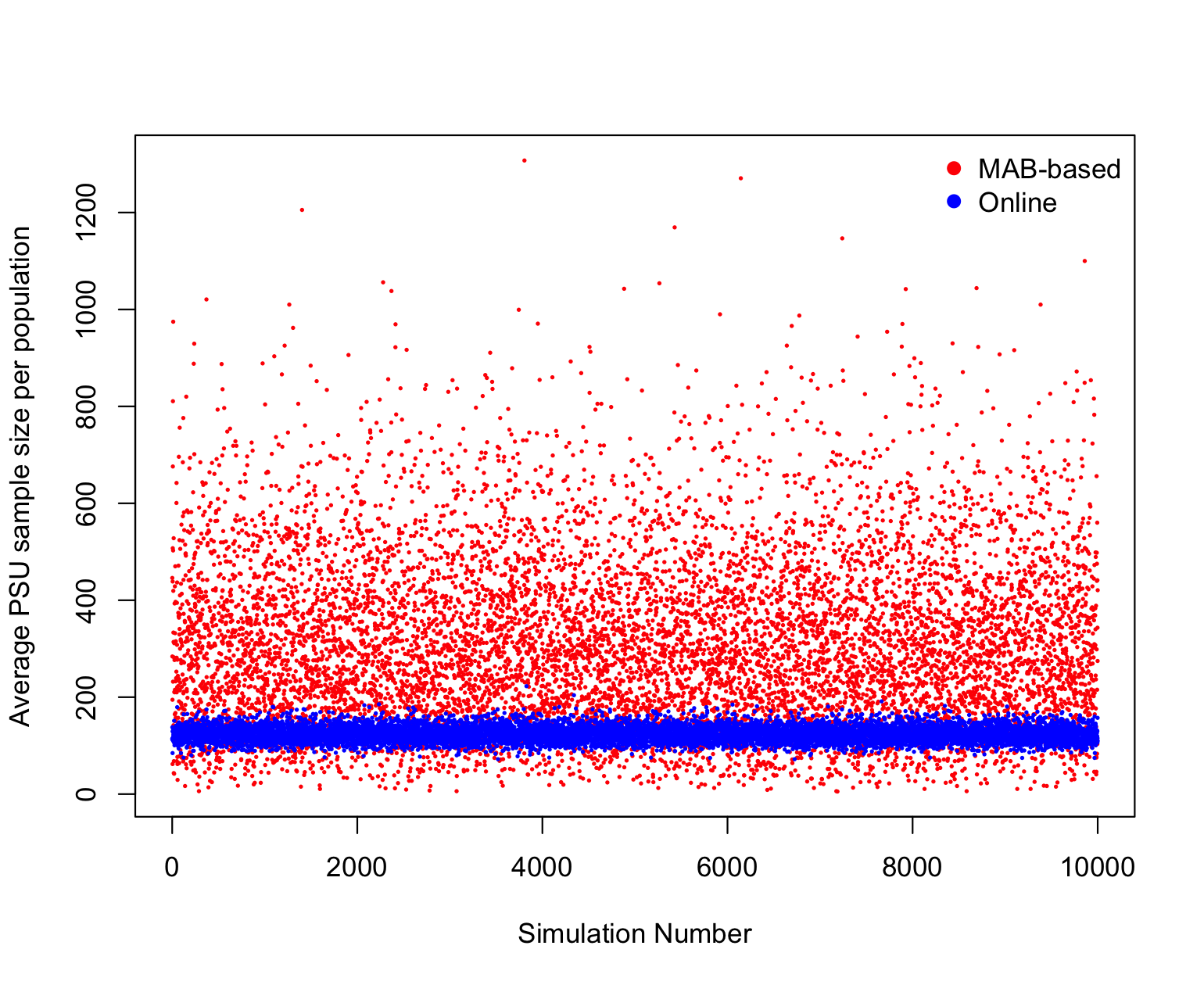}
     \subcaption{}
          \label{3.F6}
    \end{minipage}
    \caption{(a) All-Gamma population, (b) All-Pareto populations, (c) All-lognormal population, (d) Lognormal-Pareto combination, (e) Pareto-Gamma combination, (f) Gamma-Lognormal combination. }
    \label{3.ALLF}
    \end{figure}
    \begin{figure}
        \centering
        \includegraphics[width=0.6\linewidth]{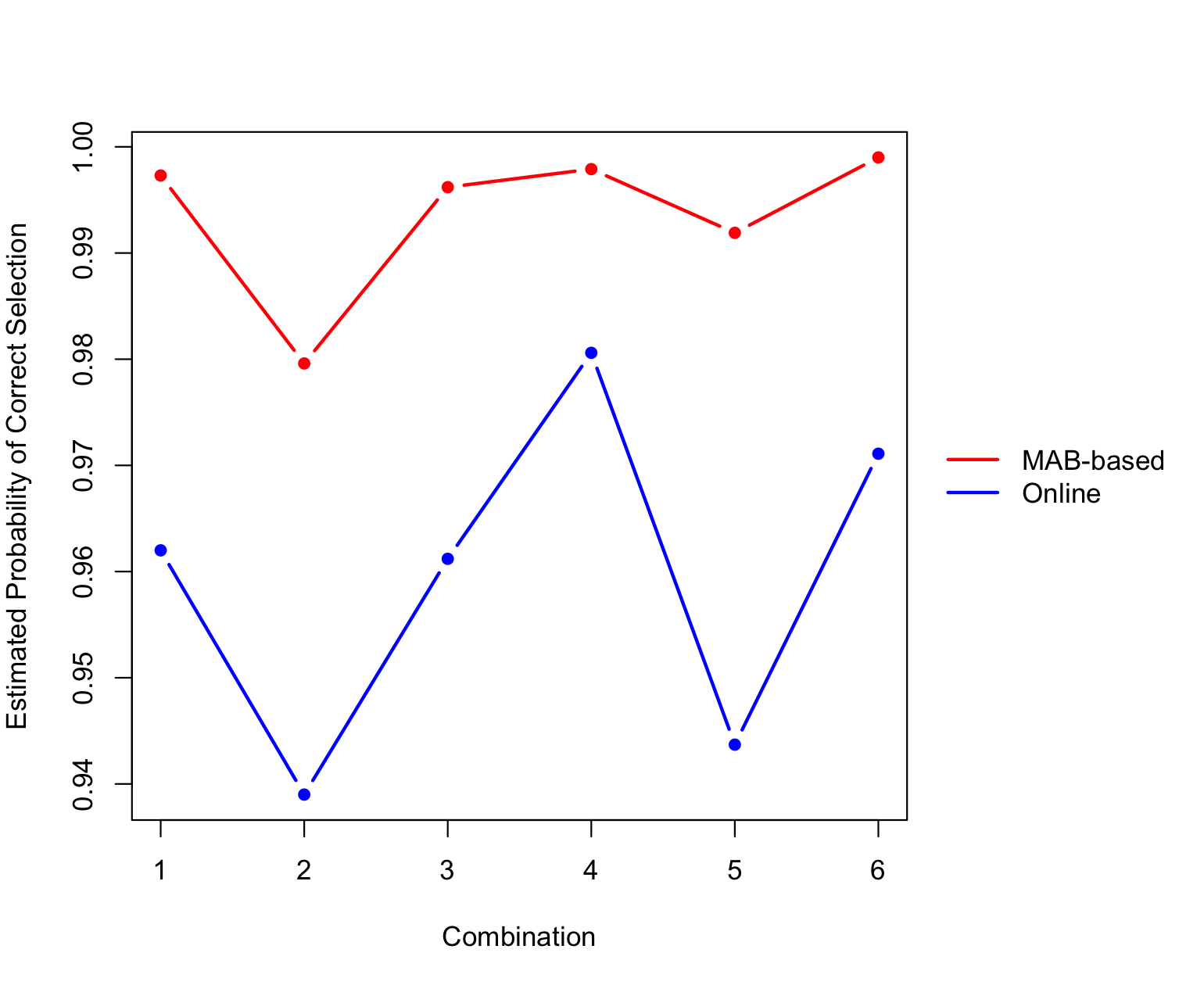}
        \caption{Estimated Probability of Correct Selection}
        \label{3.F7}
    \end{figure}
\section{Application}\label{3.S5}
This section illustrates the proposed methodology using the measure introduced in Subsection~\ref{3.S2.3.1}. The measure is the mean of $\log(TMB+1)$ scores applied to a real clinical sequencing cohort, the Memorial Sloan Kettering (MSK)-IMPACT $50000$\footnote{\url{https://www.cbioportal.org/study/clinicalData?id=msk_impact_50k_2026}}. The cohort is a pan-cancer clinical sequencing dataset containing $54,331$ tumors along with their matched normal samples, profiled through MSK-IMPACT. This dataset is used to identify the genetic ancestry which has minimum value of TMB-based measure defined in equation \eqref{3.TMB}.

The dataset contains several genetic ancestry categories, of which we focus on two namely ASJ\-EUR (Ashkenazi Jewish European ancestry), and nonASJ\-EUR (predominantly non-Ashkenazi European ancestry).
The dataset also includes ADMIX\_OTHER, AFR, EAS, SAS, and NAM groups denoting African, East Asian, South Asian, and Native or Admixed American ancestry, respectively, which are not used in the present analysis. We compare the two selected ancestry groups using the tumor mutational burden (TMB)-based measure defined in equation \eqref{3.TMB}.

After cleaning the dataset, we define strata based on cancer type described in Subsection~\ref{3.S2.3.1}. The individual patients serve as the ultimate sampling units (PSUs). For our study, observations with missing TMB scores are removed from the dataset. If multiple samples are available for the same patient within a stratum, they are replaced by the average TMB score across all of that patient's observations. As TMB scores are typically highly right-skewed and may contain zero values for certain patients, a logarithmic transformation of the form $\log(\text{TMB}+1)$ is applied to the TMB scores. Within each ancestry group, we treat the resulting data as a finite population and define strata according to the multistage sampling framework in Subsection \ref{3.S2.3.1} with patients being the primary sampling units. We considered the case in which strata are formed based on cancer type.


For cancer-type stratification, we restrict attention to five cancer types namely breast, colorectal, non-small cell lung, pancreatic, and prostate cancer, and include only the ASJ-EUR and nonASJ-EUR ancestry groups, since ADMIX$\_$OTHER has an insufficient number of observations in this subset. 
The number of observations in each population is provided in Table \ref{3.TA3}.
\begin{table}[h]
\caption{Number of observation for population under consideration}
\centering
\begin{tabular}{ccc}
\toprule
Stratification type&ASJ-EUR&nonASJ-EUR\\
\midrule
Cancer-type&$3713$&$13609$\\
\bottomrule
\end{tabular}
\label{3.TA3}
\end{table}
Both the algorithms described in Section \ref{3.S3} are then applied to determine the genetic ancestry group with the minimum value of $\mu_{TMB_i}$. The results are summarized in Table \ref{3.TA5}.
\begin{table}[h]
\caption{Application results for online and MAB-based algorithm for Cancer-type Stratification}
\centering
\begin{tabular}{cccccc}
\toprule
Algorithm&$\delta^*$&Stages&$N_1$&$N_2$&Extreme Population\\
\midrule
Online&$0.08$&&$530$&$437$&ASJ-EUR\\
MAB-based&$0.08$&$16$&$1038$&$842$&ASJ-EUR\\
\bottomrule
\end{tabular}
\label{3.TA5}
\end{table}
The first column of Table \ref{3.TA5} lists the algorithm and the second column gives the indifference margins. For the MAB-based algorithm, the third column provide the number of stages. The fourth and fifth columns of Table \ref{3.TA5} report the final PSU sample sizes. The final column  indicates the extreme population selected by algorithm.

From Table \ref{3.TA5}, we see that in cancer-type stratification, ASJ-EUR appears as the extreme group. Thus, ASJ-EUR patients consistently have lower TMB scores compared to nonASJ-EUR individuals and thus needs special consideration. This result alings with the study by \cite{taraszka2024comprehensive}, who reported lower TMB in ASJ-EUR individuals relative to nonASJ-EUR individuals in the Profile and Tempus cohorts, providing external support for our result.
To further validate this finding, we compute the population-level value of $\mu_{TMB_i}
$ in \eqref{3.TMB} using the complete dataset, treating it as the full population rather than a sample. The resulting values are $1.667002$ for ASJ-EUR and $1.753239$ for nonASJ-EUR, confirming that ASJ-EUR is indeed the population with the lower mean $\log(TMB+1)$ score. 
Hence, the result obtained from the sample selected through the proposed procedure agrees with that based on the complete population, correctly identifying the group with the lowest mean $\log(TMB+1)$ value while requiring fewer observations. Thus, the proposed procedures correctly identify the same extreme population as the full dataset, while using only a fraction of the observations, that is, $N_1=530, N_2=437$ for the online algorithm and $N_1=1038, N_2=842$ for the MAB-based algorithm, compared to $3713$ and $13609$ total observations for ASJ-EUR and nonASJ-EUR, respectively. This also shows the efficiency of our procedure.

\section{Identifying Anomalous Population}\label{3.S6}
An anomalous population is one whose characteristics of interest differs substantially from that of the remaining populations, typically by being exceptionally high or low. Such populations arise across several domains including those discussed earlier. For example, in economic studies, a region with exceptionally high income inequality relative to other regions may be regarded as an anomalous population. Similarly, in agriculture, a crop variety with extremely low productivity or extremely high productivity compared to the remaining varieties may also be treated as an anomalous population. In clinical trials, an anomalous treatment can be characterized in two ways. It may correspond to a treatment with severe side effects or low therapeutic efficacy, or it may represent a treatment that performs superior relative to the other treatments being compared. Anomaly detection has been studied extensively in the statistics and machine learning literature. For details on anomaly detection, we refer to \cite{dey2026integrating}, \cite{CASTRILLONCANDAS2022104885}, \cite{bouneffouf2020survey}, \cite{chandola2009anomaly} and others. Here, we argue that detecting anomalous population can be formulated directly within our extreme-population selection framework.

Formally, a population is considered anomalous if the value of its characteristic of interest is at least $\delta^*(>0)$ smaller than the minimum, or larger than the maximum, of the corresponding measures of the remaining populations. In the first case, the detection problem is equivalent to the minimum-selection problem described in Subsection \ref{3.S2.1}. In the second case, it reduces to the same framework by considering the negative of the measure of interest. The problem of identifying an anomalous population, whenever one exists, is therefore equivalent to the problem in Subsection \ref{3.S2.1}. As a result, the proposed online algorithm and the MAB-based algorithm can be directly used for identifying the anomalous population.

\section{Concluding Remarks}\label{3.S7}
Identifying the extreme population among a given set of $K(\geq 2)$ populations is an important problem across various fields. The selection is typically carried out using a measure, which may differ depending on the application area. We consider a measure that can be expressed as a continuous real-valued function of an $l$-dimensional parameter vector, possessing first-order partial derivatives. Since true value is unknown, we estimate it using generalized method of moment under multistage sampling design. A multistage sampling design partitions the population into different strata. Each stratum is then divided into primary sampling units (PSUs), which are further partitioned into substrata and each substratum contains the ultimate sampling units. In this article, we develop two sequential procedures, namely an online algorithm and an MAB-based algorithm, for identifying the extreme population. Without imposing parametric assumptions on the underlying distributions, we establish that both algorithms correctly identify the extreme population with a prescribed level of confidence. 

To explore the performance of both the algorithms, we used simulation study to find the population with lower Gini index, a measure of income inequality. The extensive simulation study confirms that the empirical performance of both algorithms is consistent with the theoretical results. In the genetics application, a measure derived from the tumor mutation burden score is considered and the practical applicability of the proposed algorithms is demonstrated using the real-world Memorial Sloan Kettering-IMPACT 50000 clinical sequencing cohort dataset. Both algorithms agree with each other in finding the extreme population obtained from the complete dataset, with both requiring fewer observations than the actual dataset. In fact, the online algorithm used fewer number of observations. This demonstrates that the proposed procedures are both accurate and practically cost-efficient. Finally, we show that the proposed methodologies extends naturally to the problem of detecting an anomalous population, when one exists. Together, these results suggest that the proposed methodologies are broadly applicable for extreme-population selection under realistic, heterogeneous sampling conditions.

\section{Appendix: }
\subsection{Assumptions for the Methodology}
\begin{assumption}\label{3.AN1}
   For each $j=1,2,\ldots,K$, $\widetilde{m}_i^j(\cdot,\bm{\theta}_i)$ is continuous at each $\bm{\theta}_i$ almost surely.
\end{assumption}
\begin{assumption}\label{3.AN2}
     There exists a function $d_i(\cdot)$ satisfying $E(d_i(\cdot))<\infty$ such that, for every $j = 1,2,\ldots,K$ and all $t$, the inequality $\left|\widetilde{m}_i^j(t,\bm{\theta}_i)\right| \le d_i(t)$ holds.
\end{assumption}
\begin{assumption}\label{3.AN3}
 The space $\Theta_i$ of parameters is compact.
\end{assumption}
\begin{assumption}\label{3.AN4}
    $E(\widetilde{\textbf{m}}_i(x,\bm{\theta}_i))$ is continuously differentiable at $\theta_{i0}$ and $\frac{1}{n_i}\sum_{j=1}^{n_i}\frac{\partial}{\partial\bm{\theta}_i}E(\widetilde{\textbf{m}}_{ij}(\bm{\theta}_i))\xrightarrow{a.s.} K_i,$ where $K_i$ is non-singluar matrix.
\end{assumption}
\begin{assumption}\label{3.AN5}
     The sequence $v_{in_i}(\bm{\theta}_i)=\frac{1}{\sqrt{n_i}}\sum_{j=1}^{n_i}\{\widetilde{\textbf{m}}_{ij}(\bm{\theta}_i)-E(\widetilde{\textbf{m}}_{ij}(\bm{\theta}_i))\}$ is stochastically equicontinuous.
\end{assumption}
\begin{assumption}\label{3.AN6} $\text{sup}_{\bm{\theta}_i\in\Theta_i}E|\widetilde{\textbf{m}}_{ij}(\bm{\theta}_i)|^3<\infty.$
\end{assumption}
\begin{assumption}\label{3.AN7}
$\text{lim}_{n_i\rightarrow\infty}\sum_{j=1}^{n_i}Var(\widetilde{\textbf{m}}_{ij}(\bm{\theta}_i))/j^2<\infty \text{ and } \text{lim}_{n_i\rightarrow\infty}\frac{1}{n_i}\sum_{j=1}^{n_i}{Var(\widetilde{\textbf{m}}_{ij}(\bm{\theta}_i))}=W_{i0}$.
\end{assumption}
\begin{assumption}\label{3.AN8}
    $A_{in_i}\xrightarrow{a.s.}A_{i0}$, where $A_{in_i}$ is positive definite matrix with probability 1 for each $n_i$.
\end{assumption}
\begin{assumption}\label{3.AN9}
    Let $s_i,s_i'=1,2,..,S_i, c_{is_i}=1,2,..,n_{is_i}, c'_{is_i'}=1,2,..,n_{is_i'}, b_{ic_{is_i}}, b_{c'_{is_i'}}=1,2 \text{ and } h,h'=1,2,..,k .\\ \text{ If } s_i\neq s_i' \text{ or } c_{is_i}\neq c'_{is_i'} , \{x_{s_ic_{is_i}b_{ic_{is_i}}h}\} \text{ and }  \{x_{s'_ic'_{is_i'}b_{ic'_{is_i'}}h'}\}$ are independent unless they are dependent.
\end{assumption}
\begin{assumption}\label{3.AN10}
    For $s_i \neq s_i'$, $\{x_{s_i c_{i s_i} b_{i c_{i s_i}} h}\}$ and $\{x_{s'_i c'_{i s'_i} b_{i c_{i s'_i}} h'}\}$ are not required to follow the same distribution.
\end{assumption}
\begin{assumption}\label{3.AN11}
    The PSU-level variables are identically distributed across different PSUs within each stratum.
\end{assumption}
\begin{assumption}\label{3.AN16}
 $\mathbb{E}[|X_i| \mid s_i] < \infty$.
\end{assumption}
\begin{assumption}\label{3.AN12}
  Within each stratum, the stratum-level variables have finite variance.
\end{assumption}
\begin{assumption}\label{3.AN13}
    The $(2+\delta)^{th}$ moments of the PSU totals exist for some $\delta>0$ and are uniformly bounded above by $B_\delta$.
\end{assumption}
\begin{assumption}\label{3.AN14}
    For each $p \in [0,1]$, both $F_i(x)$ and its derivative $\frac{dF_i(x)}{dx}$ are continuous for all $x$ in a neighborhood $J$ of the $p^{th}$ population quantile $z(p)$.
\end{assumption}
\begin{assumption}\label{3.AN15}
For some constant $0 < C < \infty$,
$
Var\big\{\widehat{F}_{in_i}(x+\delta) - \widehat{F}_{in_i}(x)\big\}
\leq C n_i^{-1} |\delta|,
$
for all sample sizes $n_i$, whenever both $x$ and $x+\delta$ belong to the neighborhood $J$.
\end{assumption}
\noindent The assumptions \ref{3.AN1}-\ref{3.AN16} are taken from \cite{BD2005}.
\renewcommand{\theequation}{A.\arabic{equation}} 
\subsection{Proof of Lemmas}\label{3.8.2}
\begin{lemma}\label{3.L1}
If $g\left(\bm{\theta}\right):\mathbb{R}^p\rightarrow \mathbb{R}$ is a function having continuous first-order partial derivative, then, as $n_i\rightarrow\infty$,
$\sqrt{n_i}\left(g\left(\bm{\widehat{\theta}}_{in_i}\right)-g\left(\bm{\theta}_i\right)\right)\xrightarrow{d}N\left(0,\xi_i^2\right),$
where $\xi_i^2$ is asymptotic variance of $\sqrt{n_i}\left(g\left(\bm{\widehat{\theta}}_{in_i}\right)\right)$ and $\bm{\widehat{\theta}}_{in_i}$ is GMM estimator of $\bm{\theta}_i$.
\end{lemma}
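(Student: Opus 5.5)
This is the multivariate delta method applied to the asymptotic normality \eqref{3.2} of the GMM estimator. I would first record the two ingredients already established in Subsection \ref{3.S2.2} under \ref{3.AN1}--\ref{3.AN16} (following \cite{BD2005}): consistency $\widehat{\bm\theta}_{in_i}\xrightarrow{P}\bm\theta_i$, and $\sqrt{n_i}(\widehat{\bm\theta}_{in_i}-\bm\theta_i)\xrightarrow{d}N(0,\Sigma_i)$. Write $\nabla g(\bm\theta_i)$ for the gradient of $g$ at $\bm\theta_i$. The target is the identification
\[
\xi_i^2=\nabla g(\bm\theta_i)'\,\Sigma_i\,\nabla g(\bm\theta_i).
\]

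\emph{Step 1 (expansion).} Since $g$ has continuous first-order partials, it is differentiable. By the mean value theorem along the segment joining $\bm\theta_i$ and $\widehat{\bm\theta}_{in_i}$, there is an intermediate point $\bm\theta^*_{in_i}$ on that segment with
\[
g(\widehat{\bm\theta}_{in_i})-g(\bm\theta_i)=\nabla g(\bm\theta^*_{in_i})'(\widehat{\bm\theta}_{in_i}-\bm\theta_i).
\]
Multiplying by $\sqrt{n_i}$ gives
\[
\sqrt{n_i}\bigl(g(\widehat{\bm\theta}_{in_i})-g(\bm\theta_i)\bigr)=\nabla g(\bm\theta_i)'\sqrt{n_i}(\widehat{\bm\theta}_{in_i}-\bm\theta_i)+R_{n_i},
\]
where
\[
R_{n_i}=\bigl(\nabla g(\bm\theta^*_{in_i})-\nabla g(\bm\theta_i)\bigr)'\sqrt{n_i}(\widehat{\bm\theta}_{in_i}-\bm\theta_i).
\]

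\emph{Step 2 (remainder).} Since $\|\bm\theta^*_{in_i}-\bm\theta_i\|\le\|\widehat{\bm\theta}_{in_i}-\bm\theta_i\|$, consistency gives $\bm\theta^*_{in_i}\xrightarrow{P}\bm\theta_i$. Continuity of the partial derivatives and the continuous mapping theorem then give $\nabla g(\bm\theta^*_{in_i})\xrightarrow{P}\nabla g(\bm\theta_i)$. The factor $\sqrt{n_i}(\widehat{\bm\theta}_{in_i}-\bm\theta_i)$ converges in distribution, so it is $O_P(1)$. Hence $R_{n_i}=o_P(1)\cdot O_P(1)=o_P(1)$.

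\emph{Step 3 (limit).} The leading term is a fixed linear functional of an asymptotically normal vector. By the continuous mapping theorem (equivalently, Cramér--Wold) it converges in distribution to $N\bigl(0,\nabla g(\bm\theta_i)'\Sigma_i\nabla g(\bm\theta_i)\bigr)$. Slutsky's theorem then yields the claimed limit with $\xi_i^2$ as above.

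\emph{Main obstacle.} The only delicate point is the remainder in Step 2, which is exactly where continuity of the partials (not mere differentiability) is used. A second point to address is degeneracy. If $\nabla g(\bm\theta_i)'\Sigma_i\nabla g(\bm\theta_i)=0$, the limit is a point mass, which the statement interprets as $N(0,0)$. For the later variance-stabilised procedures one needs $\xi_i^2>0$, so I would add a remark that this holds whenever $\nabla g(\bm\theta_i)\neq\bm 0$ and $\Sigma_i$ is positive definite. The latter follows from nonsingularity of $K_i$ in \ref{3.AN4} together with positive definiteness of $A_{i0}$ and $W_{i0}$.
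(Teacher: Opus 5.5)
Your proposal is correct and follows essentially the same route as the paper. The paper also expands $g$ by the mean value theorem at an intermediate point, uses consistency and continuity of the partials to get the gradient $\bar H_i$ there converging in probability to $H_i$, and concludes with \eqref{3.2} and Slutsky. The only difference is cosmetic: the paper applies Slutsky directly to $\bar H_i\sqrt{n_i}(\widehat{\bm\theta}_{in_i}-\bm\theta_i)$ instead of splitting off your remainder $R_{n_i}$, and it gives $\xi_i^2$ as the same quadratic form of $\Sigma_i$ in the gradient.

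Your closing remark that $\xi_i^2>0$ goes beyond the lemma. It would need positive definiteness of $W_{i0}$ as an extra hypothesis, since \ref{3.AN7} only defines $W_{i0}$ as a limit of variances and does not assert it.
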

\begin{proof}{Proof}
Since $g$ is a continuous function from $\mathbb{R}^p $ to $\mathbb{R}$, the Taylor's series expansion around $\bm{\theta}_i=(\theta_{i1},\theta_{i2},..,\theta_{il})$ is given by  
$g\left(\bm{\theta}\right)=g\left(\bm{\theta}_i\right)+\sum_{j=1}^{l}\frac{\partial g\left(\bm{\theta}\right)}{\partial \theta_{ij}}\bigg|_{\bm{\theta}_i+p'(\bm{\theta}-\bm{\theta}_i)}(\theta_{j}-\theta_{ij}).
$
Consequently, we get
\begin{eqnarray}\label{3.7}
\sqrt{n_i}\left(g\left(\bm{\widehat{\theta}}_{in_i}\right)-g\left(\bm{\theta}_i\right)\right)=\sum_{j=1}^{l}\frac{\partial g\left(\bm{\theta}\right)}{\partial \theta_{ij}}\bigg|_{\bm{\theta}_i+p'(\bm{\widehat{\theta}}_{in_i}-\bm{\theta}_i)}\sqrt{n_i}(\widehat{\theta}_{ijn_i}-\theta_{ij}).
\end{eqnarray}
Consider the $1\times l$ matrix $\bar{H}_i$ whose $j^{th}$ entry is $\frac{\partial g\left(\bm{\theta}\right)}{\partial \theta_{ij}}\bigg|_{\bm{\theta}_i+p'(\bm{\widehat{\theta}}_{in_i}-\bm{\theta}_i)}$. Since the first-order partial derivatives of $g$ are continuous and $\widehat{\bm{\theta}}_{in_i}\xrightarrow{P}\bm{\theta}_i$, it follows from the continuous mapping theorem that
$
\bar{H}_i\xrightarrow{P}H_i,
$
where $H_i$ is $1\times p$ matrix whose $j^{th}$ entry is $\frac{\partial g\left(\bm{\theta}\right)}{\partial \theta_{ij}}\bigg|_{\bm{\theta}_i}.$
Consequently, using equation \eqref{3.2} and Slutsky theorem, we get
$\bar{H}_i\left(\sqrt{n_i}\left(\widehat{\bm\theta}_{in_i}-\bm{\theta}_i\right)\right)\xrightarrow{d}N\left(0,\xi_i^2\right),$
where $\xi_i^2=H_i'\Sigma_iH_i$.
Therefore, using equation \eqref{3.7}, it follows that
$
\sqrt{n_i}\left(g\left(\bm{\widehat{\theta}}_{in_i}\right)-g\left(\bm{\theta}_i\right)\right)\xrightarrow{d}N\left(0,\xi_i^2\right).
$
Hence the result.
\end{proof}
\begin{lemma}\label{3.L2}
If $Z_n=f\left(X_n,Y_n\right)$ is a sequence of random variable such that $Z_n|Y_n\xrightarrow{d}N(0,1)$ and $Y_n\xrightarrow{P}c,$ where $c>0$ is a constant, then
$Z_n\xrightarrow{d}N(0,1).$
\end{lemma}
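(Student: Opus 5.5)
The plan is to work with distribution functions and condition on $Y_n$. Fix $z\in\mathbb{R}$. By the tower property,
\[
P(Z_n\le z)=E\left[P\left(Z_n\le z\mid Y_n\right)\right]=E\left[G_n(z;Y_n)\right],
\]
where $G_n(z;y)=P(Z_n\le z\mid Y_n=y)$ is a regular conditional distribution function. I will interpret the hypothesis $Z_n\mid Y_n\xrightarrow{d}N(0,1)$ as follows: for every sequence $y_n\to c$, $G_n(z;y_n)\to\Phi(z)$ at every $z$. This is the form in which the lemma will be used later, where $Y_n$ plays the role of a variance estimator such as $V_{in_i}^2$ converging to $\xi_i^2$. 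An equivalent reading is that the convergence is uniform over $y$ in a small neighbourhood $(c-\varepsilon_0,c+\varepsilon_0)$, with $\varepsilon_0<c$ so that the neighbourhood stays positive. Under this reading the goal reduces to showing $E[G_n(z;Y_n)]\to\Phi(z)$. Since $\Phi$ is continuous everywhere, convergence at every $z$ gives $Z_n\xrightarrow{d}N(0,1)$.

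The key steps, in order, are these.
\begin{enumerate}
\item Split the expectation on the event $A_n=\{|Y_n-c|<\varepsilon\}$ and its complement, for small $\varepsilon>0$:
\[
\left|E[G_n(z;Y_n)]-\Phi(z)\right|\le E\left[\left|G_n(z;Y_n)-\Phi(z)\right|\textbf{1}_{A_n}\right]+2P(A_n^c).
\]
This uses the fact that $0\le G_n,\Phi\le1$.
\item The term $2P(A_n^c)$ tends to $0$ because $Y_n\xrightarrow{P}c$.
\item For the first term, bound it by $\sup_{|y-c|<\varepsilon}|G_n(z;y)-\Phi(z)|$. Under the uniform reading this tends to $0$ directly. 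Under the sequential reading, argue by contradiction. If the supremum did not tend to $0$ along a suitable diagonal choice $\varepsilon=\varepsilon_n\downarrow0$, there would be points $y_n$ with $|y_n-c|<\varepsilon_n$ and $|G_n(z;y_n)-\Phi(z)|\ge\eta>0$ along a subsequence. This contradicts the convergence $G_n(z;y_n)\to\Phi(z)$ for sequences $y_n\to c$.
\item Combine the two bounds and let $n\to\infty$, then $\varepsilon\to0$. Alternatively, a bounded-convergence-in-probability argument applies, since $G_n(z;Y_n)\xrightarrow{P}\Phi(z)$ and the quantity is bounded by $1$. Either route gives $P(Z_n\le z)\to\Phi(z)$.
\end{enumerate}

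I expect the main obstacle to be conceptual rather than computational. The hypothesis ``$Z_n\mid Y_n\xrightarrow{d}N(0,1)$'' must be made precise in a way that is strong enough to pass to the unconditional law. Pointwise convergence at the single value $y=c$ would not suffice, because $Y_n$ typically never equals $c$. So the first thing I would do is fix the sequential (equivalently, locally uniform) formulation above and check that it holds in the intended application. There, conditionally on the estimated variance, the standardized estimator $\widetilde{g}$ is a fixed linear combination of two asymptotically normal pieces with variance equal to $t$. Once that formulation is in place, the rest is the routine splitting argument described in the steps above.
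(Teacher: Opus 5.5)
Your proposal is correct and follows essentially the same route as the paper: condition on $Y_n$ via the tower property, split on $\{|Y_n-c|\le\epsilon\}$, bound the in-neighbourhood term by $\sup_{|y-c|\le\epsilon}|P(Z_n\le z\mid Y_n=y)-\Phi(z)|$, and bound the remaining term by $P(|Y_n-c|>\epsilon)$. Your explicit locally uniform (or sequential) reading of the hypothesis, together with the two-sided absolute-value bound, makes precise what the paper assumes without comment when it asserts that the supremum term vanishes.
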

\begin{proof}{Proof}
For $x\in\mathbb{R}$, define the event $A_n=\{Z_n\leq x\}$. Then,
$P(A_n)=P(Z_n\leq x)=E(P(Z_n\leq x|{Y_n})).$
Now, for any $\epsilon>0$, we write the above expectation as
\begin{eqnarray*}
E(P(Z_n\leq x|{Y_n}))&=&E(P(Z_n\leq x|{Y_n})\textbf{1}(|Y_n-c|\leq\epsilon))+E(P(Z_n\leq x|{Y_n})\textbf{1}(|Y_n-c|>\epsilon))\\&\leq& E(\Phi(x)\textbf{1}(|Y_n-c|\leq\epsilon))+E((P(Z_n\leq x|{Y_n})-\Phi(x))\textbf{1}(|Y_n-c|\leq\epsilon))\\&&+P(|Y_n-c|>\epsilon)
\\&=& \Phi(x)P(|Y_n-c|\leq\epsilon)+E((P(Z_n\leq x|{Y_n})-\Phi(x))\textbf{1}(|Y_n-c|\leq\epsilon))\\&&+ P(|Y_n-c|>\epsilon).
\end{eqnarray*}
Since
$
E\left((P(Z_n\leq x\mid Y_n)-\Phi(x))\textbf{1}(|Y_n-c|\leq\epsilon)\right)\leq\sup_{|Y_n-c|\leq\epsilon}|P(Z_n\leq x|Y_n)-\Phi(x)|,$
$Y_n\xrightarrow{P}c$ and $Z_n|Y_n\xrightarrow{d}N(0,1)$, from the above inequality, we get
$
P(Z_n\leq x)\rightarrow\Phi(x).
$
Hence,
$
Z_n\xrightarrow{d}N(0,1).
$
\end{proof}
\begin{lemma}\label{3.L3}
For $i=1,2,\ldots,K$, show that if $n_i>n_0$, $n_i \geq \frac{V_{in_i}^2}{t}$ and $n_0,n_i,n_i-n_0\rightarrow\infty$ as $t\rightarrow0$, then
\[\frac{\widetilde{g}\left(\widehat{\bm{\theta}}_{in_0},\widehat{\bm{\theta}}_{i(n_i-n_0)}\right)-g\left(\bm{\theta}_i\right)}{\sqrt{t}}\xrightarrow{d}N(0,1),\;\widetilde{g}\left(\widehat{\bm{\theta}}_{in_0},\widehat{\bm{\theta}}_{i(n_{i}-n_0)}\right)=c_{i} g\left(\widehat{\bm{\theta}}_{in_0}\right)+(1-c_{i})g\left(\widehat{\bm{\theta}}_{i\left(n_{i}-n_0\right)}\right),\]
where
$c_{i}=\left(n_0/n_{i}\right)\left(1+\sqrt{1-\left(n_{i}/n_0\right)\left(1-\left(\left(n_{i}-n_0\right)t/V_{in_{i}}^2\right)\right)}\right),$
and $V_{in_i}^2\xrightarrow{P}\xi_i^2$.
\end{lemma}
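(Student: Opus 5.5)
The plan is to write the standardized estimator as a weighted sum of two asymptotically independent standard normals whose weights have squares summing to one in the limit. Put
\[
Z_{1}=\frac{\sqrt{n_0}\left(g\left(\widehat{\bm{\theta}}_{in_0}\right)-g\left(\bm{\theta}_i\right)\right)}{\xi_i},\qquad
Z_{2}=\frac{\sqrt{n_i-n_0}\left(g\left(\widehat{\bm{\theta}}_{i(n_i-n_0)}\right)-g\left(\bm{\theta}_i\right)\right)}{\xi_i}.
\]
Since $c_i+(1-c_i)=1$, we can subtract $g(\bm{\theta}_i)$ inside the convex combination and obtain
\[
\frac{\widetilde{g}\left(\widehat{\bm{\theta}}_{in_0},\widehat{\bm{\theta}}_{i(n_i-n_0)}\right)-g\left(\bm{\theta}_i\right)}{\sqrt{t}}=A_tZ_1+B_tZ_2,
\]
with $A_t=c_i\xi_i/\sqrt{n_0t}$ and $B_t=(1-c_i)\xi_i/\sqrt{(n_i-n_0)t}$. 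Because $n_0\to\infty$ and $n_i-n_0\to\infty$, Lemma \ref{3.L1} gives $Z_1\xrightarrow{d}N(0,1)$ and $Z_2\xrightarrow{d}N(0,1)$. The estimator $\widehat{\bm{\theta}}_{i(n_i-n_0)}$ is computed from the PSUs drawn after the pilot stage. These PSUs are drawn with replacement, independently of the pilot PSUs, so by \ref{3.AN9} the statistics $Z_1$ and $Z_2$ are independent, and hence $(Z_1,Z_2)\xrightarrow{d}N(0,I_2)$.

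The second step is the algebra of the weights. The defining formula for $c_i$ is exactly the root of the quadratic
\[
c^2\frac{V_{in_i}^2}{n_0}+(1-c)^2\frac{V_{in_i}^2}{n_i-n_0}=t
\]
(this is the computation already displayed in Subsection \ref{3.S2.2}). Here I will check two points. First, the discriminant $1-(n_i/n_0)\left(1-(n_i-n_0)t/V_{in_i}^2\right)$ is nonnegative: the quadratic's minimum over $c$ equals $V_{in_i}^2/n_i$, and the hypothesis $n_i\ge V_{in_i}^2/t$ gives $V_{in_i}^2/n_i\le t$, so a real root exists. Second, the chosen root is such that the weights stay bounded. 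Dividing the quadratic by $t$ gives
\[
A_t^2+B_t^2=\frac{\xi_i^2}{V_{in_i}^2}\xrightarrow{P}1,
\]
using $V_{in_i}^2\xrightarrow{P}\xi_i^2$. In particular $A_t^2\le \xi_i^2/V_{in_i}^2$ and $B_t^2\le \xi_i^2/V_{in_i}^2$, so both weights are bounded in probability.

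The main obstacle is that $A_t$ and $B_t$ are random. They depend on $V_{in_i}^2$, which uses the whole sample, so they are not independent of $(Z_1,Z_2)$. My first attempt is to replace $V_{in_i}^2$ by $\xi_i^2$ in the definition of $c_i$. This yields a deterministic $c_i^{*}$ and deterministic weights $A_t^{*},B_t^{*}$ with $A_t^{*2}+B_t^{*2}=1$ exactly. The root formula for $c$ is continuous in $V^2$ wherever the discriminant is positive, and the weights are bounded. Together these should give $A_t-A_t^{*}\xrightarrow{P}0$ and $B_t-B_t^{*}\xrightarrow{P}0$. I must take care that this holds uniformly even when $n_i/n_0$ diverges. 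To do so I will work directly with the normalized quantities $A_t$ and $B_t$ rather than with $c_i$, because the relation $A^2+B^2=\xi_i^2/V^2$ is a smooth constraint on bounded quantities. Near-zero discriminants, if they occur, would be handled by a separate bound.

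Then Slutsky's theorem gives
\[
A_tZ_1+B_tZ_2=A_t^{*}Z_1+B_t^{*}Z_2+o_P(1).
\]
For the deterministic part, take any subsequence of $t\to0$. Since $(A_t^{*},B_t^{*})$ lies on the unit circle, we can extract a further subsequence along which it converges to some $(a,b)$ with $a^2+b^2=1$. Along that subsequence $A_t^{*}Z_1+B_t^{*}Z_2\xrightarrow{d}aW_1+bW_2\sim N(0,1)$, where $W_1,W_2$ are independent standard normals. Because every subsequence has a further subsequence with the same $N(0,1)$ limit, the full family converges to $N(0,1)$. If the uniform approximation of the weights proves delicate, I would instead condition on $V_{in_i}^2$ and invoke Lemma \ref{3.L2} with $Y_n=V_{in_i}^2\xrightarrow{P}\xi_i^2>0$.
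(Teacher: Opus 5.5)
Your route differs from the paper's. The paper makes $c_i$ a constant by conditioning on $V_{in_i}^2$. It then asserts, from the independence of PSUs, that the combination standardized by $\xi_i\sqrt{c_i^2/n_0+(1-c_i)^2/(n_i-n_0)}$ is conditionally asymptotically $N(0,1)$, and removes the conditioning with Lemma~\ref{3.L2}. Finally it applies Slutsky to replace $\xi_i$ by $V_{in_i}$, and the defining quadratic turns the normalizer into $\sqrt{t}$.

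You use the same identity (your $A_t^2+B_t^2=\xi_i^2/V_{in_i}^2$) but argue unconditionally. You show the random weights are asymptotically deterministic, then apply Slutsky and a subsequence argument, which also covers $n_0/n_i$ failing to converge. This gains rigor exactly where the paper is thin. As you note, $V_{in_i}^2$ is built from the same PSUs as $Z_1,Z_2$, so conditioning on it changes the law of $(Z_1,Z_2)$. The paper's conditional CLT, which is the hypothesis Lemma~\ref{3.L2} needs, is asserted rather than proved. For the same reason, your fallback is precisely the paper's argument and is not safer than your main route.

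The step everything rests on, $A_t-A_t^{*}\xrightarrow{P}0$, is not yet done, and your particular $c_i^{*}$ can fail to exist. With $\rho=n_0/n_i$ and $\tau=n_it/V_{in_i}^2\ge 1$, one has $c_i=\rho+\sqrt{\rho(1-\rho)(\tau-1)}$; that is, the quantity under the root in the paper's formula is $(1-\rho)(\tau-1)/\rho$. Substituting $\xi_i^2$ for $V_{in_i}^2$ makes it negative whenever $n_it<\xi_i^2$, and the hypothesis $n_it\ge V_{in_i}^2$ does not exclude that. Because of the division by $\rho\to 0$ (the paper's regime), this is not a near-zero perturbation that continuity of the root formula can absorb.

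The repair is a polar form in which $\rho$ enters only deterministically. With $\cos\phi_t=\sqrt{\rho}$ and $u_t=V_{in_i}/\sqrt{n_it}\in(0,1]$, a direct computation gives
\[
(A_t,B_t)=\frac{\xi_i}{V_{in_i}}\bigl(\cos(\phi_t-\arccos u_t),\,\sin(\phi_t-\arccos u_t)\bigr).
\]
Take $n_0,n_i$ nonrandom, as in the use $n_i=C_i$ in Lemma~\ref{3.L5}. Put $u_t^{*}=\min\{1,\xi_i/\sqrt{n_it}\}$ and $(A_t^{*},B_t^{*})=(\cos(\phi_t-\arccos u_t^{*}),\sin(\phi_t-\arccos u_t^{*}))$; this coincides with your choice whenever $n_it\ge \xi_i^2$. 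Since $\min\{1,\cdot\}$ is $1$-Lipschitz and $\sqrt{n_it}\ge V_{in_i}$, we get $|u_t-u_t^{*}|\le |V_{in_i}-\xi_i|/V_{in_i}$. This bound is free of $\rho$ and tends to $0$ in probability. Uniform continuity of $\arccos$ on $[0,1]$ and $\xi_i/V_{in_i}\xrightarrow{P}1$ then give $(A_t,B_t)-(A_t^{*},B_t^{*})\xrightarrow{P}0$. Your Slutsky and subsequence steps finish the proof.
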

\begin{proof}{Proof}
Using Lemma \ref{3.L1}, we get
\begin{eqnarray}\label{3.8}
\sqrt{n_0}\left(g\left(\bm{\widehat{\theta}}_{in_{0}}\right)-g\left(\bm{\theta}_i)\right)\right)\xrightarrow{d}N\left(0,\xi_i^2\right),\;\text{and } \sqrt{n_i-n_0}\left(g\left(\bm{\widehat{\theta}}_{i(n_i-n_0)}\right)-g\left(\bm{\theta}_i\right)\right)\xrightarrow{d}N\left(0,\xi_i^2\right).
\end{eqnarray}
If $V_{in_i}^2$ is known, then $c_i$ becomes a constant. Since PSUs are independent to each other, using equation \eqref{3.8}, we get
$\frac{\left(c_ig\left(\bm{\widehat{\theta}}_{in_{0}}\right)+(1-c_i)g\left(\bm{\widehat{\theta}}_{i(n_i-n_0)}\right)-g\left(\bm{\theta}_i\right)\right)}{\sqrt{\frac{c_i^2\xi_i^2}{n_0}+\frac{(1-c_i)^2\xi_i^2}{n_i-n_0}}}\Bigg|_{V_{in_i}^2}\xrightarrow{d}N\left(0,1\right).$
Since $V_{in_i}^2\xrightarrow{P}\xi_i^2$, using Lemma \ref{3.L2}, we get
$\frac{c_ig\left(\bm{\widehat{\theta}}_{in_{0}}\right)+(1-c_i)g\left(\bm{\widehat{\theta}}_{i(n_i-n_0)}\right)-g\left(\bm{\theta}_i\right)}{\sqrt{\frac{c_i^2\xi_i^2}{n_0}+\frac{(1-c_i)^2\xi_i^2}{n_i-n_0}}}\xrightarrow{d}N\left(0,1\right).$
Consequently, using $V_{in_i}^2\xrightarrow{P}\xi_i^2$ and Slutsky theorem, we get
$\frac{c_ig\left(\bm{\widehat{\theta}}_{in_{0}}\right)+(1-c_i)g\left(\bm{\widehat{\theta}}_{i(n_i-n_0)}\right)-g\left(\bm{\theta}_i\right)}{V_{in_i}\sqrt{\frac{c_i^2}{n_0}+\frac{(1-c_i)^2}{n_i-n_0}}}\xrightarrow{d}N\left(0,1\right).$
From this, we conclude that
$\frac{\widetilde{g}\left(\widehat{\bm{\theta}}_{in_0},\widehat{\bm{\theta}}_{i(n_i-n_0)}\right)-g\left(\bm{\theta}_i\right)}{\sqrt{t}}\xrightarrow{d}N(0,1).$
\end{proof}
\begin{lemma}\label{3.L4}
For $i^{th}$ population and given $l>0$, let $N_i(\leq H_i)$ denote the smallest integer   
$n_i(\geq n_{0})$, for which
\begin{eqnarray}\label{3.10}
n_{i}\geq\frac{1}{t}\left(V_{in_{i}}^2+\frac{1}{n_{i}^l}\right)=\widehat{C}_{i} \text{ and } n_{is_i}\geq\widehat{C}_{is_i}=\widehat{C}_{i}a_{is_i}, \text{ for all } s_i=1,2,..,S_i,
\end{eqnarray}
where $t>0$ is some constant.
For each $i$, based on final PSU size $N_{i}$, define $c_{i1}$ and $c_{i2}$ as follows
\begin{equation}\label{3.11}
  c_{ij} =
  \begin{cases}
    1, & \text{if } M_{ij}=n_0 \\\\
    \frac{n_0}{M_{ij}}\left(1+\sqrt{1-\frac{M_{ij}}{n_0}\left(1-\frac{(M_{ij}-n_0)t}{V_{iM_{ij}}^2}\right)}\right), &\text{ if }M_{ij}\neq n_0,
  \end{cases}
\end{equation}
\begin{eqnarray}\label{3.13}
n_0=min\left\{H_i,max\left\{2,\left\lceil\left(\frac{1}{t}\right)^{1/(l+1)}\right\rceil\right\}\right\},
\;
n_{0s_i}=min\left\{H_{is_i},max\left\{2,\left\lceil\left(\frac{1}{t}\right)^{1/(l+1)}\times a_{is_i}\right\rceil\right\}\right\},
\end{eqnarray}
where $M_{i1}=N_i$, $M_{i2}=C_i$, $C_i=\left\lceil\frac{\xi_i^2}{t}\right\rceil$ and $V_{in_i}^2$ is consistent estimator of $\xi_i^2$. Then, as $t\rightarrow0$, following holds
 $(i)$ $\frac{c_{ij}}{\sqrt{n_0t}}\xrightarrow{P}0,$ for $j=1,2,\;$
    $(ii)$ $\frac{c_{ij}}{\sqrt{(C_i-n_0)t}}\xrightarrow{P}0,$ for $j=1,2,\;$
    $(iii)$ $\frac{c_{i1}}{\sqrt{(N_i-n_0)t}}\xrightarrow{P}0.$
\end{lemma}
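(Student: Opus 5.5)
The approach is to make the radicand in \eqref{3.11} explicit. For $M=M_{ij}\neq n_0$, write
\[
\frac{Mt}{V_{iM}^2}=1+\varepsilon_M .
\]
Expanding gives
\[
1-\frac{M}{n_0}\left(1-\frac{(M-n_0)t}{V_{iM}^2}\right)
=1-\frac{M}{n_0}+\frac{M}{n_0}(1+\varepsilon_M)-(1+\varepsilon_M)
=\varepsilon_M\left(\frac{M}{n_0}-1\right).
\]
Hence
\[
c_{ij}=\frac{n_0}{M}+\frac{n_0}{M}\sqrt{\varepsilon_M\left(\frac{M}{n_0}-1\right)}
\le \frac{n_0}{M}+\sqrt{\varepsilon_M\,\frac{n_0}{M}} .
\]
Everything then reduces to two facts: $Mt\xrightarrow{P}\xi_i^2$, and $\varepsilon_M\xrightarrow{P}0$ with $\varepsilon_M\ge 0$ where a real square root is needed.

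\textbf{Step 1: the asymptotics of $M_{ij}$.} By \eqref{3.13}, $n_0\asymp t^{-1/(l+1)}\to\infty$ and $n_0t\asymp t^{l/(l+1)}\to 0$.
\begin{itemize}
\item For $M_{i2}=C_i=\lceil \xi_i^2/t\rceil$, we have $C_it\to\xi_i^2$ deterministically and $C_i\to\infty$. Consistency of $V_{in_i}^2$ then gives $V_{iC_i}^2\xrightarrow{P}\xi_i^2$, so $\varepsilon_{C_i}\xrightarrow{P}0$.
\item For $M_{i1}=N_i$, the stopping rule \eqref{3.10} gives $N_it\ge V_{iN_i}^2+N_i^{-l}>V_{iN_i}^2$, so $\varepsilon_{N_i}\ge 0$ and the square root is real. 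Minimality of $N_i$ gives $(N_i-m')t<V_{i(N_i-m')}^2+(N_i-m')^{-l}$ whenever $N_i>n_0$. Moreover $N_i\ge n_0\to\infty$, and $n_0^{-l}/t\cdot t\to 0$ because of the choice of exponent $1/(l+1)$ in \eqref{3.13}.
\item Combining these two inequalities with consistency of $V^2$ along the random index $N_i$ yields $N_it\xrightarrow{P}\xi_i^2$ and $\varepsilon_{N_i}\xrightarrow{P}0$. I would justify consistency along the random index using the uniform continuity in probability (Anscombe-type) condition assumed in Theorem \ref{3.TH2}, or a.s. convergence of $V^2$.
\item Since $N_it\to\xi_i^2>0$ while $n_0t\to 0$, the event $\{N_i=n_0\}$, on which $c_{i1}=1$, has probability tending to $0$, so that case is negligible.
\end{itemize}

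\textbf{Step 2: part $(i)$.} Dividing the bound on $c_{ij}$ by $\sqrt{n_0t}$ gives
\[
\frac{c_{ij}}{\sqrt{n_0t}}\le \frac{\sqrt{n_0}}{M\sqrt t}+\sqrt{\frac{\varepsilon_M}{Mt}}
=\sqrt{\frac{n_0t}{(Mt)^2}}+\sqrt{\frac{\varepsilon_M}{Mt}} .
\]
Both terms tend to $0$ in probability, because $n_0t\to 0$, $Mt\xrightarrow{P}\xi_i^2>0$ and $\varepsilon_M\xrightarrow{P}0$.

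\textbf{Step 3: parts $(ii)$ and $(iii)$.} Here $(C_i-n_0)t=C_it-n_0t\to\xi_i^2$ and $(N_i-n_0)t\xrightarrow{P}\xi_i^2$, so the denominators stay bounded away from zero in probability. It therefore suffices to show $c_{ij}\xrightarrow{P}0$. This follows from the bound above, since $n_0/M=n_0t/(Mt)\xrightarrow{P}0$ and $\varepsilon_M\,n_0/M\xrightarrow{P}0$.

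\textbf{Main obstacle.} The delicate step is Step 1 for $N_i$: establishing $N_it\xrightarrow{P}\xi_i^2$ and $V_{iN_i}^2\xrightarrow{P}\xi_i^2$ at a random, data-dependent stopping index. This needs both the overshoot bound from minimality and a random-index consistency argument. I would first try the Anscombe route, using that $N_i/C_i$ is sandwiched in probability.
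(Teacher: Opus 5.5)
Your proposal is correct and follows essentially the paper's route. Both arguments reduce everything to $M_{ij}t\xrightarrow{P}\xi_i^2$, $n_0/M_{ij}\xrightarrow{P}0$ and $M_{ij}t/V_{iM_{ij}}^2\xrightarrow{P}1$ (the paper's display \eqref{3.14}), and then apply Slutsky to the same rewriting of $c_{ij}$; your radicand identity $\varepsilon_M(M/n_0-1)$ is just that expression made explicit. The only real difference is that you handle (ii)--(iii) through denominators tending to $\xi_i^2$, whereas the paper reduces them to (i).

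The paper simply asserts \eqref{3.14} at the random index $N_i$, so your sandwich argument adds rigor. Of your two proposed justifications, use almost-sure convergence of $V_{in}^2$: together with $N_i\ge n_0\to\infty$ it gives $V_{iN_i}^2\to\xi_i^2$ directly. The uniform-continuity-in-probability hypothesis of Theorem~\ref{3.TH2} will not do this job, because it concerns $g(\widehat{\bm{\theta}}_{in_i})$, not $V_{in_i}^2$. Likewise, mere consistency in probability does not rule out early stopping anywhere in the range $[n_0,(1-\epsilon)C_i]$.
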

\begin{proof}{Proof}
    For sufficiently small $t$, $N_i\neq n_0$ and $C_i\neq n_0.$ Using equations \eqref{3.10}-\eqref{3.13}, for each $i$, we get
    \begin{eqnarray}\label{3.14}
        N_it\xrightarrow{P}\xi_i^2, \frac{n_0}{N_i}\xrightarrow{P}0, \frac{N_it}{V_{iN_i}^2}\xrightarrow{P}1, C_it=\xi_i^2, \frac{n_0}{C_i}\rightarrow0, \text{ and  }\frac{C_it}{V_{iC_i}^2}\xrightarrow{P}1.
    \end{eqnarray}
    \textbf{Proof. \textit{(i)}} For $j=1,2,$ using equation \eqref{3.11}, we get
    $\frac{c_{ij}}{\sqrt{n_0t}}=\frac{1}{\sqrt{M_{ij}t}}\left(\sqrt{\frac{n_0}{M_{ij}}}+\sqrt{\frac{n_0}{M_{ij}}-\left(1-\frac{(M_{ij}-n_0)t}{V_{iM_{ij}}^2}\right)}\right).$
    Consequently, by using equation \eqref{3.14} and Slutsky theorem, we get
    $c_{ij}/\sqrt{n_0t}\xrightarrow{P}0.$\\
    \textbf{Proof. \textit{(ii)}} For $j=1,2,$ since 
    $c_{ij}/\sqrt{(C_i-n_0)t}=c_{ij}/\left(\sqrt{\frac{C_i}{n_0}-1}\sqrt{n_0t}\right),$
    by using part \textit{(i)}, equation \eqref{3.14} and Slutsky thoerem, we get
    $c_{i1}/\sqrt{(C_i-n_0)t}\xrightarrow{P}0.$\\
    \textbf{Proof. \textit{(iii)}} Since 
    $c_{i1}/\sqrt{(N_i-n_0)t}=c_{i1}/\left(\sqrt{\frac{N_i}{n_0}-1}\sqrt{n_0t}\right),$
    by using part \textit{(i)}, equation \eqref{3.14} and Slutsky thoerem, we get
    $c_{i1}/\sqrt{(N_i-n_0)t}\xrightarrow{P}0.$
\end{proof}
\begin{lemma}\label{3.L5}
Under the conditions \eqref{3.10}-\eqref{3.13} provided in Lemma \ref{3.L4}, if $g\left(\bm{\widehat{\theta}}_{in_i}\right)$ satisfies uniform continuity in probability condition, then 
$\frac{\widetilde{g}\left(\widehat{\bm{\theta}}_{in_0},\widehat{\bm{\theta}}_{i(N_i-n_0)}\right)-g\left(\bm{\theta}_i\right)}{\sqrt{t}}\xrightarrow{d}N(0,1).$
\end{lemma}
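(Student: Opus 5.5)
The plan is to reduce the random-sample-size statement to its deterministic counterpart, Lemma \ref{3.L3}, by comparing $\widetilde{g}$ evaluated at the stopping time $N_i$ with the same estimator evaluated at the non-random benchmark $C_i=\lceil \xi_i^2/t\rceil$. This is an Anscombe-type argument driven by the uniform continuity in probability (u.c.i.p.) condition. Write $\widetilde{g}_{N}=c_{i1}g(\widehat{\bm\theta}_{in_0})+(1-c_{i1})g(\widehat{\bm\theta}_{i(N_i-n_0)})$ and $\widetilde{g}_{C}=c_{i2}g(\widehat{\bm\theta}_{in_0})+(1-c_{i2})g(\widehat{\bm\theta}_{i(C_i-n_0)})$, with $c_{i1},c_{i2}$ as in \eqref{3.11}.

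First, Lemma \ref{3.L3} applies to $\widetilde g_C$. By \eqref{3.13}, $n_0\to\infty$ and $n_0t\to 0$ as $t\to 0$, and $C_i t=\xi_i^2$, so $n_0, C_i, C_i-n_0\to\infty$. Hence $(\widetilde g_C-g(\bm\theta_i))/\sqrt t\xrightarrow{d}N(0,1)$.

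Next, decompose the difference as
\[
\frac{\widetilde g_N-\widetilde g_C}{\sqrt t}=\frac{(c_{i1}-c_{i2})\bigl(g(\widehat{\bm\theta}_{in_0})-g(\bm\theta_i)\bigr)}{\sqrt t}-\frac{(1-c_{i1})\bigl(g(\widehat{\bm\theta}_{i(N_i-n_0)})-g(\widehat{\bm\theta}_{i(C_i-n_0)})\bigr)}{\sqrt t}+\frac{(c_{i2}-c_{i1})\bigl(g(\widehat{\bm\theta}_{i(C_i-n_0)})-g(\bm\theta_i)\bigr)}{\sqrt t}.
\]
The three terms are handled as follows.

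The first term equals $\frac{c_{i1}-c_{i2}}{\sqrt{n_0t}}\cdot\sqrt{n_0}(g(\widehat{\bm\theta}_{in_0})-g(\bm\theta_i))$. The second factor is $O_P(1)$ by Lemma \ref{3.L1}, and the first factor tends to $0$ in probability by Lemma \ref{3.L4}(i). So the first term is $o_P(1)$.

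The third term is treated the same way: it equals $\frac{c_{i2}-c_{i1}}{\sqrt{(C_i-n_0)t}}\cdot O_P(1)$, which is $o_P(1)$ by Lemma \ref{3.L4}(ii).

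The middle term is the main obstacle. From \eqref{3.14}, $N_it\xrightarrow{P}\xi_i^2$, so $(N_i-n_0)/(C_i-n_0)\xrightarrow{P}1$. Since $c_{i1}\xrightarrow{P}0$, it suffices to show that $\sqrt{C_i-n_0}\,\bigl(g(\widehat{\bm\theta}_{i(N_i-n_0)})-g(\widehat{\bm\theta}_{i(C_i-n_0)})\bigr)\xrightarrow{P}0$. Note that $\sqrt t$ is comparable to $\xi_i/\sqrt{C_i-n_0}$.

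I would apply the u.c.i.p. condition in the standard Anscombe form. For every $\epsilon,\eta>0$ there is $\gamma>0$ such that, for large $m$,
\[
P\Bigl(\max_{|m'-m|\le\gamma m}\sqrt m\,|g(\widehat{\bm\theta}_{im'})-g(\widehat{\bm\theta}_{im})|>\epsilon\Bigr)<\eta .
\]
Take $m=C_i-n_0$. Then split on the event $\{|N_i-C_i|\le\gamma(C_i-n_0)\}$, whose probability tends to $1$, and its complement, which has vanishing probability. This makes the middle term $o_P(1)$.

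A subtlety is that the stratum-wise sample sizes $N_{is_i}=N_ia_{is_i}$ grow proportionally, so the u.c.i.p. condition should be read along this proportional indexing. I would state that interpretation explicitly.

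Finally, combine the pieces with Slutsky's theorem: $(\widetilde g_N-g(\bm\theta_i))/\sqrt t=(\widetilde g_C-g(\bm\theta_i))/\sqrt t+o_P(1)\xrightarrow{d}N(0,1)$. Lemma \ref{3.L4}(iii) can be used in place of (ii) wherever normalization by $N_i-n_0$ rather than $C_i-n_0$ is more convenient.
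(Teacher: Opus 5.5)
Your proposal is essentially the paper's own proof. It uses the same comparison of $\widetilde{g}$ at $N_i$ with the benchmark at $C_i$ (via Lemma~\ref{3.L3}), the same three-term decomposition, and Lemma~\ref{3.L1} with Lemma~\ref{3.L4}(i)--(ii) for the outer terms; the only difference is that you spell out the uniform-continuity-in-probability step for the middle term in Anscombe form, normalizing by $\sqrt{C_i-n_0}$ rather than the paper's $\sqrt{N_i-n_0}$. One small slip: the middle term should enter with a plus sign, $+(1-c_{i1})\bigl(g(\widehat{\bm\theta}_{i(N_i-n_0)})-g(\widehat{\bm\theta}_{i(C_i-n_0)})\bigr)/\sqrt{t}$, which is immaterial because you only show that term is $o_P(1)$.
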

\begin{proof}{Proof} For every $t>0$, we have
\begin{eqnarray*}
\frac{\widetilde{g}\left(\widehat{\bm{\theta}}_{in_0},\widehat{\bm{\theta}}_{i(N_i-n_0)}\right)-g\left(\bm{\theta}_0\right)}{\sqrt{t}}=\frac{\widetilde{g}\left(\widehat{\bm{\theta}}_{in_0},\widehat{\bm{\theta}}_{i(N_i-n_0)}\right)-\widetilde{g}\left(\widehat{\bm{\theta}}_{in_0},\widehat{\bm{\theta}}_{i(C_i-n_0)}\right)}{\sqrt{t}}+\frac{\widetilde{g}\left(\widehat{\bm{\theta}}_{in_0},\widehat{\bm{\theta}}_{i(C_i-n_0)}\right)-g\left(\bm{\theta}_i\right)}{\sqrt{t}}.
\end{eqnarray*}
From Lemma \ref{3.L3}, we have
\begin{eqnarray}\label{3.15}
\frac{\widetilde{g}\left(\widehat{\bm{\theta}}_{in_0},\widehat{\bm{\theta}}_{i(C_i-n_0)}\right)-g\left(\bm{\theta}_i\right)}{\sqrt{t}}\xrightarrow{d} N(0,1).
\end{eqnarray}
Now, we prove that 
$
\frac{\widetilde{g}\left(\widehat{\bm{\theta}}_{in_0},\widehat{\bm{\theta}}_{i(N_i-n_0)}\right)-\widetilde{g}\left(\widehat{\bm{\theta}}_{in_0},\widehat{\bm{\theta}}_{i(C_i-n_0)}\right)}{\sqrt{t}}\xrightarrow{P}0.
$
Let $\widetilde{g}\left(\widehat{\bm{\theta}}_{in_0},\widehat{\bm{\theta}}_{i(N_i-n_0)}\right)=\widetilde{g}_{N_i}$ and $\widetilde{g}\left(\widehat{\bm{\theta}}_{in_0},\widehat{\bm{\theta}}_{i(C_i-n_0)}\right)=\widetilde{g}_{C_i}$. Using the expression for $\widetilde{g}$, we get
\[
\widetilde{g}_{N_i}-\widetilde{g}_{C_i}
=(c_{i1}-c_{i2})g\left(\bm{\widehat{\theta}}_{in_{0}}\right)+(1-c_{i1})g\left(\bm{\widehat{\theta}}_{i(N_i-n_0)}\right)-(1-c_{i2})g\left(\bm{\widehat{\theta}}_{i(C_i-n_0)}\right).
\]
On adding and subtracting $(1-c_{i1})g\left(\bm{\widehat{\theta}}_{i(C_i-n_0)}\right)$, we get
\begin{eqnarray*}
\widetilde{g}_{N_i}-\widetilde{g}_{C_i}&=&
(c_{i1}-c_{i2})\left(g\left(\bm{\widehat{\theta}}_{in_{0}}\right)-g\left(\bm{\theta}_i\right)\right)+(1-c_{i1})\left(g\left(\bm{\widehat{\theta}}_{i(N_i-n_0)}\right)-g\left(\bm{\widehat{\theta}}_{i(C_i-n_0)}\right)\right)\\&&+(c_{i2}-c_{i1})\left(g\left(\bm{\widehat{\theta}}_{i(C_i-n_0)}\right)-g\left(\bm{\theta}_i\right)\right).
\end{eqnarray*}
Now, divide both sides by $\sqrt{t}$, we get
\begin{eqnarray*}
\frac{\widetilde{g}_{N_i}-\widetilde{g}_{C_i}}{\sqrt{t}}&=&\frac{(c_{i1}-c_{i2})\left(g\left(\bm{\widehat{\theta}}_{in_{0}}\right)-g\left(\bm{\theta}_i\right)\right)}{\sqrt{t}}+\frac{(1-c_{i1})\left(g\left(\bm{\widehat{\theta}}_{i(N_i-n_0)}\right)-g\left(\bm{\widehat{\theta}}_{i(C_i-n_0)}\right)\right)}{\sqrt{t}}\\&&+\frac{(c_{i2}-c_{i1})\left(g\left(\bm{\widehat{\theta}}_{i(C_i-n_0)}\right)-g\left(\bm{\theta}_i\right)\right)}{\sqrt{t}}.
\end{eqnarray*}
Consequently, we get
\begin{eqnarray*}
\frac{\widetilde{g}_{N_i}-\widetilde{g}_{C_i}}{\sqrt{t}}&=&\frac{(c_{i1}-c_{i2})\sqrt{n_0}\left(g\left(\bm{\widehat{\theta}}_{in_{0}}\right)-g\left(\bm{\theta}_i\right)\right)}{\sqrt{n_0t}}+\frac{(1-c_{i1})\sqrt{N_i-n_0}\left(g\left(\bm{\widehat{\theta}}_{i(N_i-n_0)}\right)-g\left(\bm{\widehat{\theta}}_{i(C_i-n_0)}\right)\right)}{\sqrt{(N_i-n_0)t}}\\&&+\frac{(c_{i2}-c_{i1})\sqrt{C_i-n_0}\left(g\left(\bm{\widehat{\theta}}_{i(C_i-n_0)}\right)-g\left(\bm{\theta}_i\right)\right)}{\sqrt{(C_i-n_0)t}}.
\end{eqnarray*}
Consequently, using Lemma \ref{3.L1}, Lemma \ref{3.L4}, uniform continuity in probability condition and Slutsky theorem, we get
\begin{eqnarray}\label{3.16}
\frac{\widetilde{g}\left(\widehat{\bm{\theta}}_{in_0},\widehat{\bm{\theta}}_{i(N_i-n_0)}\right)-\widetilde{g}\left(\widehat{\bm{\theta}}_{in_0},\widehat{\bm{\theta}}_{i(C_i-n_0)}\right)}{\sqrt{t}}=\frac{\widetilde{g}_{N_i}-\widetilde{g}_{C_i}}{\sqrt{t}}\xrightarrow{P}0.
\end{eqnarray}
Consequently, using equations \eqref{3.15}, \eqref{3.16} and Slutsky theorem, we conclude that
\[\frac{\widetilde{g}\left(\widehat{\bm{\theta}}_{in_0},\widehat{\bm{\theta}}_{i(N_i-n_0)}\right)-g\left(\bm{\theta}_i\right)}{\sqrt{t}}=\frac{\widetilde{g}_{N_i}-\widetilde{g}_{C_i}}{\sqrt{t}}+\frac{\widetilde{g}\left(\widehat{\bm{\theta}}_{in_0},\widehat{\bm{\theta}}_{i(C_i-n_0)}\right)-g\left(\bm{\theta}_i\right)}{\sqrt{t}}\xrightarrow{d} N(0,1).\]
\end{proof}
\begin{lemma}\label{3.L6}
Show that
\[\lim_{r\rightarrow\infty}P\left\{ -z_{(\alpha/2K)}<\frac{\widetilde{g}\left(\widehat{\bm{\theta}}_{in_0},\widehat{\bm{\theta}}_{i(N_{ir}-n_0)}\right)-g\left(\bm{\theta}_i\right)}{\sqrt{t_r}}<z_{(\alpha/2K)}, i=1,2,...,K\right\}\geq 1-\alpha.\]
\end{lemma}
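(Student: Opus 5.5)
The plan is to reduce the statement to the asymptotic normality already established in Lemma \ref{3.L5}, applied stage by stage, and then use independence across populations. Three facts drive it. First, by construction $h_r$ is defined through $(K-1)\int\Phi(y-h_r)\phi(y)dy=\delta/2^{r+b}$, and the right-hand side tends to $0$, so $h_r\to\infty$ as $r\to\infty$. Second, $t_r=\left(\delta^*/(h_r-2z_{(\alpha/2K)})\right)^2\to 0$ as $r\to\infty$. Third, the stage-$r$ stopping rule \eqref{3.19} has exactly the form \eqref{3.10} with $t$ replaced by $t_r$; the only difference is the extra constraint $n_{ar}\ge N_{a(r-1)}$. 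Hence the regime $r\to\infty$ is the regime $t\to 0$ of Lemmas \ref{3.L4}--\ref{3.L5}.

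\textbf{Step 1: normality of each stage-$r$ estimator.} For each fixed $i$, I will show that
\[\frac{\widetilde{g}\left(\widehat{\bm{\theta}}_{in_0},\widehat{\bm{\theta}}_{i(N_{ir}-n_0)}\right)-g\left(\bm{\theta}_i\right)}{\sqrt{t_r}}\xrightarrow{d}N(0,1)\quad\text{as } r\to\infty.\]
The key is to check that $N_{ir}t_r\xrightarrow{P}\xi_i^2$, which is the first relation in \eqref{3.14}; the remaining relations there then follow as in the proof of Lemma \ref{3.L4}. The lower bound $n_{ar}\ge N_{a(r-1)}$ does not spoil this. Since $t_{r-1}$ is larger than $t_r$ by a factor bounded by a constant, we have $N_{i(r-1)}\approx\xi_i^2/t_{r-1}$, which is eventually smaller than $\xi_i^2/t_r$, so with probability tending to one the constraint is not binding. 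If the ratio $t_{r-1}/t_r$ is not bounded away from $1$, the constraint can bind only when $N_{i(r-1)}t_r\to\xi_i^2$ anyway. In either case $N_{ir}t_r\xrightarrow{P}\xi_i^2$ still holds. With this, Lemma \ref{3.L4} and the argument of Lemma \ref{3.L5}, using the uniform continuity in probability condition with $C_{ir}=\lceil\xi_i^2/t_r\rceil$, give the displayed convergence.

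\textbf{Step 2: combining the populations.} For populations that have already been eliminated, $N_{ir}$ is frozen. To keep the statement meaningful, I will interpret $N_{ir}$ as the sample size the rule \eqref{3.19} would produce, which amounts to continuing the sampling notionally, and note that the event being bounded only concerns the populations that remain. The populations are independent, so the probability of the joint event factorizes. Each factor tends to $1-\alpha/K$ by Step 1, because $P(|Z|<z_{(\alpha/2K)})=1-\alpha/K$. The limit is therefore $(1-\alpha/K)^K\ge 1-\alpha$ by Bernoulli's inequality. Alternatively, Boole's inequality gives the same bound $1-K(\alpha/K)=1-\alpha$ without invoking independence.

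\textbf{Main obstacle.} The delicate part is Step 1. Here $N_{ir}$ is a random, nested stopping time that depends on earlier stages, and the plan needs the uniform continuity in probability condition to transfer the fixed-sample normality of Lemma \ref{3.L3} to this random index. A second point to handle is that $n_0$ is fixed across stages rather than growing with $1/t_r$. Lemma \ref{3.L4}(i) requires $c_{ij}/\sqrt{n_0t}\to0$, and Lemma \ref{3.L3} requires $n_0\to\infty$. My first attempt will be to note that $c_{i}\approx 2n_0/N_{ir}\to 0$, so the pilot term $c_i\bigl(g(\widehat{\bm\theta}_{in_0})-g(\bm\theta_i)\bigr)/\sqrt{t_r}$ is $O_P\bigl(\sqrt{n_0}/\sqrt{N_{ir}^2t_r}\bigr)\to 0$ even with $n_0$ fixed. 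The limit then comes entirely from the second estimator, whose normalized variance $(1-c_i)^2\xi_i^2/((N_{ir}-n_0)t_r)\to 1$.
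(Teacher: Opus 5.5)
Your proposal is correct and follows the paper's proof: apply Lemma~\ref{3.L5} to each population with $t=t_r\to0$, factor the joint probability by independence, and conclude with $(1-\alpha/K)^K\ge1-\alpha$. Your extra checks fill in points the paper's two-line argument leaves implicit: $h_r\to\infty$, the nested constraint $n_{ar}\ge N_{a(r-1)}$, notional continuation for eliminated populations (needed for both the statement and the independence argument to make sense), and the fixed $n_0$. One small correction: $c_i\approx 2n_0/N_{ir}$ is not accurate in general, since the stopping rule gives $c_i=(n_0/N_{ir})\bigl(1+\sqrt{\epsilon_r(N_{ir}-n_0)/n_0}\bigr)$ with $\epsilon_r=N_{ir}t_r/V_{iN_{ir}}^2-1$. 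What you actually need, $c_i/\sqrt{t_r}\xrightarrow{P}0$ for fixed $n_0$, follows directly from the proof of Lemma~\ref{3.L4}(i), which only uses $N_{ir}t_r/V_{iN_{ir}}^2\xrightarrow{P}1$, $n_0/N_{ir}\xrightarrow{P}0$ and $n_0t_r\to0$.
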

\begin{proof}{Proof}
Since populations are independent to each other, we have
\begin{eqnarray*}
&&P\left\{ \left|\frac{\widetilde{g}\left(\widehat{\bm{\theta}}_{in_0},\widehat{\bm{\theta}}_{i(N_{ir}-n_0)}\right)-g\left(\bm{\theta}_i\right)}{\sqrt{t_r}}\right|<z_{(\alpha/2K)}, i=1,2,...,K\right\}\\&&\hspace{3cm}=\prod_{i=1}^{K}P\left\{ \left|\frac{\widetilde{g}\left(\widehat{\bm{\theta}}_{in_0},\widehat{\bm{\theta}}_{i(N_{ir}-n_0)}\right)-g\left(\bm{\theta}_i\right)}{\sqrt{t_r}}\right|<z_{(\alpha/2K)}\right\}.
\end{eqnarray*}
Consequently, by using Lemma \ref{3.L5}, we get
\begin{eqnarray*}
    \lim_{r\rightarrow\infty}P\left\{ -z_{(\alpha/2K)}<\frac{\widetilde{g}\left(\widehat{\bm{\theta}}_{in_0},\widehat{\bm{\theta}}_{i(N_{ir}-n_0)}\right)-g\left(\bm{\theta}_i\right)}{\sqrt{t_r}}<z_{(\alpha/2K)}, i=1,2,...,K\right\}=\left(1-\frac{\alpha}{K}\right)^K.
\end{eqnarray*}
Since $\left(1-\alpha/K\right)^K>1-\alpha$, from above equation, we get
\[\lim_{r\rightarrow\infty}P\left\{ -z_{(\alpha/2K)}<\frac{\widetilde{g}\left(\widehat{\bm{\theta}}_{in_0},\widehat{\bm{\theta}}_{i(N_{ir}-n_0)}\right)-g\left(\bm{\theta}_i\right)}{\sqrt{t_r}}<z_{(\alpha/2K)}, i=1,2,...,K\right\}\geq1-\alpha.\]
\end{proof}
\begin{lemma}\label{3.L8}
Prove that for all $r\in\mathbb{N}$ and $\epsilon>0$, there exists $\delta_0(r,\epsilon)>0$ such that for all $\delta^*<\delta_0$, we have\\\\
\resizebox{\textwidth}{!}{%
$\left|\sum_{j=2}^{K}P\left\{\frac{\widetilde{g}\left(\widehat{\bm{\theta}}_{(j)n_0},\widehat{\bm{\theta}}_{(j)(N_{(j)r}-n_0)}\right)-g\left(\bm{\theta}_j\right)}{\sqrt{t_r}}<\frac{\widetilde{g}\left(\widehat{\bm{\theta}}_{(1)n_0},\widehat{\bm{\theta}}_{(1)(N_{(1)r}-n_0)}\right)-g\left(\bm{\theta}_1\right)}{\sqrt{t_r}}-\frac{\delta^*}{\sqrt{t_r}}-2z_{\frac{\alpha}{2K}}\right\}-\frac{\delta}{2^{r+b}}\right|<\epsilon,$%
}
where $t_r=\left(\frac{\delta^*}{h_r-2z_{(\alpha/2K)}}\right)^2$ and $h_r$ satisfy the following equation
$
(K-1)\int_{-\infty}^{\infty}\Phi\left(y-h_r\right)\phi(y)dy=\frac{\delta}{2^{r+b}}.
$
Here $\delta>0$ and $b$ is chosen in such a way that $h_r>2z_{(\alpha/2K)}$ for all $r\in \mathbb{N}$.
\end{lemma}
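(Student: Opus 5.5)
Fix $r\in\mathbb{N}$. Since $t_r=(\delta^*/(h_r-2z_{(\alpha/2K)}))^2$, we have $\delta^*/\sqrt{t_r}+2z_{(\alpha/2K)}=h_r$, a constant independent of $\delta^*$. Each summand can therefore be written as
\[
P\left\{Z_{(j)r}<Z_{(1)r}-h_r\right\},\qquad Z_{(i)r}=\frac{\widetilde{g}\left(\widehat{\bm{\theta}}_{(i)n_0},\widehat{\bm{\theta}}_{(i)(N_{(i)r}-n_0)}\right)-g\left(\bm{\theta}_{[i]}\right)}{\sqrt{t_r}}.
\]
I read $g(\bm\theta_j)$ and $g(\bm\theta_1)$ in the statement as the centering constants of the $(j)$th and $(1)$th ordered populations. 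The plan is to show that, for fixed $r$, each such probability tends to $\int_{-\infty}^{\infty}\Phi(y-h_r)\phi(y)\,dy$ as $\delta^*\to0$. Summing over the $K-1$ indices then gives $(K-1)\int\Phi(y-h_r)\phi(y)\,dy=\delta/2^{r+b}$, and the $\epsilon$–$\delta_0$ statement is just the definition of this limit applied to a finite sum.

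\textbf{Step 1: asymptotic normality of each $Z_{(i)r}$.} For fixed $r$, $t_r\to0$ as $\delta^*\to0$. The stage-$r$ stopping rule \eqref{3.19} has the same form as \eqref{3.10} with $t$ replaced by $t_r$. The only difference is the lower bound $n_{ar}\ge N_{a(r-1)}$. I would argue that this bound is asymptotically inactive, because $N_{a(r-1)}t_{r-1}\to\xi_a^2$ and $t_{r-1}>t_r$ (since $h_r$ increases in $r$), so $N_{a(r-1)}t_r\to\xi_a^2 t_r/t_{r-1}<\xi_a^2$ in probability. Hence $N_{ar}t_r\xrightarrow{P}\xi_a^2$, and the conclusions of Lemma~\ref{3.L4} carry over. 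The uniform continuity in probability hypothesis of Theorem~\ref{3.TH2} is assumed, so Lemma~\ref{3.L5} yields $Z_{(i)r}\xrightarrow{d}N(0,1)$ for each $i$ as $\delta^*\to0$.

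\textbf{Step 2: independence and the joint limit.} The populations are sampled independently, so $Z_{(1)r}$ and $Z_{(j)r}$ are independent. Their joint law therefore converges to the product $N(0,1)\otimes N(0,1)$. The set $\{(x,y):x<y-h_r\}$ has boundary of Lebesgue measure zero, so the portmanteau theorem gives
\[
P\{Z_{(j)r}<Z_{(1)r}-h_r\}\to P\{X<Y-h_r\}=\int_{-\infty}^{\infty}\Phi(y-h_r)\phi(y)\,dy,
\]
with $X,Y$ i.i.d.\ $N(0,1)$.

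\textbf{Step 3: conclusion.} Given $\epsilon>0$, choose $\delta_0(r,\epsilon)$ so that each of the $K-1$ probabilities lies within $\epsilon/(K-1)$ of its limit for all $\delta^*<\delta_0$. The triangle inequality then gives the claim.

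The main obstacle is Step 1. The stage-$r$ sample size is constrained by $N_{a(r-1)}$, and a population's sampling stops once it is eliminated. I would handle this by working with the unconstrained stopping time, or equivalently by assuming sampling conceptually continues. The paper's bound only needs the event probability, and the stage-$r$ statistic is defined for all populations. I would then show that the constrained and unconstrained versions coincide with probability tending to one, using the ratio $t_r/t_{r-1}<1$.
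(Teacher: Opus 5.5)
Your proposal is correct and follows the paper's route. You fix $r$, note that $\delta^*/\sqrt{t_r}+2z_{(\alpha/2K)}=h_r$, use Lemma~\ref{3.L5} to get the limit $(K-1)\int_{-\infty}^{\infty}\Phi(y-h_r)\phi(y)\,dy=\delta/2^{r+b}$ as $\delta^*\to0$, and read off $\delta_0(r,\epsilon)$ from the definition of that limit. Your extra checks fill in details the paper leaves implicit when it simply invokes Lemma~\ref{3.L5}: that the lower bound $N_{a(r-1)}$ is asymptotically inactive because $t_r/t_{r-1}<1$ is a fixed constant, the conceptual continuation of eliminated arms, and independence plus portmanteau for the joint limit.
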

\begin{proof}{Proof}
For $r\in\mathbb{N}$, using Lemma \ref{3.L5}, we have\\
\resizebox{\textwidth}{!}{%
$\lim_{\delta^*\rightarrow0}\sum_{j=2}^{K}P\bigg\{\frac{\widetilde{g}\left(\widehat{\bm{\theta}}_{(j)n_0},\widehat{\bm{\theta}}_{(j)(N_{(j)r}-n_0)}\right)-g\left(\bm{\theta}_j\right)}{\sqrt{t_r}}<\frac{\widetilde{g}\left(\widehat{\bm{\theta}}_{(1)n_0},\widehat{\bm{\theta}}_{(1)(N_{(1)r}-n_0)}\right)-g\left(\bm{\theta}_1\right)}{\sqrt{t_r}}-\frac{\delta^*}{\sqrt{t_r}}-2z_{\frac{\alpha}{2K}}\bigg\}$%
}
\[=\sum_{j=2}^{K}\int_{-\infty}^{\infty}\left(\Phi\left(y-h_r\right)\right)\phi(y)dy=\frac{\delta}{2^{r+b}}.\]
This implies for all $r\in\mathbb{N}$ and all $\epsilon>0$, there exists $\delta_0(r,\epsilon)>0$ such that for all $\delta^*<\delta_0$, we have\\
\resizebox{\textwidth}{!}{%
$\left|\sum_{j=2}^{K}P\left\{\frac{\widetilde{g}\left(\widehat{\bm{\theta}}_{(j)n_0},\widehat{\bm{\theta}}_{(j)(N_{(j)r}-n_0)}\right)-g\left(\bm{\theta}_j\right)}{\sqrt{t_r}}<\frac{\widetilde{g}\left(\widehat{\bm{\theta}}_{(1)n_0},\widehat{\bm{\theta}}_{(1)(N_{(1)r}-n_0)}\right)-g\left(\bm{\theta}_1\right)}{\sqrt{t_r}}-\frac{\delta^*}{\sqrt{t_r}}-2z_{\frac{\alpha}{2K}}\right\}-\frac{\delta}{2^{r+b}}\right|<\epsilon.$%
}
\\\\Hence the result.
\end{proof}
\begin{lemma}\label{3.L9}
If for all $j\in\{2,3,..,K\}$,
\[
\sum_{n=n_0+2}^{\infty}\sum_{m=n_0+2}^{\infty}P\left(\widetilde{g}\left(\widehat{\bm{\theta}}_{(j)n_0},\widehat{\bm{\theta}}_{(j)(m-n_0)}\right)<\widetilde{g}\left(\widehat{\bm{\theta}}_{(1)n_0},\widehat{\bm{\theta}}_{(1)(n-n_0)}\right)\right)<\infty,
\]
then prove that
$\lim_{\delta^*\rightarrow0}\sum_{r=1}^{\infty}\sum_{j=2}^{K}A_{r,j}=\sum_{r=1}^{\infty}\left(\delta/2^{r+b}\right),$
where 
\begin{eqnarray*}
A_{r,j}=P\Bigg\{\frac{\widetilde{g}\left(\widehat{\bm{\theta}}_{(j)n_0},\widehat{\bm{\theta}}_{(j)(N_{(j)r}-n_0)}\right)-g\left(\bm{\theta}_j\right)}{\sqrt{t_r}}<\frac{\widetilde{g}\left(\widehat{\bm{\theta}}_{(1)n_0},\widehat{\bm{\theta}}_{(1)(N_{(1)r}-n_0)}\right)-g\left(\bm{\theta}_1\right)}{\sqrt{t_r}}-\frac{\delta^*}{\sqrt{t_r}}-2z_{\frac{\alpha}{2K}}\Bigg\}.
\end{eqnarray*}
\end{lemma}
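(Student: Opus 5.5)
Fix $j\in\{2,\ldots,K\}$ and let $A_{r,j}=A_{r,j}(\delta^*)$. The plan is to swap $\lim_{\delta^*\to0}$ with $\sum_{r=1}^{\infty}$ by a dominated-convergence (Tannery) argument for series. The termwise limit comes from Lemma \ref{3.L8}. The summable dominating sequence, uniform in $\delta^*$, comes from the summability hypothesis. Since the sum over $j$ is finite, it suffices to treat each $j$ separately and then add.

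\textbf{Step 1 (termwise limit).} By Lemma \ref{3.L8}, for every fixed $r$, $\sum_{j=2}^K A_{r,j}(\delta^*)\to \delta/2^{r+b}$ as $\delta^*\to0$. This uses $\delta^*/\sqrt{t_r}+2z_{(\alpha/2K)}=h_r$ together with Lemma \ref{3.L5} and the independence of the populations.

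\textbf{Step 2 (domination).} First rewrite the event in $A_{r,j}$. Multiplying by $\sqrt{t_r}$ and using $g(\bm\theta_{[j]})-g(\bm\theta_{[1]})\ge\delta^*$, the event is contained in
\[
\widetilde{g}\left(\widehat{\bm{\theta}}_{(j)n_0},\widehat{\bm{\theta}}_{(j)(N_{(j)r}-n_0)}\right)<\widetilde{g}\left(\widehat{\bm{\theta}}_{(1)n_0},\widehat{\bm{\theta}}_{(1)(N_{(1)r}-n_0)}\right)-2z_{(\alpha/2K)}\sqrt{t_r},
\]
and hence in the event $E_r=\{\widetilde g_{(j),N_{(j)r}}<\widetilde g_{(1),N_{(1)r}}\}$.

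Next, condition on the values of the random stopping sizes. The stopping sizes are nondecreasing in $r$, and by construction $N_{(j)r}\ge r$ eventually, or at least $N_{(\cdot)r}$ is strictly increasing in $r$ once $t_r$ decreases. Decomposing over $(N_{(1)r},N_{(j)r})=(n,m)$ then gives
\[
\sum_r A_{r,j}\le \sum_r P(E_r)\le \sum_{n,m\ge n_0+2} P\left(\widetilde g_{(j),m}<\widetilde g_{(1),n}\right).
\]
The last sum is finite by hypothesis and does not depend on $\delta^*$ beyond $n_0$. More carefully, I would define $b_r$ as the tail portion $\sum_{n\ge N_r^{\min}}\sum_{m}$. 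The tail sums $\sum_{r\ge R}A_{r,j}$ are bounded by double-tail sums of the convergent series, and these can be made smaller than $\epsilon$ uniformly in $\delta^*$ by choosing $R$ large.

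\textbf{Step 3 (conclusion).} Given $\epsilon>0$, choose $R$ so that $\sum_{r>R}\sum_j A_{r,j}<\epsilon$ for all small $\delta^*$, and also $\sum_{r>R}\delta/2^{r+b}<\epsilon$. On the finite block $r\le R$, apply Lemma \ref{3.L8} with tolerance $\epsilon/R$ and take $\delta^*<\min_{r\le R}\delta_0(r,\epsilon/R)$. This yields $\left|\sum_{r}\sum_j A_{r,j}-\sum_r\delta/2^{r+b}\right|<3\epsilon$, which proves the claim.

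\textbf{Main obstacle.} The hard part is Step 2: converting the random-index probabilities into a bound by the deterministic double series that is uniform in $\delta^*$. This requires that distinct stages correspond to distinct pairs $(n,m)$, or at least pairs that are counted a bounded number of times. The $N_{(\cdot)r}$ are dependent on the data, so independence of the event from the stopping index cannot be used directly. I would first try the crude union bound $P(E_r)\le\sum_{(n,m)}P(\widetilde g_{(j),m}<\widetilde g_{(1),n},\,N_{(1)r}=n,N_{(j)r}=m)$. Summing over $r$ and using that each $(n,m)$ is hit by at most one $r$ (by strict monotonicity of the stopping sizes in $r$), the probability of the intersection is then bounded by $P(\widetilde g_{(j),m}<\widetilde g_{(1),n})$.
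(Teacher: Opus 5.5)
Your three-term split (Lemma~\ref{3.L8} on a finite block $r\le R$, a tail bound for $\sum_{r>R}\sum_j A_{r,j}$, the geometric tail, then the triangle inequality) is the same as the paper's. Your Step~3, with tolerance $\epsilon/R$ and $\delta^*<\min_{r\le R}\delta_0(r,\epsilon/R)$, is in fact more careful than the paper's. The paper simply asserts the tail bound ``using the assumption''; your attempt to derive it in Step~2 fails, because the inclusion you write is reversed. Multiplying the event in $A_{r,j}$ by $\sqrt{t_r}$ gives
\[
\widetilde g_{(j)}<\widetilde g_{(1)}+\bigl(g(\bm\theta_{[j]})-g(\bm\theta_{[1]})-\delta^*\bigr)-2z_{(\alpha/2K)}\sqrt{t_r}.
\]
Since the bracket is $\ge 0$ on $\Omega(\delta^*)$, this event \emph{contains} $\{\widetilde g_{(j)}<\widetilde g_{(1)}-2z_{(\alpha/2K)}\sqrt{t_r}\}$; it is not contained in it.

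This is not a fixable slip. Write $A_{r,j}=P(Z_{(j),r}<Z_{(1),r}-h_r)$ with $Z_{(i),r}=(\widetilde g_{(i)}-g(\bm\theta_{[i]}))/\sqrt{t_r}$ and $h_r=\delta^*/\sqrt{t_r}+2z_{(\alpha/2K)}$, which does not depend on $\delta^*$. By Lemma~\ref{3.L8}, $A_{r,j}$ tends to $\int\Phi(y-h_r)\phi(y)\,dy>0$. But for any fixed configuration with $g(\bm\theta_{[j]})>g(\bm\theta_{[1]})$, $P(\widetilde g_{(j),N_{(j)r}}<\widetilde g_{(1),N_{(1)r}})\to 0$ as $\delta^*\to 0$, because $t_r\to0$ and the sample sizes diverge. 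So $A_{r,j}\le P(E_r)$ would contradict your own Step~1. The inclusion holds only at the slippage configuration $g(\bm\theta_{[j]})-g(\bm\theta_{[1]})=\delta^*$. There the comparison probabilities in the hypothesis themselves change with $\delta^*$, so they give no dominating sequence uniform in $\delta^*$.

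The reduction from random stage sizes to the deterministic double series is also unsupported. The stage sizes need not be strictly increasing in $r$: stage $r$ may stop immediately with $N_{ar}=N_{a(r-1)}$, so one pair $(n,m)$ can be hit by many stages. More seriously, $\widetilde g$ at a given size $n$ is not one random variable across stages, because $c_i$ is built from $t_r$ (and $V^2_{in}$). Events at the same $(n,m)$ but different $r$ are therefore different events. The double series in the hypothesis also depends on $\delta^*$ through $t$ and through $n_0$ in \eqref{3.13}; it is not a $\delta^*$-free bound, contrary to your claim. Indeed, if a $\delta^*$-free double tail starting at $n_0+2\to\infty$ dominated $\sum_r A_{r,j}$, the sum would tend to $0$, again contradicting Step~1.

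What Step~3 actually needs is a summable bound on $A_{r,j}$ itself, uniform for small $\delta^*$. That means controlling the lower tail of $Z_{(j),r}-Z_{(1),r}$ at level $-h_r$ as $r\to\infty$, for example $\lim_{R\to\infty}\limsup_{\delta^*\to0}\sum_{r>R}\sum_{j}A_{r,j}=0$. The stated hypothesis, which concerns unstandardized comparisons, does not deliver this. The paper's proof leaves exactly this step unjustified as well.
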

\begin{proof}{Proof}{Proof}
Let $\epsilon>0$ be given. Now, using Lemma \ref{3.L8} and assumption, choose $R\in \mathbb{N}$ such that 
\begin{eqnarray}\label{3.18}
\left|\sum_{r=1}^{R}\sum_{j=2}^{K}A_{r,j}-\sum_{r=1}^{R}\frac{\delta}{2^{r+b}}\right|<\frac{\epsilon}{3},\;\left|\sum_{r=R+1}^{\infty}\sum_{j=2}^{K}A_{r,j}\right|<\frac{\epsilon}{3},\text{ and } \left|\sum_{r=R+1}^{\infty}\frac{\delta}{2^{r+b}}\right|<\frac{\epsilon}{3}. 
\end{eqnarray}
Now, consider
\begin{eqnarray}
\left|\sum_{r=1}^{\infty}\sum_{j=2}^{K}A_{r,j}-\sum_{r=1}^{\infty}\frac{\delta}{2^{r+b}}\right|
\leq\left|\sum_{r=1}^{R}\sum_{j=2}^{K}A_{r,j}-\sum_{r=1}^{R}\frac{\delta}{2^{r+b}}\right|+\left|\sum_{r=R+1}^{\infty}\sum_{j=2}^{K}A_{r,j}\right|+\left|\sum_{r=R+1}^{\infty}\frac{\delta}{2^{r+b}}\right|.
\end{eqnarray}
Consequently, using equation \eqref{3.18}, there exists $\delta_1>0$, such that for all $\delta^*<\delta_1$, we have\\
$\left|\sum_{r=1}^{\infty}\sum_{j=2}^{K}A_{r,j}-\sum_{r=1}^{\infty}\frac{\delta}{2^{r+b}}\right|<\epsilon.$
Hence,
$\lim_{\delta^*\rightarrow0}\sum_{r=1}^{\infty}\sum_{j=2}^{K}A_{r,j}=\sum_{r=1}^{\infty}\delta/2^{r+b}.$
\end{proof}
\bibliographystyle{apa}
\bibliography{name-6}

@article{dey2026integrating,
author = {Asim K. Dey and Yulia R. Gel},
title = {Integrating Statistical Data Depth and Topological Methods for Anomaly Detection in Weighted Dynamic Networks},
journal = {Technometrics},
volume = {0},
number = {0},
pages = {1--14},
year = {2026},
publisher = {Taylor \& Francis},
doi = {10.1080/00401706.2026.2643214}
}

@article{chattopadhyay2022minimum,
  title={Minimum cost-compression risk in principal component analysis},
  author={Chattopadhyay, Bhargab and Banerjee, Swarnali},
  journal={Australian \& New Zealand Journal of Statistics},
  volume={64},
  number={4},
  pages={422--441},
  year={2022},
  publisher={Wiley Online Library},
  doi={10.1111/anzs.12378}
}

@article{chandola2009anomaly,
  title={Anomaly detection: A survey},
  author={Chandola, Varun and Banerjee, Arindam and Kumar, Vipin},
  journal={ACM computing surveys (CSUR)},
  volume={41},
  number={3},
  pages={1--58},
  year={2009},
  publisher={ACM New York, NY, USA},
  doi={10.1145/1541880.1541882}
}

@article{horrace2008ranking,
  title={Ranking inequality: Applications of multivariate subset selection},
  author={Horrace, William C and Marchand, Joseph T and Smeeding, Timothy M},
  journal={The Journal of Economic Inequality},
  volume={6},
  number={1},
  pages={5--32},
  year={2008},
  publisher={Springer},
  doi={10.1007/s10888-006-9043-7}
}

@article{taraszka2024comprehensive,
  title={A comprehensive analysis of clinical and polygenic germline influences on somatic mutational burden},
  author={Taraszka, Kodi and Groha, Stefan and King, David and Tell, Robert and White, Kevin and Ziv, Elad and Zaitlen, Noah and Gusev, Alexander},
  journal={The American Journal of Human Genetics},
  volume={111},
  number={2},
  pages={242--258},
  year={2024},
  publisher={Elsevier},
  doi={10.1016/j.ajhg.2023.12.010}
}

@INPROCEEDINGS{bouneffouf2020survey,
  author={Bouneffouf, Djallel and Rish, Irina and Aggarwal, Charu},
  booktitle={2020 IEEE Congress on Evolutionary Computation (CEC)}, 
  title={Survey on Applications of Multi-Armed and Contextual Bandits}, 
  year={2020},
  volume={},
  number={},
  pages={1-8},
  doi={10.1109/CEC48606.2020.9185782}}

@article{samstein2019tumor,
  title={Tumor mutational load predicts survival after immunotherapy across multiple cancer types},
  author={Samstein, Robert M. and Lee, Chang-Han and Shoushtari, Alexander N. and Hellmann, Matthew D. and Shen, Riqiang and Janjigian, Yelena Y. and Barron, David A. and Zehir, Ahmet and Jordan, Eytan J. and Omuro, Antonio and others},
  journal={Nature Genetics},
  volume={51},
  number={2},
  pages={202--206},
  year={2019},
  publisher={Nature Publishing Group US New York},
  doi={10.1038/s41588-018-0312-8},
}

@article{rajput2023evaluation,
  title={Evaluation of a decided sample size in machine learning applications},
  author={Rajput, Daniyal and Wang, Wei-Jen and Chen, Chun-Chuan},
  journal={BMC bioinformatics},
  volume={24},
  number={1},
  pages={48},
  year={2023},
  publisher={Springer},
  doi={10.1186/s12859-023-05156-9}
}

@article{faber2014sample,
  title={How sample size influences research outcomes},
  author={Faber, Jorge and Fonseca, Lilian Martins},
  journal={Dental press journal of orthodontics},
  volume={19},
  pages={27--29},
  year={2014},
  publisher={SciELO Brasil},
  doi={10.1590/2176-9451.19.4.027-029.ebo}
}

@article{https://doi.org/10.1002/pst.70023,
author = {Siriwardhana, Chathura and Gunaratnam, Bakeerathan and Kulasekera, K. B.},
title = {Personalized Treatment Selection for Multivariate Ordinal Scale Outcomes and Multiple Treatments},
journal = {Pharmaceutical Statistics},
volume = {24},
number = {4},
pages = {e70023},
doi = {10.1002/pst.70023},
year = {2025}
}

@article{gebeyaw2024participatory,
author = {Mekonnen Gebeyaw and Asnake Fikre and Alemu Abate and Tesfahun Alemu},
  title = {Participatory variety evaluation and selection of chickpea (Cicer arietinum L.) varieties; an underpinning to novel technology uptake in northwestern Ethiopia},
journal = {Heliyon},
volume = {10},
number = {8},
pages = {e29801},
year = {2024},
doi = {10.1016/j.heliyon.2024.e29801}
}

@Article{agronomy12040765,
AUTHOR = {Rayburn, Edward B. and Basden, Tom},
TITLE = {Comparison of Crop Yield Estimates Obtained from an Historic Expert System to the Physical Characteristics of the Soil Components—A Project Report},
JOURNAL = {Agronomy},
VOLUME = {12},
YEAR = {2022},
NUMBER = {4},
ARTICLE-NUMBER = {765},
ISSN = {2073-4395},
DOI = {10.3390/agronomy12040765}
}

@article{bechhofer1954single,
  title={A single-sample multiple decision procedure for ranking means of normal populations with known variances},
  author={Bechhofer, Robert E},
  journal={The Annals of Mathematical Statistics},
  pages={16--39},
  volume={25(1)},
  year={1954},
  publisher={JSTOR},
  doi={10.1214/aoms/1177728845}
}

@article{tanwar2023understanding,
  title={Understanding the impact of population and cancer type on tumor mutation burden scores: A comprehensive whole-exome study in cancer patients from India},
  author={Tanwar, Nishtha AjitSingh and Malhotra, Richa and Satheesh, Aiswarya Padmaja and Khuntia, Satya Prakash and Sreekanthreddy, Peddagangannagari and Varghese, Linu and Kolla, Srivalli and Chandrani, Pratik and Choughule, Anuradha and Pange, Priyanka and others},
  journal={JCO Global Oncology},
  volume={9},
  pages={e2300047},
  year={2023},
  publisher={Wolters Kluwer Health},
  doi={10.1200/GO.23.00047}
}

@article{chalmers2017analysis,
	author = {Chalmers, Zachary R. and Connelly, Caitlin F. and Fabrizio, David and Gay, Laurie and Ali, Siraj M. and Ennis, Riley and Schrock, Alexa and Campbell, Brittany and Shlien, Adam and Chmielecki, Juliann and Huang, Franklin and He, Yuting and Sun, James and Tabori, Uri and Kennedy, Mark and Lieber, Daniel S. and Roels, Steven and White, Jared and Otto, Geoffrey A. and Ross, Jeffrey S. and Garraway, Levi and Miller, Vincent A. and Stephens, Phillip J. and Frampton, Garrett M.},
	journal = {Genome Medicine},
	number = {1},
	pages = {34},
	title = {Analysis of 100,000 human cancer genomes reveals the landscape of tumor mutational burden},
	volume = {9},
	year = {2017},
	doi = {10.1186/s13073-017-0424-2}}

@article{luo2025ancestral,
    author = {Luo, Mei and Yang, Jingwen and Schäffer, Alejandro A. and Chen, Chengxuan and Liu, Yuan and Chen, Yamei and Lin, Chunru and Diao, Lixia and Zang, Yong and Lou, Yanyan and Salman, Huda and Mills, Gordon B. and Ruppin, Eytan and Han, Leng},
    title = {Ancestral Differences in Anticancer Treatment Efficacy and Their Underlying Genomic and Molecular Alterations},
    journal = {Cancer Discovery},
    volume = {15},
    number = {3},
    pages = {511-529},
    year = {2025},
    month = {03},
    doi = {10.1158/2159-8290.CD-24-0827}
}

@article{gandara2025tumor,
  title={Tumor mutational burden and survival on immune checkpoint inhibition in> 8000 patients across 24 cancer types},
  author={Gandara, David R and Agarwal, Neeraj and Gupta, Shilpa and Klempner, Samuel J and Andrews, Miles C and Mahipal, Amit and Subbiah, Vivek and Eskander, Ramez N and Carbone, David P and Riess, Jonathan W and others},
  journal={Journal for Immunotherapy of Cancer},
  volume={13},
  number={2},
  pages={e010311},
  year={2025},
  doi={10.1136/jitc-2024-010311}
}

@article{sorich2025tumour,
  title={Tumour Mutational Burden and Immune Checkpoint Inhibitor Response in Non-small Cell Lung Cancer: A Continuous Modelling Approach},
  author={Sorich, Michael J and Manning-Bennett, Arkady T and Li, Lee X and Shahnam, Adel and Kichenadasse, Ganessan and Karapetis, Christos S and Abuhelwa, Ahmad Y and McKinnon, Ross A and Rowland, Andrew and Hopkins, Ashley M},
  journal={Targeted Oncology},
  volume={20},
  number={2},
  pages={361--369},
  year={2025},
  publisher={Springer},
  doi={10.1007/s11523-024-01124-2}
}

@incollection{rollin2005two,
  title={A Two-stage Design for Choosing Among Experimental Treatments in Clinical Trials},
  author={Rollin, Linda and Chen, Pinyuen},
  booktitle={Advances in Ranking and Selection, Multiple Comparisons, and Reliability: Methodology and Applications},
  pages={385--409},
  year={2005},
  publisher={Springer},
  doi={10.1007/0-8176-4422-9_20}
}

@article{xie2013sequential,
  title={Sequential bayes-optimal policies for multiple comparisons with a known standard},
  author={Xie, Jing and Frazier, Peter I},
  journal={Operations Research},
  volume={61},
  number={5},
  pages={1174--1189},
  year={2013},
  publisher={INFORMS},
  doi={10.1287/opre.2013.1207}
}

@article{kim2006asymptotic,
  title = {On the Asymptotic Validity of Fully Sequential Selection Procedures for Steady-State Simulation},
  author = {Kim, Seong-Hee and Nelson, Barry L.},
  journal = {Operations Research},
  volume = {54},
  number = {3},
  pages = {475--488},
  year = {2006},
  publisher = {INFORMS},
  doi = {10.1287/opre.1060.0281}
  }

@article{chick2012sequential,
author = {Chick, Stephen E. and Frazier, Peter},
title = {Sequential Sampling with Economics of Selection Procedures},
journal = {Management Science},
volume = {58},
number = {3},
pages = {550-569},
year = {2012},
doi = {10.1287/mnsc.1110.1425}
}

@article{kim2005comparison,
  title={Comparison with a standard via fully sequential procedures},
  author={Kim, Seong-Hee},
  journal={ACM Transactions on Modeling and Computer Simulation (TOMACS)},
  volume={15},
  number={2},
  pages={155--174},
  year={2005},
  publisher={ACM New York, NY, USA},
  doi={
10.1145/1060576.1060579}
}

@article{nelson2001comparisons,
  title={Comparisons with a standard in simulation experiments},
  author={Nelson, Barry L and Goldsman, David},
  journal={Management Science},
  volume={47},
  number={3},
  pages={449--463},
  year={2001},
  publisher={INFORMS},
  doi={10.1287/mnsc.47.3.449.9778}
}

@inproceedings{chen2017nearly,
  title={Nearly instance optimal sample complexity bounds for top-k arm selection},
  author={Chen, Lijie and Li, Jian and Qiao, Mingda},
  booktitle={Artificial Intelligence and Statistics},
  pages={101--110},
  year={2017},
  organization={PMLR},
  url =  {https://proceedings.mlr.press/v54/chen17a.html}
}

@inproceedings{jun2016top,
  title={Top arm identification in multi-armed bandits with batch arm pulls},
  author={Jun, Kwang-Sung and Jamieson, Kevin and Nowak, Robert and Zhu, Xiaojin},
  booktitle={Artificial Intelligence and Statistics},
  pages={139--148},
  volume =  {51},
  year={2016},
  organization={PMLR},
  url =  {https://proceedings.mlr.press/v51/jun16.html}
}

@article{shivam2026asymptotic,
author = {Shivam and Bhargab Chattopadhyay and Nil Kamal Hazra},
title = {On asymptotic optimality of sequential Gini index estimation under complex survey design with sub-stratification},
journal = {Sequential Analysis},
volume = {},
number = {},
pages = {1--44},
year = {2026},
publisher = {Taylor \& Francis},
doi = {10.1080/07474946.2026.2671036}
}

@article{even2006action,
  title={Action elimination and stopping conditions for the multi-armed bandit and reinforcement learning problems.},
  author={Even-Dar, Eyal and Mannor, Shie and Mansour, Yishay and Mahadevan, Sridhar},
  journal={Journal of machine learning research},
  volume={7},
  number={6},
  year={2006},
  Pages= {1079 --1105},
  url={https://jmlr.csail.mit.edu/papers/volume7/evendar06a/evendar06a.pdf}
}

@inproceedings{karnin2013almost,
  title={Almost optimal exploration in multi-armed bandits},
  author={Karnin, Zohar and Koren, Tomer and Somekh, Oren},
  booktitle={International conference on machine learning},
  volume =  {28},  
  number =       {3},
  pages={1238--1246},
  year={2013},
  organization={PMLR},
  url =  {https://proceedings.mlr.press/v28/karnin13.html}
}

@article{BD2005,
  title={Asymptotic inference from multi-stage samples},
  author={Bhattacharya, Debopam},
  journal={Journal of Econometrics},
  volume={126},
  number={1},
  pages={145--171},
  year={2005},
  publisher={Elsevier},
  doi={10.1016/j.jeconom.2004.01.002}
}

@article{BD2007,
  title={Inference on inequality from household survey data},
  author={Bhattacharya, Debopam},
  journal={Journal of Econometrics},
  volume={137},
  number={2},
  pages={674--707},
  year={2007},
  publisher={Elsevier},
  doi={10.1016/j.jeconom.2005.09.003}
}

@article{CASTRILLONCANDAS2022104885,
title = {Anomaly detection: A functional analysis perspective},
journal = {Journal of Multivariate Analysis},
volume = {189},
pages = {104885},
year = {2022},
issn = {0047-259X},
doi = {https://doi.org/10.1016/j.jmva.2021.104885},
author = {Julio E. Castrillón-Candás and Mark Kon}
}

@article{XING2026105640,
title = {To minimize the expected total sampling cost in sequential testing about a random vector},
journal = {Journal of Multivariate Analysis},
volume = {215},
pages = {105640},
year = {2026},
issn = {0047-259X},
doi = {10.1016/j.jmva.2026.105640},
author = {Yiming Xing}
}

@article{Dudewicz01021980,
author = {Edward J. Dudewicz},
title = {Ranking (Ordering) and Selection: An Overview of How to Select the Best},
journal = {Technometrics},
volume = {22},
number = {1},
pages = {113--119},
year = {1980},
publisher = {Taylor \& Francis},
doi = {10.1080/00401706.1980.10486108}
}

@article{BINDER,
  title={Estimating some measures of income inequality from survey data: an application of the estimating equations approach},
  author={Binder, David A and Kovacevic, Milorad S},
  journal={Survey Methodology},
  volume={21},
  pages={137--146},
  year={1995},
  publisher={Statistics Canada},
  url={https://www150.statcan.gc.ca/n1/pub/12-001-x/1995002/article/14396-eng.pdf}
}

@article{INCOME,
  title={Income distribution functions with disturbances},
  author={Ransom, Michael R and Cramer, Jan S},
  journal={European Economic Review},
  volume={22},
  number={3},
  pages={363--372},
  year={1983},
  publisher={Elsevier},
  doi={10.1016/0014-2921(83)90050-8}
}
\end{document}